%% file: acmsmall.tex
\documentclass[acmsmall]{acmart}
\input{usepackage}

\newboolean{shortver}
\setboolean{shortver}{false}% for long version

\AtBeginDocument{%
  }

\setcopyright{cc}
\setcctype{by-nc-nd}
\acmDOI{10.1145/3839461}
\acmYear{2026}
\acmJournal{PACMPL}
\acmVolume{10}
\acmNumber{OOPSLA2}
\acmArticle{329}
\acmMonth{10}
\acmSubmissionID{oopslab26main-p402-p}
\received{2026-03-17}
\received[accepted]{2026-06-10}

\begin{document}

%%
%% The "title" command has an optional parameter,
%% allowing the author to define a "short title" to be used in page headers.
\title{EUF$^n$: A Decidable Extension to the Theory of Equality with Uninterpreted Functions}
% \title{EUFⁿ: A Decidable Extension to the Theory of Equality with Uninterpreted Functions}

%%
%% The "author" command and its associated commands are used to define
%% the authors and their affiliations.
%% Of note is the shared affiliation of the first two authors, and the
%% "authornote" and "authornotemark" commands
%% used to denote shared contribution to the research.

\author{Yide Du}
\orcid{0009-0000-0199-5642}
\authornotemark[2]
\affiliation{%
  \department{College of Computer Science and Technology}
  \institution{National University of Defense Technology}
  \city{Changsha}
  \state{Hunan}
  \country{China}
}
\email{dyd1024@nudt.edu.cn}

\author{Zhenbang Chen}
\orcid{0000-0002-4066-7892}
\authornote{Zhenbang Chen is the corresponding author.}
\authornotemark[2]
\affiliation{%
  \department{College of Computer Science and Technology}
  \institution{National University of Defense Technology}
  \city{Changsha}
  \state{Hunan}
  \country{China}
}
\email{zbchen@nudt.edu.cn}

\author{Weijiang Hong}
\orcid{0000-0002-7092-3658}
\authornotemark[2]
\affiliation{%
  \department{College of Computer Science and Technology}
  \institution{National University of Defense Technology}
  \city{Changsha}
  \state{Hunan}
  \country{China}
}
\email{hongweijiang17@nudt.edu.cn}

\author{Wei Dong}
\orcid{0000-0002-8033-7943}
\affiliation{%
 \department{College of Computer Science and Technology}
 \institution{National University of Defense Technology}
 \city{Changsha}
 \state{Hunan}
 \country{China}
}
\authornote{Also with the affiliation: State Key Laboratory of Complex \& Critical Software Environment, National University of Defense Technology, Changsha, China.}
\email{wdong@nudt.edu.cn}

%%
%% By default, the full list of authors will be used in the page
%% headers. Often, this list is too long, and will overlap
%% other information printed in the page headers. This command allows
%% the author to define a more concise list
%% of authors' names for this purpose.
\renewcommand{\shortauthors}{Yide Du, Zhenbang Chen, Weijiang Hong, and Wei Dong}

%%
%% The abstract is a short summary of the work to be presented in the
%% article.
\begin{abstract}

  The theory of Equality with Uninterpreted Functions (EUF) is fundamental to constraint solving and program verification. Uninterpreted functions abstract concrete implementations, enabling generalization and simplification of theorems and proofs. However, standard EUF restricts function composition to fixed finite depths (\emph{e.g.}, $f^k(x)$ where $k$ is constant). This work extends EUF to EUF$^n$, supporting \emph{parametric composition depth} for unary functions (\emph{e.g.}, $f^n(x)$ where $n$ is a natural number variable).

  % An EUF$^n$ formula can be viewed as a disjunction of infinitely many EUF formulas instantiated by natural number assignments. It is satisfiable if at least one of these EUF formulas
  % is satisfiable. We prove the decidability of EUF$^n$ satisfiability via a \emph{conditional congruence graph (CCG)} algorithm. This approach generalizes congruence
  % closure by maintaining conditional equivalence relationships between terms, reducing satisfiability to deciding existential sentences in Presburger arithmetic
  % with divisibility, a problem decidable in NEXPTIME~\cite{lechner2015complexity}.

  An EUF$^n$ formula can be viewed as a disjunction of infinitely many EUF formulas, each instantiated by an assignment of natural numbers. Its satisfiability is defined by the satisfiability of at least one such instantiated EUF formula. We establish the decidability of the EUF$^n$ satisfiability problem via a \emph{conditional congruence graph (CCG)} algorithm. This approach generalizes the standard congruence closure procedure by maintaining conditional equivalence relations between terms. The algorithm reduces the satisfiability problem to deciding existential sentences in Presburger arithmetic with divisibility, which is a decidable problem, thereby yielding a decision procedure for the quantifier-free fragment of EUF$^n$ with a 2NEXPTIME complexity upper bound.

  The enhanced expressiveness of EUF$^n$ enables new applications: (1) Encoding a decidable subclass of interleaved Dyck reachability problems where existing over/under-approximations produce false positives/negatives, and (2) Encoding a new decidable subclass of uninterpreted program verification problems.

\end{abstract}

%%
%% The code below is generated by the tool at http://dl.acm.org/ccs.cfm.
%% Please copy and paste the code instead of the example below.
%%
\begin{CCSXML}
  <ccs2012>
  <concept>
  <concept_id>10003752.10003790.10002990</concept_id>
  <concept_desc>Theory of computation~Logic and verification</concept_desc>
  <concept_significance>500</concept_significance>
  </concept>
  </ccs2012>
\end{CCSXML}

\ccsdesc[500]{Theory of computation~Logic and verification}

%%
%% Keywords. The author(s) should pick words that accurately describe
%% the work being presented. Separate the keywords with commas.
\keywords{EUF, Decidable Extension, Interleaved Dyck Reachability, Program Verification}

% \received{20 February 2007}
% \received[revised]{12 March 2009}
% \received[accepted]{5 June 2009}

%%
%% This command processes the author and affiliation and title
%% information and builds the first part of the formatted document.
\maketitle

\input{section/sec1.tex}

\input{section/sec2.tex}
\input{section/sec3.tex}

\input{section/sec4.tex}

\input{section/sec5.tex}
\input{section/sec6.tex}

\ifthenelse{\boolean{shortver}}{}{\appendix \input{section/appendix}
}

\bibliographystyle{ACM-Reference-Format}
\bibliography{ref}

\end{document}

%% file: usepackage.tex
\usepackage{amssymb}
\usepackage{stmaryrd}
\usepackage{amsmath}
\usepackage{mathrsfs}

\newtheoremstyle{mydefstyle}
  {6pt}
  {6pt}
  {\normalfont}
  {}
  {\bfseries}
  {.}
  { }
  {\thmname{#1}\thmnumber{ #2}\thmnote{ (#3)}}
\theoremstyle{mydefstyle}
\newtheorem{myDef}{Definition}
\newtheorem{myExample}{Example}
\newtheorem{myThe}{Theorem}
\newtheorem{myLem}{Lemma}

\usepackage{tcolorbox}

\usepackage{microtype}       % 更智能的微调
\usepackage{hyphenat}        % 改进连字符规则
\usepackage{ragged2e}        % 优化对齐控制
\usepackage{adjustbox}      % 灵活调整框架和图像

\usepackage[ruled, vlined, linesnumbered]{algorithm2e}
\SetAlgoSkip{0pt}
\SetAlgoNoEnd

\usepackage{subcaption}
\usepackage{graphicx}
\usepackage{tikz}
\usetikzlibrary{positioning, backgrounds, shadows, decorations.pathmorphing}
\usepackage{lstautogobble}  % Fix relative indenting
\usepackage{color}          % Code coloring
\usepackage{zi4}
\usepackage{enumitem}
\usepackage{wrapfig}

\definecolor{bluekeywords}{rgb}{0.13, 0.13, 1}
\definecolor{greencomments}{rgb}{0, 0.5, 0}
\definecolor{redstrings}{rgb}{0.9, 0, 0}
\definecolor{silvergrey}{RGB}{180,175,170}

\usepackage{listings}
\lstdefinelanguage{upl}%
{morekeywords={%
		assume,assert,%
		while,if
	},%
	numbers=none, %
	mathescape=true,%
	sensitive,%
	commentstyle=\color{blue},
	morecomment=[l]//,%
	morecomment=[s]{/*}{*/},%
	morestring=[b]",%
	morestring=[b]',%
	showstringspaces=false%,
}[keywords,comments,strings]%
\newcommand{\Kleene}{$^*$\xspace}  % Kleene star operator
\newcommand\stdef[1]{\langle {#1} \rangle}

%% file: section/sec1.tex
\section{Introduction}
The theory of Equality with Uninterpreted Functions (EUF) establishes a logical framework combining equality reasoning with
uninterpreted function symbols, where variables range over infinite domains and functions retain only their \emph{functional congruence}
property. Formally, this congruence axiom requires that identical inputs produce identical outputs:
$\forall x,y.\ x = y \rightarrow f(x) = f(y)$. A canonical demonstration of this principle appears in the EUF formula:
$x = y \land f(x) \neq f(y)$ which is inherently unsatisfiable due to the contradiction between the assumed equality $x=y$ and
the required congruence consequence $f(x)=f(y)$.

As a fundamental component of Satisfiability Modulo Theories (SMT), EUF plays a pivotal role in formal verification methodologies~\cite{bueno2021software}.
Its applications encompass a wide range of areas, including circuit equivalence verification~\cite{1997Automatic}, the proof of program equivalence
before and after compiler optimizations~\cite{muller2005checking}, verification of uninterpreted programs~\cite{govind2021logical}, and model
checking~\cite{lee2014unbounded, fan2024leveraging}. The essential power of EUF lies in its abstraction mechanism: uninterpreted functions
obscure implementation details to simplify verification. When verification fails under this abstraction, practitioners employ a \emph{refinement process} that
incrementally incorporates specific function properties until either identifying genuine counterexamples or completing the proof.
This methodology strikes an optimal balance between abstraction precision and verification tractability.

\begin{figure}[ht]
    \centering
    \input{figure/EUF_CG_solve.tex}
    \caption{Congruence Closure on an EUF Example.}
    \label{fig:EUF_solve}
    \vspace{-10pt}
\end{figure}

The satisfiability problem for EUF formulas is known to be decidable. The foundational work by \cite{congruence_closure3}, \cite{congruence_closure2}, and \cite{congruence_closure1} established the congruence closure algorithm,
which combines directed acyclic graph (DAG) representations with the \textit{Union-Find} data structure to maintain distinct equivalence
classes. Here, $\textit{Union}(x, y)$ merges the equivalence classes containing $x$ and $y$ while designating a new representative,
and $\textit{Find}(x)$ retrieves the representative of the equivalence class containing $x$. To illustrate, consider
the prototypical EUF formula: $$ x = y \land f(x) \neq f(y).$$
As illustrated in Figure~\ref{fig:EUF_solve_subfig1}, the initial equivalence classes are formed as distinct sets: $\{x\}$, $\{y\}$, $\{f(x)\}$,
and $\{f(y)\}$, which correspond to the individual nodes in the graph. The directed edges represent the superterm relationship between nodes;
for example, $f(x)$ is the term obtained by applying the function $f$ to $x$, and it is referred to as a superterm of $x$.
The algorithm then processes the equalities in the formula. Upon processing $x=y$, \textit{Union}($x$, $y$) is called to merge the
equivalence classes of $x$ and $y$ into $\{x, y\}$, as shown in Figure~\ref{fig:EUF_solve_subfig2}. This merger establishes $f(x)$ as
a superterm of $y$ and $f(y)$ as a superterm of $x$. Consequently, congruence propagation is triggered: since $x$ and $y$ are now
equivalent, functional congruence requires that $f(x)$ and $f(y)$ must also be equivalent, leading to the \textit{Union}($f(x)$, $f(y)$)
operation (Figure~\ref{fig:EUF_solve_subfig3}). Finally, when processing the disequality $f(x) \neq f(y)$, the condition
$\textit{Find}(f(x)) = \textit{Find}(f(y))$ indicates that $f(x)$ and $f(y)$ belong to the same equivalence class,
directly contradicting the disequality. This conflict, detected via congruence closure, confirms the formula's unsatisfiability.

% The algorithm then processes the equalities in the formula. When processing $x=y$, the algorithm first calls
% \textit{Union}($x$, $y$) to merge $\{x\}$ and $\{y\}$ into $\{x, y\}$, as shown in Figure~\ref{fig:EUF_solve_subfig2}. As a result of this merge,
% $f(x)$ becomes a superterm of $y$, and $f(y)$ becomes a superterm of $x$. The merged equivalence classes then trigger congruence
% propagation: since $x$ is equivalent to $y$, functional congruence requires that $f(x)$ and $f(y)$ must also be equivalent, which
% triggers the \textit{Union}($f(x)$, $f(y)$) operation, as shown in Figure~\ref{fig:EUF_solve_subfig3}. Finally, when processing the
% disequality $f(x) \neq f(y)$, the condition $\textit{Find}(f(x)) = \textit{Find}(f(y))$ indicates that $f(x)$ and $f(y)$ are
% equivalent, directly contradicting the asserted disequality. This conflict confirms the formula’s unsatisfiability through the
% combined mechanisms of congruence closure and conflict detection in the algorithm.

A fundamental limitation in standard EUF theory is its support for only \emph{finite-depth function composition}, as seen in terms like $f^c(x)$,
where $c$ is a constant (\emph{e.g.}, $f(f(x))$ and $f^6(x)$). To enhance the expressiveness of standard EUF, we propose EUF$^n$—a decidable extension
that allows \emph{parametric composition depth} for unary functions through terms of the form $f^n(x)$, where $n$ is a natural number variable.
The essential distinction lies in the finite notation $f^n(x)$, which has no direct analogue for multi-arity functions. Consequently,
our extension must prohibit parametrically nested compositions in the latter case. For example, a term such as $g(a, g(b, g(c, d)))$
has no concise parametric representation like $g^3(\cdot, \cdot)$. On the other hand, terms that mix parametric and
fixed-depth applications---such as $g(f^n(x), x)$ or $g(f^2(x), x)$---remain well-formed and admissible within EUF$^n$.
% The key is the finite representation $f^n(x)$, which has no equivalent for multi-arity functions; thus, the extension must
% prohibit parametric nesting in the latter case.
% % Crucially, this extension prohibits \emph{parametric composition depth} in multi-arity functions (\emph{e.g.}, $g(x, g(x, g(x, \dots)))$ with $g$ nested $n$ times),
% % as such terms cannot be finitely represented.
% For instance, a term like $g(a, g(b, g(c, d)))$ cannot be written as $g^3(c, d)$ or expressed in any
% similar compact form. In contrast, terms such as $g(f^n(x), x)$ or $g(f^2(x), x)$ are valid within EUF$^n$.
% Despite this restriction, EUF$^n$
% remains expressive enough for a wide range of verification tasks involving unary function abstractions.
Despite this restriction, EUF$^n$ remains sufficiently expressive for many verification tasks involving unary function abstractions.
For instance, it can model data-structure traversals---\emph{e.g.}, $read^n(arr)$ for array access, $next^n(head)$ for list
walking---where the parametric depth $n$ captures the access index or path length.
It is also applicable to control-property verification: by using a unary counting function $add^n(count)$, one can encode
constraints such as equality or inequality between the execution counts of two functions, which is useful for detecting anomalies
like use-after-free.
On the other hand, the term $f^n(x)$ can be expressed in first-order logic as $f'(n, x)$, where $n$ is a natural number variable and $x$ is
an uninterpreted variable. However, since the current combination of EUF and Presburger arithmetic does not support the derivation
of equalities such as $f'(1, f'(n, x)) = f'(n+1, x)$, it is necessary to extend the syntax and axioms of EUF and to develop specialized
algorithms for solving $\mathrm{EUF}^n$ formulas.

This paper introduces the syntax and axioms of EUF$^n$, along with a \emph{conditional congruence graph} (\emph{CCG}) algorithm for solving
quantifier-free EUF$^n$ formulas, establishing the decidability of their satisfiability problem. Analogous to the standard congruence
closure algorithm for EUF (Figure~\ref{fig:EUF_solve}), this approach represents terms as nodes and functional relationships as edges.
The key distinction lies in the \emph{CCG}, which maintains both conditional equivalence and conditional superterm relationships,
explicitly capturing the conditions under which terms become equivalent or functionally related. For example,
a conditional equivalence edge between terms $f^m(x)$ and $f^n(x)$ is annotated with the condition $m \text{=} n$. As a result, when
processing equalities in EUF$^n$ formulas, the algorithm updates only the associated conditional constraints, rather than merging the
corresponding nodes as in Figure~\ref{fig:EUF_solve}. Finally, the algorithm reduces satisfiability of EUF$^n$ formulas to the satisfiability problem
for the existential theory of Presburger arithmetic with divisibility, which is known to be decidable~\cite{lipshitz1978diophantine, bel1976decidability}
in NEXPTIME~\cite{lechner2015complexity} thereby establishing decidability.

EUF$^n$ extends the expressiveness of standard EUF, enabling broader applications. On one hand, the congruence property of functions closely parallels
bracket matching: terms obtained by applying the same function the same number of times are treated as equivalent, analogous to matched brackets
in balanced expressions. This correspondence allows EUF$^n$ to encode Dyck reachability problems over graphs, where two nodes are Dyck-reachable
if a path between them forms a well-balanced bracket string in a Dyck language~(\emph{i.e.}, matching brackets). Dyck reachability~\cite{zhang2013fast}
and its extension, interleaved Dyck reachability~\cite{zhang2017context,li2023single,conrado2024better}, play a central role in static analysis with
wide-ranging applications. Although interleaved Dyck reachability is not an equivalence relation and is generally undecidable~\cite{reps2000undecidability},
we show that in the single-source single-target setting, a decidable subclass can be encoded using EUF$^n$. In contrast, existing over- and under-approximation
techniques may still yield false positives or negatives even in this restricted case.

Additionally, uninterpreted functions abstract concrete implementations to simplify verification. Programs that use such abstractions,
which are referred to as uninterpreted programs, have undecidable verification in general~\cite{mathur2019decidable}. Unlike prior
approaches that reason over program states (\emph{e.g.}, equalities, disequalities and function relations)~\cite{mathur2019decidable,govind2021logical,hong2021trace},
EUF$^n$ supports encoding of certain \emph{execution traces}. For instance, a loop like \texttt{while(cond)\{x=next(x);\}}
yields access chains $next^n(x)$, where $n$ depends on iteration count. Standard EUF loses this parametric relation by requiring
concrete $n$~\cite{mathur2019deciding}, while EUF$^n$ retains it symbolically, which identifies a decidable subclass of uninterpreted programs
beyond the \emph{coherent} class~\cite{mathur2019decidable,govind2021logical}.

To summarize, this paper presents the following contributions:
\begin{enumerate}
    \item We propose EUF$^n$, an extension of EUF that enables the representation of terms with \emph{parametric composition depth} for unary functions.
    \item We establish the decidability of the quantifier-free fragment of EUF$^n$ via the \emph{CCG} algorithm with 2NEXPTIME complexity.
    \item We demonstrate the practical expressiveness of EUF$^n$ by encoding a subclass of single-source single-target interleaved Dyck reachability problems,
    as well as a subclass of verification problems for uninterpreted programs. In both cases, we identify new decidable subclasses enabled by this encoding.
\end{enumerate}

The structure of this paper is organized as follows: Section \ref{sec:section2} introduces the basic concepts of EUF$^n$,
Section \ref{sec:section3} provides the decision algorithm for EUF$^n$, Section~\ref{sec:section4} examines the practical
encoding capabilities of EUF$^n$, Section \ref{sec:section5} reviews related work,
and Section \ref{sec:section6} concludes the paper.

%% file: figure/EUF_CG_solve.tex
\tikzstyle{squarednode} = [rectangle, rounded corners, draw = gray, fill=gray!5, very thick, text centered, align=center]%, minimum size=5mm]
\tikzstyle{originnode} = [circle, rounded corners, draw = black, fill=white!5,  text centered, align=center, minimum size=10mm, font=\bfseries\Large]
\tikzstyle{arrow} = [->,>=stealth, draw=black, line width=0.3mm]
\tikzstyle{line} = [-,>=stealth, draw=black]

\begin{subfigure} {0.35\linewidth}
    \centering
    {\scriptsize
        \begin{tikzpicture}
            \node[originnode] (x){$x$};
            \node[originnode, right of=x, node distance=21mm] (y){$y$};
            \node[originnode, font=\footnotesize, above of=x, node distance=21mm] (fx){$f(x)$};
            \node[originnode, font=\footnotesize, above of=y, node distance=21mm] (fy){$f(y)$};

            \draw[arrow] (fx.south) --  node[left, align=center]{$f$} (x.north);
            \draw[arrow] (fy.south) --  node[right, align=center]{$f$} (y.north);
        \end{tikzpicture}
    }
    \caption{ }
    \label{fig:EUF_solve_subfig1}
\end{subfigure}
\begin{subfigure} {0.35\linewidth}
    \centering
    {\scriptsize
        \begin{tikzpicture}
            \node[originnode, font=\tiny] (x){$\{x,y\}$};
            \node[above of=x, node distance=21mm] (xx){};
            \node[originnode, font=\footnotesize, left of=xx, node distance=9.5mm] (fx){$f(x)$};
            \node[originnode, font=\footnotesize, right of=xx, node distance=9.5mm] (fy){$f(y)$};

            \draw[arrow] (fx.south) --  node[left, align=center]{$f$} (x.north);
            \draw[arrow] (fy.south) --  node[right, align=center]{$f$} (x.north);
        \end{tikzpicture}
    }
    \caption{ }
    \label{fig:EUF_solve_subfig2}
\end{subfigure}
\begin{subfigure}{0.25\linewidth}
    \centering
    {\scriptsize
        \begin{tikzpicture}
            \node[originnode, font=\tiny] (x){$\{x,y\}$};
            \node[originnode, font=\tiny, above of=x, node distance=21mm] (fx){
                $\left\{
                    \begin{aligned}
                        f(x), \\
                        f(y)
                    \end{aligned}
                \right\}$%
            };
            \draw[arrow] (fx.south) --  node[left, align=center]{$f$} (x.north);
        \end{tikzpicture}
    }
    \caption{}
    \label{fig:EUF_solve_subfig3}
\end{subfigure}

%% file: section/sec2.tex
\section{Logic of EUF$^n$}  \label{sec:section2}
This section presents the formal syntax and axioms of EUF$^n$, followed by a detailed example illustrating its decision procedure.

\subsection{Syntax of EUF$^n$} \label{section2:formal_syntax}
% The syntax of EUF$^n$ is defined over a finite signature $\Sigma = \mathsf{F} \cup \mathsf{V} \cup \mathsf{C}$, where $\mathsf{F}$ denotes
% function symbols (with $f = (f_{1} \circ \cdots \circ f_{n})^{lin} \in \mathsf{F}$ representing unary functions and $g \in \mathsf{F}$ denoting $n$-ary functions where
% $n \geq 2$), $\mathsf{V} = \mathsf{V}_{\mathsf{U}} \cup \mathsf{N}$ partitions variables into uninterpreted-type variables
% ($\mathit{var} \in \mathsf{V}_{\mathsf{U}}$) and natural number-type variables ($\mathit{int\_var} \in \mathsf{N}$), and $\mathsf{C}$ contains
% constant symbols ($c$).

The syntax of EUF$^n$ is defined over a finite signature $\mathsf{\Sigma}=\mathsf{V} \cup \mathsf{F} \cup \mathsf{C}$.
Variables are partitioned into $\mathsf{V} = \mathsf{U} \cup \mathsf{N}$, where $\mathsf{U}$ consists of uninterpreted-type
variables ($\mathit{var} \in \mathsf{U}$) and $\mathsf{N}$ includes natural number-type variables ($\mathit{int\_var} \in \mathsf{N}$).
Here, $F$ denotes the set of function symbols, which includes unary function symbols (\emph{e.g.}, the successor function $S$) and $k$-ary
function symbols for $k \geq 2$ (\emph{e.g.}, the binary symbol $+$). A composite unary term can be written as $f_1 \circ \cdots \circ f_n$,
where each $f_i \in F$ ($1 \leq i \leq n$) is a unary function symbol. Finally, $C$ denotes the set of integer constants, with
$0, c, c_1, c_n \in C$.
% % Here, $\mathsf{F} = \mathsf{F}_{\mathsf{U}} \cup \mathsf{F}_M$ denotes the set of function symbols. The set $\mathsf{F}_{\mathsf{U}}$
% % includes unary function symbols of the form $f = f_1 \circ \cdots \circ f_n$, where each $f_i: \mathsf{U} \to \mathsf{U}$ for $1 \leq i \leq n$,
% % as well as $k$-ary functions $g: \mathsf{U}^k \to \mathsf{U}$ for $k \geq 2$. The set $\mathsf{F}_M$ consists of binary functions of the form
% % $f': \mathbb{N} \times \mathsf{U} \to \mathsf{U}$.
% Here, $\mathsf{F}$ denotes a set of function symbols, which includes unary function symbols (\emph{e.g.}, $f$, $S$) and $k$-ary function
% symbols for $k \geq 2$ (e.g., the binary symbol $+$). A unary function symbol can be applied repeatedly to form a composite term
% of the form $f_1 \circ \cdots \circ f_n$, where each $f_i: \mathsf{U} \to \mathsf{U}$ for $1 \leq i \leq n$.
% % Here, $\mathsf{F}$ denotes a set of function symbols. It includes unary function symbols (\emph{e.g.}, $f$, $S$)  and $k$-ary function
% % symbols for $k \geq 2$ (\emph{e.g.}, the binary symbol $+$).
% % % A unary function symbol can be applied repeatedly to form a composite
% % % term $f_1(f_2(\cdots f_n(x)\cdots))$, where each $f_i \in \mathsf{F}$ is unary.
% And $\mathsf{C}$ denotes the set of integer constants, with $0, c, c_1, c_n \in \mathsf{C}$.

\begin{figure}[ht]
    \vspace{-10pt}
    \begin{center}
    \begin{align*}
        formula &::= \mathbf{true} \mid \mathbf{false} \mid term = term \\
        &\quad \ \ \mid formula \land formula \mid formula \lor formula \mid \lnot formula \\
        term &::= var \mid f^{lin} (term) \mid g (term, \cdots, term) \\
        lin &::= c \mid int\_var \mid c_1 \cdot lin+\cdots+c_n \cdot lin
        % \oplus
    \end{align*}
    \end{center}
    \vspace{-0.5cm}
    \caption{The syntax of $\text{EUF}^{n}$ formulas.}\label{fig:syntax}
    \vspace{-10pt}
\end{figure}

As formalized in Figure~\ref{fig:syntax}, the syntax employs a three-tiered structure:
\begin{itemize}
    \item   $formula$ represents an EUF$^n$ formula, constructed from Boolean literals ($\mathbf{true}$, $\mathbf{false}$), equality predicates over $term$s (atomic formulas),
            or logical combinations of subformulas using standard connectives.
    \item   $term$ denotes functional expressions within formulas, which may be uninterpreted-type variables ($var$) or function applications,
            where $f$ represents a unary function and $g$ denotes a multi-arity function.
    \item   $lin$ specifies the composition depth for unary function $f$, defined as either constants ($c$), natural number-type variables ($int\_var$),
            or a linear combination of existing $lin$ expressions. The value of any $lin$ expression is guaranteed to be non-negative during formula solving.
\end{itemize}

\subsection{The Axioms of EUF$^n$} \label{section2:EUFn_axiom}
To enable \emph{parametric composition depth} for unary functions, we extend the EUF theory into a many-sorted EUF$^n$ with two distinct
sorts: $\mathsf{U}$ (uninterpreted) and $\mathsf{N}$ (natural numbers). The extended theory preserves the four core axioms of EUF,
with $\vec{x}, \vec{y} \in \mathsf{U}^k$ denoting $k$-tuples:

\begin{enumerate}[label=(\arabic*),noitemsep]
    \item $ \forall x:\mathsf{U}. \ x = x   $   \hfill  (reflexivity)   \label{axiom1}
    \item $ \forall x, y:\mathsf{U}. \ x = y \rightarrow y = x   $  \hfill  (symmetry)   \label{axiom2}
    \item $ \forall x, y, z:\mathsf{U}. \ x = y \wedge y = z \rightarrow x = z   $  \hfill  (transitivity)      \label{axiom3}
    \item $ \forall \vec{x},\vec{y}:\mathsf{U}^k. \ {\bigwedge}_{i=1}^k x_i = y_i \rightarrow  f(\vec{x}) = f(\vec{y})$ \hfill (congruence)  \label{axiom4}
\end{enumerate}
\noindent This extension further requires the axioms of Presburger arithmetic (see Appendix~\ref{sec:appendix_presburger}) along with the following two novel axioms governing function composition depth:
\begin{enumerate}[label=(\arabic*),noitemsep,resume*]
    \item $ \forall x:\mathsf{U}. \ f^{0}(x) = x   $    \label{axiom_new1}
    \item $ \forall x:\mathsf{U},n:\mathsf{N}. \ f^{S(n)}(x) = f(f^{n}(x)) $   \label{axiom_new2}
\end{enumerate}

\noindent
where $x$ is a variable of uninterpreted type $(\mathsf{U})$ and $n$ is a natural number variable of type $(\mathsf{N})$,
while $S$ denotes the successor function.

The model of EUF$^n$ is defined as $\mathcal{M} = (\mathcal{U} \cup \mathbb{N}, \mathcal{I})$, where $\mathcal{U}$ denotes
the domain of uninterpreted elements, $\mathbb{N}$ denotes the domain of natural numbers. For every unary function
$f: \mathsf{U} \to \mathsf{U}$, there exists a corresponding binary
function $f': \mathsf{N} \times \mathsf{U} \to \mathsf{U}$ such that for any variable $n : \mathsf{N}$, $x : \mathsf{U}$,
and assignment functions $\sigma: \mathsf{N} \to \mathbb{N}$ and $\sigma': \mathsf{U} \to \mathcal{U}$,
we have
\[
\mathcal{I}(f')(\sigma(n), \sigma'(x)) = \underbrace{\mathcal{I}(f) \circ \cdots \circ \mathcal{I}(f)}_{\sigma(n)\ \text{times}} (\sigma'(x)).
\]
A natural way to satisfy this requirement is for $\mathcal{I}(f')$ to apply $\mathcal{I}(f)$ repeatedly $n$ times. In the axioms,
we use $f^n(x)$ as a shorthand for $f'(n,x)$. The interpretations of addition ($+$), and constants
in Presburger arithmetic follow the standard semantics of natural number addition theory.

\begin{myThe}[Generalized Congruence] \label{theorem:Gen_Congruence}
    $$
    \forall x,y:\mathsf{U},\ m,n:\mathsf{N}. \  x=y \land m=n \rightarrow  f^{n}(x)=f^{m}(y)
    $$
\end{myThe}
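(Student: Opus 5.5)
The plan is to prove the statement by induction on the natural number parameter, using only the axioms of Section~\ref{section2:EUFn_axiom}.

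\textbf{Reduction.} Fix arbitrary $x,y:\mathsf{U}$ and $m,n:\mathsf{N}$, and assume $x=y$ and $m=n$. Since $m=n$ is equality in the sort $\mathsf{N}$, substituting $m$ by $n$ reduces the goal to $f^{n}(x)=f^{n}(y)$. Equivalently, $f^m(y)$ is the term $f'(m,y)$ and $f'$ is a function of its numeric argument. It therefore suffices to prove
\[
\forall n:\mathsf{N}.\ \forall x,y:\mathsf{U}.\ x=y \rightarrow f^{n}(x)=f^{n}(y),
\]
which we do by the induction schema of Presburger arithmetic (Appendix~\ref{sec:appendix_presburger}). The induction predicate is the formula $P(n) \equiv \forall x,y.\ x=y\rightarrow f^n(x)=f^n(y)$.

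\textbf{Base case.} By axiom~\ref{axiom_new1}, $f^0(x)=x$ and $f^0(y)=y$. Together with the hypothesis $x=y$, symmetry~\ref{axiom2} and transitivity~\ref{axiom3} give $f^0(x)=f^0(y)$.

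\textbf{Inductive step.} Assume $P(n)$ and let $x=y$. The induction hypothesis gives $f^n(x)=f^n(y)$. The congruence axiom~\ref{axiom4}, applied to the unary symbol $f$ with $k=1$, then yields $f(f^n(x))=f(f^n(y))$. By axiom~\ref{axiom_new2}, $f^{S(n)}(x)=f(f^n(x))$ and $f^{S(n)}(y)=f(f^n(y))$. Chaining these with symmetry and transitivity gives $f^{S(n)}(x)=f^{S(n)}(y)$, so $P(S(n))$ holds. Note that the induction hypothesis is used at the fixed pair $x,y$, so quantifying over $x,y$ inside $P$ is convenient but not essential.

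\textbf{Main obstacle.} The only delicate point is the legitimacy of induction here. $P(n)$ is a formula over the two-sorted signature, mentioning $f'$ and $\mathsf{U}$-variables. Plain Presburger induction is stated for arithmetic formulas only, so the schema must be extended to all formulas of the combined language. If the appendix does not grant this, we fall back on the intended model semantics instead. There, $\mathcal{I}(f')(k,a)$ is $\mathcal{I}(f)$ applied $k$ times to $a$, and we do ordinary meta-level induction on $k=\sigma(n)\in\mathbb{N}$. Along this route the equality step, that equal arguments give equal outputs under $\mathcal{I}(f)$, is immediate because $\mathcal{I}(f)$ is a function.

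Finally, note that the case $m=n$ needs no separate congruence axiom for the $\mathsf{N}$ argument. Equality substitution in the first argument of $f'$ is standard first-order equality reasoning, and we will state this explicitly.
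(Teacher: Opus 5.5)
Your proposal is correct and matches the paper's proof: induction on the composition depth, with the base case from Axiom~\ref{axiom_new1} and the inductive step from Axioms~\ref{axiom_new2} and~\ref{axiom4}. You also make explicit several points the paper's sketch leaves implicit, namely the substitution that uses $m=n$, the symmetry and transitivity steps, and the need to extend the Presburger induction schema to the two-sorted language (or to fall back on meta-level induction in the intended model).
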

\begin{proof}
    The case where $m = n = 0$ is secured by Axiom \ref{axiom_new1}. The general case can be established by induction,
    where the inductive step would rely on Axioms~\ref{axiom_new2} and~\ref{axiom4}, thus allowing for the completion of the proof.
\end{proof}

Having defined the syntax and axioms of EUF$^n$ formulas, we now define their satisfiability in terms of the satisfiability
of EUF formulas~\cite{shostak1978algorithm}. For a given formula $\Psi$ and an assignment $\sigma: \mathsf{N} \to \mathbb{N}$,
we define $\Psi[\sigma]$ as the formula obtained by replacing every free occurrence of a variable $x$ in $\Psi$ with the
corresponding value $\sigma(x)$.

\begin{myDef}[EUF\textsuperscript{$n$} Satisfiability] \label{thm:eufn-sat-equivalence}
    A formula $\Psi$ in $\mathrm{EUF}^n$ over signature $\Sigma$ is satisfiable iff there exists an assignment $\sigma: \mathsf{N}\to\mathbb{N}$
    such that $\Psi[\sigma]$ is satisfiable in EUF.
\end{myDef}
% That is, for any model $\mathcal{M}$ satisfying Equation~\ref{equality:interpretation}, we have $(\mathcal{M}, \sigma) \models \Psi$.

\subsection{An Example of the Decision Procedure for EUF$^n$} \label{section2:example}
EUF$^n$ extends EUF by supporting terms with \emph{parametric composition depth} for unary functions, thereby transforming equivalence
relations between terms into \emph{conditional equivalence}: two terms become equivalent only when their composition depth satisfys
specific constraints. To demonstrate the decision procedure for EUF$^n$, we employ a concrete example to illustrate its
core mechanisms. Consider the following EUF$^n$ formula as a representative case study:
\begin{align}\label{equality:EUFn_ex}
 x = y \land z = f^m(x) \land k = f^n(y) \land z \neq k,
\end{align}
where $m,n \in \mathsf{N}$ denote the composition depth of the
uninterpreted unary function $f$. This formula can be regarded as the disjunction of EUF formulas generated by instantiating
$m$ and $n$ over the natural numbers $\mathbb{N}$, as shown below:
% \begin{equation}  \label{equality:EUFn_EUF}
%     \begin{aligned}
%     & x = y \land z = f^m(x) \land k = f^n(y) \land z \neq k \\
%     & \Leftrightarrow \\
%     & \left\{ x = y \land z = f^0(x) \land k = f^0(y) \land z \neq k \right\} \lor \\
%     & \left\{ x = y \land z = f^1(x) \land k = f^0(y) \land z \neq k \right\} \lor \\
%     & \left\{ x = y \land z = f^0(x) \land k = f^1(y) \land z \neq k \right\} \lor \\
%     & \ldots
%     \end{aligned}
% \end{equation}
\begin{equation}  \label{equality:EUFn_EUF}
    \begin{aligned}
    & x = y \land z = f^m(x) \land k = f^n(y) \land z \neq k \Longleftrightarrow
        (x = y \land z = f^0(x) \land k = f^0(y) \land z \neq k) \lor \\
    &   (x = y \land z = f^1(x) \land k = f^0(y) \land z \neq k) \lor
        (x = y \land z = f^0(x) \land k = f^1(y) \land z \neq k) \lor \cdots
    \end{aligned}
\end{equation}
The EUF$^n$ formula is satisfiable if and only if there exists an assignment $\sigma$ such that the corresponding EUF formula
$x = y \land z = f^{\sigma(m)}(x) \land k = f^{\sigma(n)}(y) \land z \neq k$ is satisfiable.

Figure~\ref{fig:solve_EUFn} presents the solving process of the formula based on the \emph{conditional congruence graph} (\emph{CCG}) algorithm.
In these graphs, nodes represent $term$s from the formula, and directed edges indicate superterm relationships. For example,
the $term$ $f^{m}(x)$ (\emph{i.e.}, $z$) is a superterm of $x$, with edge label $(m \text{>} 0, f, m)$ encoding the function $f$,
its composition depth $m$, and the condition $m > 0$ that validates the superterm relation. Undirected edges connect equivalent
$term$s, annotated with logical conditions governing their equivalence. Figure~\ref{fig:solve_EUFn_sub1} shows the initialized
\emph{CCG} for the example, which includes the conditional superterm relationships between the terms in the formula. Notably,
during resolution, the graph preserves all existing nodes and introduces only new edges.

The \emph{CCG} provides a symbolic representation of the family of concrete congruence graphs induced by the disjunctive EUF formulas
above. For any concrete assignment to parameters $m$ and $n$, the corresponding concrete congruence graph is obtained from the \emph{CCG}
via a two-step instantiation: first, evaluating the parametric conditions on its edges with the given values; second, keeping only
those edges whose conditions are satisfied. After merging equivalent nodes, the resulting graph coincides exactly with the congruence
graph of the instantiated EUF formula.

\input{figure/solve_EUFn.tex}

The decision procedure starts by processing the equation $x = y$, which introduces an undirected edge between nodes
$x$ and $y$, annotated with the constraint $\mathbf{true}$. To maintain correctness after inserting a conditional equivalence edge,
three operations are performed on the \emph{CCG}; collectively, they symbolically abstract the corresponding operations needed for
the entire family of concrete congruence graphs it represents.
% Following each conditional equivalence edge insertion, three operations are performed
% to maintain the correctness of the \emph{CCG}, which symbolically abstracts a family of concrete congruence graphs:
\begin{enumerate}
    \item \textbf{Equivalence propagation.} Let $[x]$ denote the set containing $x$ and all nodes connected to $x$ via conditionally
        equivalent edges (similarly for $[y]$). This operation corresponds to the \textit{Union} procedure in standard EUF congruence
        closure algorithms. The step establishes pairwise equivalence between all nodes in $[x]$ and $[y]$, updating their conditions
        by conjoining with the equivalence condition of $x$ and $y$. This operation is illustrated in Figure~\ref{fig:solve_EUFn_sub2}
        with the addition of the edge $x \stackrel{\mathbf{true}}{\rule[0.5ex]{2em}{0.5pt}} y$.
    \item \textbf{Superterm inheritance.} After adding the conditional equivalence between $x$ and $y$, their superterms/subterms
        consequently become shared under the given condition. Therefore, in this step, $x$ and $y$ inherit each other's superterm/subterm
        relations and update the associated conditions. In Figure~\ref{fig:solve_EUFn_sub3}, edges $k \xrightarrow{(n>0 \land \mathbf{true}, f, n)} x$ and
        $z \xrightarrow{(m>0 \land \mathbf{true}, f, m)} y$ are added. Moreover, new conditional superterm relationships can arise
        between terms such as $f^m(x)$ (\emph{i.e.} $z$) and $f^n(y)$ (\emph{i.e.} $k$). This follows from the extended formalism's support
        for parametric composition depth, which forces consideration of all possible functional relationships between terms.
        As shown in Figure~\ref{fig:solve_EUFn_sub3}, the edges $z  \xrightarrow{(m>n \land m>0 \land n>0, f, m-n)} k$ and
        $k \xrightarrow{(n>m \land m>0 \land n>0, f, n-m)} z$ are introduced.
    \item \textbf{Conditional congruence closure.} When solving EUF formulas, after adding equivalence relations, it is necessary to check
       for additional derivable equivalences based on congruence axioms. Similarly, when using the \emph{CCG} to handle EUF$^n$ formulas,
       conditional equivalence relations must also be processed to ensure (conditional) congruence closure. In Figure~\ref{fig:solve_EUFn_sub4},
       given the superterm relations $z = f^m(x)$ and $k = f^n(y)$, the edge
       $z \stackrel{m = n \land m > 0 \land n > 0}{\rule[0.5ex]{6em}{0.5pt}} k$ is added to the graph. In this case, there is no superterm
       relationship between $x$ and $y$; however, when such a relationship exists, the congruence closure process differs. This will be
       discussed in detail in Section~\ref{section3:congruence}.
\end{enumerate}

After adding the conditional equivalence between $z$ and $k$, the same three steps repeat. Since $z$ and $k$ have no additional
equivalent nodes (this simplified example omits edges such as the one from $x$ to $z$ with the constraint $m=0$),
\textbf{equivalence propagation} is unnecessary. The only conditional superterm of $k$ is $z$, but the superterm condition
$m=n \land m>n \land m<n$ is unsatisfiable, terminating \textbf{superterm inheritance} for $z$ (and similarly for $k$). Finally,
while $z$ and $k$ are mutual superterms, the superterm conditions remain unsatisfiable, so the \textbf{conditional congruence closure} terminates.

Finally, the decision procedure handles the disequality $z \neq k$ by examining the conditional equivalence relation between $z$ and $k$
in the \emph{CCG}, subject to the constraint $m = n \land m > 0 \land n > 0$. Since $\neg(m = n \land m > 0 \land n > 0)$ is satisfiable
(\emph{e.g.}, by assigning distinct positive integers to $m$ and $n$), it follows that in this case, no contradiction arises. Consequently,
the original EUF$^n$ formula is satisfiable for such assignments.

As shown in this example, for any assignment $\sigma$, substituting $\sigma(m)$ and $\sigma(n)$ for $m$ and $n$ in both the terms and
edge conditions of Figure~\ref{fig:solve_EUFn} determines which edges remain valid. The resulting congruence graph, which is obtained
by merging equivalence edges with $\mathbf{true}$ conditions and removing redundant superterm edges, matches the congruence graph
obtained by directly evaluating the EUF formula from Equation~\ref{equality:EUFn_EUF} under $\sigma$. Thus, the \emph{CCG} in
Figure~\ref{fig:solve_EUFn} precisely captures the congruence graphs of all the disjuncted EUF formulas: $x$ and $y$ are always
equivalent, while $z$ and $k$ are equivalent if and only if $\sigma(m) = \sigma(n)$.

%% file: figure/solve_EUFn.tex
\begin{figure}[htbp]
    \centering
    \vspace{-8pt}
    \resizebox{0.9\textwidth}{!}{
    \begin{minipage}{\textwidth}
        \centering
        \begin{subfigure}[b]{0.13\textwidth}
            \centering
            \input{figure/solve_EUFn_1.tex}
            \vspace{-8pt}
            \caption{}
            \label{fig:solve_EUFn_sub1}
        \end{subfigure}
        \hfill
        \begin{subfigure}[b]{0.13\textwidth}
            \centering
            \input{figure/solve_EUFn_2.tex}
            \vspace{-8pt}
            \caption{}
            \label{fig:solve_EUFn_sub2}
        \end{subfigure}
        \hfill
        \begin{subfigure}[b]{0.27\textwidth}
            \centering
            \input{figure/solve_EUFn_3.tex}
            \vspace{-8pt}
            \caption{}
            \label{fig:solve_EUFn_sub3}
        \end{subfigure}
        \hfill
        \begin{subfigure}[b]{0.27\textwidth}
            \centering
            \input{figure/solve_EUFn_4.tex}
            \vspace{-8pt}
            \caption{}
            \label{fig:solve_EUFn_sub4}
        \end{subfigure}
    \end{minipage}}
    \vspace{-8pt}
    \caption{\emph{CCG}-Based Solution of an EUF$^n$ Formula Example}
    \label{fig:solve_EUFn}
    \vspace{-12pt}
\end{figure}

%% file: figure/solve_EUFn_1.tex
\tikzset{
    originnode/.style = {circle, rounded corners, draw = black, fill=white!5,  text centered, align=center, minimum size=8mm, font=\bfseries\Large},
    arrow/.style       = {->, >=latex , draw=black, line width=0.2mm},
    line/.style        = {-, >=stealth, draw=black},
    dashed arrow/.style = {arrow, dashed},
    dashed line/.style  = {line, dashed}
}

{\scriptsize
\begin{tikzpicture}

    \node[originnode] (x){$x$};
    \node[originnode, right of=x, node distance=13mm] (y){$y$};
    \node[originnode, above of=x, node distance=37mm] (z){$z$};
    \node[originnode, above of=y, node distance=37mm] (k){$k$};
    % \node[below of=z, node distance=5mm] (temp){};

    % \draw[arrow] (z.south) --  node[sloped, below, align=center]{$(m \text{>} 0,f,m)$} (x.north);
    \draw[arrow] (z.south) -- node[sloped, above, align=center] {$(m \text{>} 0,f,m)$} (x.north);
    \draw[arrow] (k.south) --  node[sloped, above, align=center]{$(n \text{>} 0,f,n)$} (y.north);

\end{tikzpicture}
}

%% file: figure/solve_EUFn_2.tex
\tikzset{
    originnode/.style = {circle, rounded corners, draw = black, fill=white!5,  text centered, align=center, minimum size=8mm, font=\bfseries\Large},
    arrow/.style       = {->, >=latex , draw=black, line width=0.2mm},
    line/.style        = {-, >=stealth, draw=black},
    dashed arrow/.style = {arrow, dashed},
    dashed line/.style  = {line, dashed}
}

{\scriptsize
\begin{tikzpicture}

    \node[originnode] (x){$x$};
    \node[originnode, right of=x, node distance=13mm] (y){$y$};
    \node[originnode, above of=x, node distance=37mm] (z){$z$};
    \node[originnode, above of=y, node distance=37mm] (k){$k$};
    % \node[below of=z, node distance=5mm] (temp){};

    % \draw[arrow] (z.south) --  node[sloped, below, align=center]{$(m \text{>} 0,f,m)$} (x.north);
    \draw[arrow] (z.south) -- node[sloped, above, align=center] {$(m \text{>} 0,f,m)$} (x.north);
    \draw[arrow] (k.south) --  node[sloped, above, align=center]{$(n \text{>} 0,f,n)$} (y.north);
    \draw[line] (x.east) --  node[above, align=center]{$\mathbf{true}$} (y.west);
\end{tikzpicture}
}

%% file: figure/solve_EUFn_3.tex
\tikzset{
    originnode/.style = {circle, rounded corners, draw = black, fill=white!5,  text centered, align=center, minimum size=8mm, font=\bfseries\Large},
    arrow/.style       = {->, >=latex , draw=black, line width=0.2mm},
    line/.style        = {-, >=stealth, draw=black},
    dashed arrow/.style = {arrow, dashed},
    dashed line/.style  = {line, dashed}
}

{\scriptsize
\begin{tikzpicture}

    \node[originnode] (x){$x$};
    \node[originnode, right of=x, node distance=35mm] (y){$y$};
    \node[originnode, above of=x, node distance=37mm] (z){$z$};
    \node[originnode, above of=y, node distance=37mm] (k){$k$};
    % \node[below of=z, node distance=5mm] (temp){};

    % \draw[arrow] (z.south) --  node[sloped, below, align=center]{$(m \text{>} 0,f,m)$} (x.north);
    \draw[arrow] (z.south) -- node[sloped, above, align=center] {$(m \text{>} 0,f,m)$} (x.north);
    \draw[arrow] (k.south) --  node[sloped, above, align=center]{$(n \text{>} 0,f,n)$} (y.north);
    \draw[arrow] (k.north) .. controls +(135:3mm) and +(45:3mm) .. node[above, sloped, align=center]{$(n \text{>} m \land m \text{>} 0 \land n \text{>} 0, f,n \text{-} m)$} (z.north);
    \draw[arrow] (z.south) .. controls +(-45:3mm) and +(200:3mm) .. node[above, sloped, align=center]{$(m \text{>} n \land m \text{>} 0 \land n \text{>} 0, f,m \text{-} n)$} (k.south);

    \draw[line] (x.east) --  node[above, align=center]{$\mathbf{true}$} (y.west);

    \draw[arrow] (k.south) -- node[sloped, below, pos=0.7]{$(n \text{>} 0 \land \mathbf{true},f,n)$} (x.north);
    \draw[arrow] (z.south) -- node[sloped, above, pos=0.7]{$(m \text{>} 0 \land \mathbf{true},f,m)$}  (y.north);

\end{tikzpicture}
}

%% file: figure/solve_EUFn_4.tex
\tikzset{
    originnode/.style = {circle, rounded corners, draw = black, fill=white!5,  text centered, align=center, minimum size=8mm, font=\bfseries\Large},
    arrow/.style       = {->, >=latex , draw=black, line width=0.2mm},
    line/.style        = {-, >=stealth, draw=black},
    dashed arrow/.style = {arrow, dashed},
    dashed line/.style  = {line, dashed}
}

{\scriptsize
\begin{tikzpicture}

    \node[originnode] (x){$x$};
    \node[originnode, right of=x, node distance=35mm] (y){$y$};
    \node[originnode, above of=x, node distance=37mm] (z){$z$};
    \node[originnode, above of=y, node distance=37mm] (k){$k$};
    % \node[below of=z, node distance=5mm] (temp){};

    % \draw[arrow] (z.south) --  node[sloped, below, align=center]{$(m \text{>} 0,f,m)$} (x.north);
    \draw[arrow] (z.south) -- node[sloped, above, align=center] {$(m \text{>} 0,f,m)$} (x.north);
    \draw[arrow] (k.south) --  node[sloped, above, align=center]{$(n \text{>} 0,f,n)$} (y.north);
    \draw[arrow] (k.north) .. controls +(135:3mm) and +(45:3mm) .. node[above, sloped, align=center]{$(n \text{>} m \land m \text{>} 0 \land n \text{>} 0, f,n \text{-} m)$} (z.north);
    \draw[arrow] (z.south) .. controls +(-45:3mm) and +(200:3mm) .. node[above, sloped, align=center]{$(m \text{>} n \land m \text{>} 0 \land n \text{>} 0, f,m \text{-} n)$} (k.south);

    \draw[line] (x.east) --  node[above, align=center]{$\mathbf{true}$} (y.west);
    \draw[line] (z.east) --  node[above, align=center]{$m \text{=} n \land m \text{>} 0 \land n \text{>} 0$} (k.west);

    \draw[arrow] (k.south) -- node[sloped, below, pos=0.7]{$(n \text{>} 0 \land \mathbf{true},f,n)$} (x.north);
    \draw[arrow] (z.south) -- node[sloped, above, pos=0.7]{$(m \text{>} 0 \land \mathbf{true},f,m)$}  (y.north);

\end{tikzpicture}
}

%% file: section/sec3.tex
\section{The EUF$^n$ Decision Algorithm}  \label{sec:section3}
This section first presents a formal definition of the \emph{conditional congruence graph (CCG)} and then describes the EUF$^n$
decision algorithm built upon it. The algorithm reduces the satisfiability problem of EUF$^n$ formulas to the
satisfiability of existential Presburger arithmetic with divisibility by leveraging the structure of the \emph{CCG}.
Since the satisfiability of such Presburger formulas is decidable~\cite{lechner2015complexity}, the satisfiability of
EUF$^n$ formulas is also decidable.

\subsection{Conditional Congruence Graph}
Similar to the congruence closure algorithm for determining the satisfiability of standard EUF, the decision procedure for EUF$^n$
also models terms in the formula as nodes and functional relationships between terms as directed edges.
% The algorithm first processes all equalities to compute a congruence closure, establishing equivalence relations of terms. Subsequently,
% it verifies whether any disequalities conflict with these derived equivalence relations to assess the satisfiability of the EUF$^n$ formula.
The primary distinction between the two approaches lies in the conditional labels attached to the edges of the
\emph{CCG}. These labels specify the conditions under which functional or equivalence relations hold.
Consequently, when handling equivalence relations, the algorithm maintains conditional equivalence edges rather than merging equivalent nodes.

For simplicity, our analysis in this section is limited to formulas involving unary functions. Nevertheless, the proposed approach can be
readily extended to handle multi-arity functions with minimal modification. Note that while EUF$^n$ does not support parametric
composition of multi-arity functions, constant-depth compositions of such functions, as allowed in standard EUF syntax, are still
supported in EUF$^n$. We begin by formally defining the \emph{CCG}.

\begin{myDef}[Conditional Congruence Graph] \label{def:CCG}
    Let $\mathbb{C}$ denote the set of conditions (logical formulas), $\mathbb{E}$ the set of linear expressions,
    and $\mathbb{F}$ the set of unary function symbols (including finite compositions).

    A \textbf{conditional congruence graph (CCG)} is a triple $(V, E_{\equiv}, E_{\succ})$, where:
    \begin{itemize}
        \item $V$ is the set of nodes representing terms.

        \item $E_{\equiv} \subseteq V \times \mathbb{C} \times V$ is the set of \textbf{conditional equivalence edges}.
        An edge $(u, \phi, v) \in E_{\equiv}$ (undirected) denotes that $u$ and $v$ are equivalent under condition $\phi \in \mathbb{C}$.

        \item $E_{\succ} \subseteq V \times (\mathbb{C} \times \mathbb{F} \times \mathbb{E}) \times V $ is the set of
        \textbf{conditional superterm edges}. An edge $(v, (\phi, f, k), u) \in E_{\succ}$ (directed) denotes that $v = f^k(u)$
        under condition $\phi \in \mathbb{C}$, where $f \in \mathbb{F}$ represents a functional relationship between terms
        and $k \in \mathbb{E}$ specifies the composition depth of the function.
    \end{itemize}
\end{myDef}

For any term $t \in V$, we define $[t]$ as the set containing $t$ itself and all nodes conditionally equivalent to $t$; that is,
for every $t' \in [t] \setminus \{t\}$, there exists an edge $(t, \phi, t') \in E_{\equiv}$. Note that when the directed edges and
conditions on equivalence edges are ignored, the subgraph induced by the nodes in $[t]$ forms a complete graph.

\begin{myDef}[$f \text{-}  term $]     \label{def:define_Fterm}
    Let $t$ be an arbitrary term and let $f \in \mathbb{F}$ be a unary function. The $f\text{-}term$ of $t$, denoted $f\text{-}term(t)$,
    is the set of all terms obtained by iteratively applying $f$ to $t$ (for any number of applications). Formally,
    \[
        f\text{-}term(t) = \bigcup_{k \in \mathbb{E}} \{ f^{k}(t) \}.
    \]
\end{myDef}
For brevity, we use $f\text{-}t$ to denote $f\text{-}term(t)$ in subsequent discussions.

\subsection{Overview of the Algorithm} \label{section3:algorithm}
As discussed earlier, the \emph{CCG} captures both conditional superterm relations and conditional equivalence relations between terms.
Therefore, the algorithm for solving EUF$^n$ formulas based on the \emph{CCG} first processes all equalities in the formula to derive
conditional equivalence relations among terms, maintaining conditional consistency after each equivalence relation update. It then
evaluates the disequalities to verify whether there exists a non-conflicting assignment. If such an assignment exists, instantiating
the original EUF$^n$ formula under this assignment yields a satisfiable EUF instance; otherwise, the EUF$^n$ formula is unsatisfiable.

When processing equations, the proposed algorithm does not immediately merge equivalent nodes, but instead updates the conditional
equivalence and superterm relations between them. As illustrated in Section~\ref{section2:example}, processing each equality in an
EUF$^n$ formula involves the following three steps:
(1) \textbf{Equivalence propagation}: Enforce the transitivity of equality, similar to the \textit{Union} operation in standard EUF solving algorithms.
(2) \textbf{Superterm inheritance}: After updating equivalence relations, propagate the corresponding superterm (and subterm) relationships.
(3) \textbf{Conditional congruence closure}: Update the conditional equivalence relations between congruence terms based on the congruence
axiom and generalized congruence rules (Theorem~\ref{theorem:Gen_Congruence}).
As illustrated in Section~\ref{section2:example}, the above procedure terminates because no term can be its own superterm, and is
correct because the \emph{CCG} completely and accurately captures the congruence graphs of the infinitely many EUF formulas
corresponding to the EUF$^n$ formula.

Given that an EUF$^n$ formula adheres to the syntax in Figure~\ref{fig:syntax}, it can be transformed into disjunctive normal form
(DNF), possibly incurring an exponential size increase \cite{kroening2008decision}. The validity of the formula is then determined
by the satisfiability of its DNF conjuncts, where each conjunct consists of $\mathrm{EUF}^n$-literals (atomic formulas or their negations).
For each conjunct, Algorithm~\ref{algo:solve} either produces a satisfying assignment for the natural number variables or concludes UNSAT.

\input{algorithms/algo_solve.tex}

Algorithm \ref{algo:solve} outlines the process for resolving the target conjunct. In Line \ref{algo:init}, the algorithm initializes
a \emph{CCG}; for example, the graph shown in Figure~\ref{fig:solve_EUFn_sub1} is the resulting initial \emph{CCG}.
The detailed functionality of the initialization procedure, \texttt{Initialize}, will be elaborated in Section \ref{section3:init}.
In Line~\ref{algo:worklist_init}, the queue $W$ is initialized to track conditional equivalence edges on which conditional congruence
closure has not yet been applied. In Lines \ref{algo:eq_start}-\ref{algo:eq_end}, the algorithm processes equalities in
$\Psi$ (\emph{e.g.}, $term_l \text{=} term_r$). Lines~\ref{algo:call_EqProp} perform \textbf{equivalence propagation} by invoking
the \texttt{EqProp} function. This function first executes Line~\ref{algo:call_AddEq}, which calls \texttt{AddEq} to insert a
conditional equivalence edge between $term_l$ and $term_r$. As illustrated in the example in Section~\ref{section2:example}, an
equivalence edge is added between $x$ and $y$ in Figure~\ref{fig:solve_EUFn_sub2}.

% Note that $W$ is updated exclusively within the
% \texttt{EqProp} function (Line~\ref{algo:push_W}), where the edge corresponding to the parameters of \texttt{EqProp} is pushed
% into $W$. The parameters for \texttt{EqProp} are sourced solely from the equations in formula $\Psi$ (Line~\ref{algo:eq_start})
% and the results of \texttt{CongruenceClosure} (Line~\ref{algo:Congruence}).

Next, Lines~\ref{algo:tran_start}--\ref{algo:tran_end} iteratively invoke \texttt{AddEq} for $term_1$ and $ term_2$ where $term_1$
and $term_2$ are connected to $term_l$ and $term_r$ via conditional equivalence edges, respectively,
\emph{i.e.}, $term_1 \in [term_l] \setminus \{term_l\}$ and $term_2 \in [term_r] \setminus \{term_r\}$. These calls update
equivalence edges between $term_1$ and $term_2$, establishing their equivalence under condition $\phi_{\text{new}} \land \phi_1 \land \phi_2$.
% Note that during loop execution, the pre-loop state of $E_{\equiv}$ can be stored in a separate variable, and only this preserved
% snapshot is utilized throughout the loop iterations.
Note that during loop execution, a pre-loop snapshot of $E_{\equiv}$ is saved and used exclusively throughout all iterations,
thereby ensuring termination. In the example from Section~\ref{section2:example}, since $[x] = \{x\}$ and $[y] = \{y\}$,
no new equivalence edges are added at this step.
The \texttt{AddEq} function (Lines~\ref{algo:AddEq}–\ref{algo:AddEq_end} in Algorithm~\ref{algo:solve}) incorporates a new
conditional equivalence edge $(term_l, \phi_{\text{new}}, term_r)$ into the \emph{CCG}. It first checks whether a
conditional equivalence edge already exists between $term_l$ and $term_r$ (Line~\ref{algo:find}); if not, $\phi_{\text{old}}$
is set to $\mathbf{false}$. Line~\ref{algo:eq_add} updates the equivalence condition to $\phi_{\text{old}} \lor \phi_{\text{new}}$. Note that after adding a conditional equivalence edge, the \texttt{AddEq} function does not recursively call itself to propagate the equivalence further. This is because $[term_1] = [term_l]$ and $[term_2] = [term_r]$ (Line~\ref{algo:tran_start}), so no additional iterations are required to establish equivalence between $[term_1]$ and $[term_2]$, thus ensuring termination.

Since the equivalence condition has changed, Line~\ref{algo:inherit} invokes the \texttt{InheritST} function to perform
\textbf{superterm inheritance}, updating the superterms/subterms of $term_l$ and $term_r$ along with their associated conditions. That is, for every $(term_1, (\phi_1, f_1, k_1), term_l) \in E_{\succ}$ and $(term_l, (\phi_2, f_2, k_2), term_2) \in E_{\succ}$, add or update $(term_1, (\phi_1 \wedge \phi_{\text{new}}, f_1, k_1), term_r)$ and $(term_r, (\phi_2 \wedge \phi_{\text{new}}, f_2, k_2), term_2)$ to $E_{\succ}$. Moreover, for any other superterm $term_s$ of $term_r$ with function $f_1$, the superterm edges between $term_1$ and $term_s$ are updated accordingly; similarly, for any other superterm $term_s$ of $term_2$ with function $f_2$, the superterm edges between $term_r$ and $term_s$ are updated accordingly. Finally, the superterms and subterms of $term_r$ are similarly propagated to $term_l$. In the example from Section~\ref{section2:example}, Figure~\ref{fig:solve_EUFn_sub3} shows the resulting graph after executing \texttt{InheritST}($x$, true, $y$). It can be seen that the superterm $z$ of $x$ also becomes a superterm of $y$, and the conditional superterm relation between $z$ and $y$'s original superterm $k$ is also added. Similarly, $y$'s superterm $k$ is inherited by $x$ in the same manner.

In the while loop from Lines~\ref{algo:begin_while}–\ref{algo:eq_end}, Line~\ref{algo:worklist_pop} pops an unprocessed edge from $W$.
Lines~\ref{algo:Congruence}–\ref{algo:eq_end} then perform the \textbf{conditional congruence closure}, deriving new
equivalences between congruent terms via the \texttt{CongruenceClosure} procedure, whose implementation is described
in Section~\ref{section3:congruence}. The resulting conditional equivalences are collected in the edge set $\widehat{E}_{\equiv}$
(Line~\ref{algo:Congruence}) and propagated via the \texttt{EqProp} function. In Section~\ref{section2:example}, after processing
$x=y$, we have $\widehat{E}_{\equiv} = \{(z, m=n \land m>0 \land n>0, k)\}$. As a result, a conditional equivalence edge between
$z$ and $k$ is added via \texttt{EqProp}, as illustrated in Figure~\ref{fig:solve_EUFn_sub4}. Furthermore, the triple
$(z, m=n \land m>0 \land n>0, k)$ is pushed into $W$. Note that the equivalence relations added to the \emph{CCG} via \texttt{AddEq}
(Lines~\ref{algo:tran_start}--\ref{algo:tran_end}), which are of the form $(term_1, \phi_{\text{new}} \land \phi_1 \land \phi_2, term_2)$
where $term_1 \in [term_l] \setminus \{term_l\}$ and $term_2 \in [term_r] \setminus \{term_r\}$, are not considered for further
congruence closure.
% Note that the equivalence relations added to the CCG via \texttt{AddEq}
% (Lines~\ref{algo:tran_start}--\ref{algo:tran_end}) are not considered for further congruence closure. These relations are of
% the form $(term_1, \phi_{\text{new}} \land \phi_1 \land \phi_2, term_2)$, where $term_1 \in [term_l] \setminus \{term_l\}$ and
% $term_2 \in [term_r] \setminus \{term_r\}$.
This is because $term_1$ (respectively, $term_2$) shares identical superterms with
$term_l$ (respectively, $term_r$). Consequently, for the equivalent condition
$\phi \vee \phi_{\text{new}} \vee (\phi_{\text{new}} \land \phi_1 \land \phi_2)$ between congruence terms obtained by
considering both edges, the condition introduced by the edge $(term_1, \phi_{\text{new}} \land \phi_1 \land \phi_2, term_2)$ is redundant.
(For brevity, we omit the superterm conditions since the conclusion holds when they are considered.)

The algorithm then processes the disequalities in $\Psi$ (Lines~\ref{algo:neq_start}--\ref{algo:neq_end}). For each disequality
$term_l \neq term_r$ (Line~\ref{algo:neq_start}), the algorithm identifies from the \emph{CCG} the condition $\phi$ under which
$term_l$ and $term_r$ become equivalent (Line~\ref{algo:find_eq}). Since a conflict arises if $\phi$ holds, its negation $\neg \phi$ is
conjoined to the global condition $\Phi$ (Line~\ref{algo:neq_end}). After processing all disequalities, $\Phi$ encapsulates the constraints
necessary to prevent conflicts. Thus, if $\Phi$ is satisfiable, the original $\mathrm{EUF}^n$ formula is satisfiable; otherwise, no satisfying
assignment exists (Line~\ref{algo:solve_cond}). In the example from Section~\ref{section2:example}, when processing $z \neq k$, the equivalence
condition between $z$ and $k$ is found to be $\phi = (m=n \land m>0 \land n>0) \vee (m=0 \land n=0)$ (Figure~\ref{fig:solve_EUFn_sub4}, which omits
some conditions for brevity as noted earlier). Since $\neg \phi$ is satisfiable, the $\mathrm{EUF}^n$ formula in Section~\ref{section2:example}
is satisfiable. Note that $\Phi$ can be expressed as an existential sentence in Presburger arithmetic with divisibility, a theory with
decidable satisfiability. A formal justification for this characterization is provided in Section~\ref{section3:congruence} and
Section~\ref{section3:correct}.

\subsection{Initialization} \label{section3:init}
This subsection details the initialization function \texttt{Initialize} (Line~\ref{algo:init}) in Algorithm~\ref{algo:solve}. During initialization, all terms from the EUF$^n$ formula $\Psi$ are collected into set $V$, which serves as nodes of the \emph{CCG}. The superterm edge set $E_{\succ}$ is initialized by defining relations where one term can be obtained by applying functions to another (\emph{e.g.}, $f^m(x)$ becomes a superterm of $f^n(x)$ via $(m \text{-} n)$ applications of $f$, under condition $m>n$). The sets $V$ and $E_{\succ}$ are initialized as follows:
% \begin{alignat*}{2}
%     &  V &\  = \  &\bigl\{ term \mid term \in \Psi \bigr\} \\
%     &  E_{\succ} & \ =\  &\bigl\{ (f^{k_2}(term), (k_2 > k_1, f, k_2 - k_1), f^{k_1}(term)) \\
%     & && \mid term \in V, f \in \mathbb{F}, f^{k_1}(term) \in V, f^{k_2}(term) \in V \bigr\} \\
%     &  E_{\equiv} & \ =\  &\bigl\{ (term_1, k_1 = k_2, term_2) \\
%     & && \mid term, term_1, term_2 \in V,  f \in \mathbb{F}, \\
%     & &&  (term_1, (k_1 > 0, f, k_1),term) \in E_{\succ}, \\
%     & &&  (term_2, (k_2 > 0, f, k_2),term) \in E_{\succ}  \bigr\}
% \end{alignat*}
$$
    \frac{
         term \in \Psi
    }{
        V \leftarrow V \cup \{term\}
    }
\quad \quad
    \frac{
        f^{k_1}(term), f^{k_2}(term), term \in V, \quad f \in \mathbb{F}
    }{
    %  (f^{k_2}(term), (k_2 > k_1, f, k_2 - k_1), f^{k_1}(term)) \in E_{\succ}
     E_{\succ} \leftarrow E_{\succ} \cup \{(f^{k_2}(term), (k_2 > k_1, f, k_2 - k_1), f^{k_1}(term))\}
    }
$$
This rule states that if the conditions above the line hold, then the operations below the line are executed. Since the syntax of EUF$^n$ permits function composition ($f = f_1 \circ \cdots \circ f_n$; Section~\ref{section2:formal_syntax}), each addition of a superterm edge in the \emph{CCG} (\emph{e.g.}, for function relation $f_i$) triggers an immediate check. This check determines if new superterm edges (\emph{e.g.}, for $f$) can be added based on function composition rules and the edge's predecessor/successor relationships, with corresponding superterm conditions and composition counts derived from intermediate edges. For example, if $x = f_1(y)$, $y = f_2(z)$, and $f = f_1 \circ f_2$, then a superterm edge representing $x = f(z)$ is added. This composition check is intrinsic to all superterm edge additions in subsequent algorithm steps.

Finally, the conditional equivalence edge set $E_{\equiv}$ is initialized with conditional equivalence relations between terms that
are derivable from identical base terms through identical function applications of matching depth (\emph{i.e.}, according to Theorem~\ref{theorem:Gen_Congruence} and the congruence axiom~\ref{axiom4}). The initialization rules are as follows:
\[
    \frac{
        \begin{array}{c}
            (term_1, (k_1 > 0, f, k_1),term) \in E_{\succ}, \quad term, term_1 \in V, \\
            (term_2, (k_2 > 0, f, k_2),term) \in E_{\succ}, \quad term_2 \in V, f \in \mathbb{F}
        \end{array}
    }{
        % (term_1, k_1 = k_2, term_2) \in E_{\equiv}
        E_{\equiv} \leftarrow E_{\equiv} \cup \{(term_1, k_1 = k_2, term_2)\}
    }
\]
After initializing the equivalence edges in $E_{\equiv}$, no further congruence closure processing is required. This is because
the superterms of $term_1$ and $term_2$ are also superterms of $term$, and their conditional equivalence relations are already
included in $E_{\equiv}$.

\input{section/sec3_con.tex}

%% file: algorithms/algo_solve.tex
\begin{algorithm}[ht!]
    \caption{Solve}\label{algo:solve}
    \KwIn{A finite conjunction $\Psi$ of $\mathrm{EUF}^n$-literals.}
    \KwOut{A solution or UNSAT.}
    \SetKwFunction{FSubFunc}{AddEq}
    \SetKwFunction{FEqFunc}{EqProp}
    \SetKwProg{Fn}{Function}{:}{}

    Let \emph{CCG} $(E, E_{\equiv}, E_{\succ} ) \leftarrow \texttt{Initialize}(\Psi) $    \tcp*[f]{\textcolor{silvergrey}{Section \ref{section3:init}}}   \label{algo:init}\\
    Let $W \leftarrow \emptyset $    \tcp*[f]{\textcolor{silvergrey}{A global stack-based worklist} }   \label{algo:worklist_init}\\
    \ForEach{$term_{l} = term_{r} \in \Psi$}{                                             \label{algo:eq_start}
        % \If{$ \texttt{Identical} (term_{l}, term_{r} )$}{                                              \label{algo:identical}
        %     continue;
        % }
        \tcp{\textcolor{silvergrey}{Equivalence propagation}}
        $ \FEqFunc(term_{l}, \mathbf{true}, term_{r})$                                               \label{algo:call_EqProp}  \\
        % \ForEach{ $(term_{l}, \phi_1, term_1) \in E_{\equiv}, (term_{r}, \phi_2, term_2) \in  E_{\equiv} $      \label{algo:tran_start} }
        % { $ \FSubFunc (term_1, \ \phi_1 \wedge \phi_2,\ term_2) $               \label{algo:tran_end}  }

        \While{$W \neq \emptyset$}{                                          \label{algo:begin_while}
            $ (term^{\prime}_{l}, \phi^{\prime}, term^{\prime}_{r}) \leftarrow W.pop() $        \label{algo:worklist_pop}          \\
            \tcp{\textcolor{silvergrey}{Conditional congruence closure}}
            $ \widehat{E}_{\equiv} \leftarrow \texttt{CongruenceClosure}(term^{\prime}_{l}, \phi^{\prime}, term^{\prime}_{r} )$
            \tcp*[f]{\textcolor{silvergrey}{Section \ref{section3:congruence}}}   \label{algo:Congruence}          \\
            \ForEach{ $(term^{\prime}_1, \phi^{\prime\prime}, term^{\prime}_2) \in \widehat{E}_{\equiv}$}{     \label{algo:AddEq_congruence}
                    $\FEqFunc(term^{\prime}_1, \phi^{\prime\prime}, term^{\prime}_2) $                             \label{algo:eq_end}  }
        }
      }
    $\Phi \leftarrow \mathbf{true}$

    \ForEach{$ \neg(term_{l}=term_{r}) \in \Psi $}{ \label{algo:neq_start}
      \If{$(term_{l}, \phi, term_{r}) \in E_{\equiv}$}{ \label{algo:find_eq}
        $\Phi \leftarrow \Phi \wedge (\neg \phi)$  \label{algo:neq_end}
      }
    }
    $result \leftarrow$  \texttt{SolvePresburgerWithD}($\Phi$)                                          \label{algo:solve_cond}  \\
    \KwRet $result$
    \BlankLine

    \Fn{\FEqFunc {$term_{l}, \phi_{\textup{new}}, term_{r}$}}{
        $ \FSubFunc (term_{l}, \phi_{\text{new}}, term_{r}) $                           \label{algo:call_AddEq} \\
        $W.push((term_{l}, \phi_{\text{new}}, term_{r} ))$                                 \label{algo:push_W}       \\
        \tcp{\textcolor{silvergrey}{$term_1 \in [term_{l}] \setminus \{term_l\}, term_2 \in [term_{r}] \setminus \{term_r\}$}}
        \ForEach{ $(term_{l}, \phi_1, term_1) \in E_{\equiv}, (term_{r}, \phi_2, term_2) \in  E_{\equiv} $     \label{algo:tran_start} }
               { $ \FSubFunc (term_1, \ \phi_{\text{new}} \land \phi_1 \wedge \phi_2,\ term_2) $     \label{algo:tran_end}   }
    }
    \Fn{\FSubFunc {$term_{l}, \phi_{\textup{new}}, term_{r}$}}{                                    \label{algo:AddEq}
        % \If{\texttt{Identical}$\textup{(}term_{l}, term_{r}\textup{)}  $                                             \label{algo:identical} }
        %     {\textbf{return}}
        \If{$(term_{l}, \phi_{\text{old}}, term_{r}) \in E_{\equiv}$}{            \label{algo:find}
              % \tcp*[f]{\textcolor{silvergrey}{$\phi_{\text{old}} \gets \mathbf{false}$ if no such edge exists.}}        \label{algo:find} \\

              % \If{$(\phi_{\text{old}} \rightarrow \phi_{\text{new}})$                                    \label{algo:imply}  }
              %    {return; }                                                                            \label{algo:ret}
              $E_{\equiv}  \leftarrow  E_{\equiv}  \backslash  \{(term_{l}, \phi_{\text{old}}, term_{r})\}  \cup \{(term_{l}, \phi_{\text{old}} \vee \phi_{\text{new}}, term_{r} )\} $    \label{algo:eq_add} \\
              \tcp{\textcolor{silvergrey}{Superterm inheritance}}
              \texttt{InheritST}($term_{l}, \phi_{\textup{new}}, term_{r} $)        \label{algo:inherit}
        \\}                                                                                                   \label{algo:AddEq_end}
  }
\end{algorithm}

%% file: section/sec3_con.tex
% \section{Congruence Relations Processing} \label{sec:section4}
% This section presents the implementation details of the \texttt{CongruenceClosure} function. We begin by introducing the concept of
% \emph{periodic equivalence}, followed by a discussion on the processing of congruence relations, with particular attention to handling
% congruence closure under periodic equivalence. Finally, we prove the correctness and analyze the complexity of Algorithm~\ref{algo:solve}.

\subsection{Periodic Equivalence} \label{section3:periodic_equivalence}
Before introducing the function \texttt{CongruenceClosure} (Line~\ref{algo:Congruence}) that handles congruence relations in Algorithm~\ref{algo:solve}, we first introduce a phenomenon we refer to as \emph{periodic equivalence}. In the standard congruence closure algorithm for EUF formulas, after performing the \textit{Union}$(x, y)$ operation, the algorithm recursively merges congruent terms by proceeding upward and executing \textit{Union}$(f(x), f(y))$, \textit{Union}$(f(f(x)), f(f(y)))$, and so on. The resulting equivalence classes can be categorized into two types based on whether a superterm relationship exists between $x$ and $y$, as illustrated in Figure~\ref{fig:periodic_eq_intro}, where shaded nodes belong to the same equivalence class. Figure~\ref{fig:periodic_subfig1} depicts the case where no superterm relationship exists between $x$ and $y$; that is, the upward search for congruent terms starting from $x$ does not reach $y$ (and vice versa). We refer to the equivalence classes formed in this case as having a \textbf{hierarchical structure}. Figure~\ref{fig:periodic_subfig2} illustrates the case where a superterm relationship exists between $x$ and $y$, such as in \textit{Union}$(x, f^2(x))$. Here, the upward search from $x$ eventually reaches $f^2(x)$, causing equivalence classes from different levels to merge, forming what we call a \textbf{partitioned structure}.

\begin{figure}[ht]
    \centering
    \input{figure/periodic_eq.tex}
    \vspace{-8pt}
    \caption{Impact of superterm relationships on equivalence class formation.}
    \label{fig:periodic_eq_intro}
    \vspace{-8pt}
\end{figure}

In EUF, where the depth of function applications is bounded, both cases in Figure~\ref{fig:periodic_eq_intro} can be handled uniformly
by recursively merging until a fixed point is reached. However, in EUF$^n$, where the composition depth of functions may be
represented by variables, this recursive approach fails. Thus, we must treat the two cases separately. For instance, in the
hierarchical structure (Figure~\ref{fig:periodic_subfig1}), we have $x = y \land m = n \rightarrow f^n(x) = f^m(y)$, while in the
partitioned structure (Figure~\ref{fig:periodic_subfig2}), we have $m \equiv_{2} n \rightarrow f^n(x) = f^m(x)$ (for simplicity, we
use $m \equiv_{2} n$ to denote $m \equiv n \mod 2$, and adopt this notation throughout). We refer to the equivalence relation
arising from a superterm relationship between $x$ and $y$ as \emph{periodic equivalence}. This section elaborates on the
treatment of such congruence relations in solving EUF$^n$ formulas.

% Periodic Equivalence
\begin{myDef} [Periodic Equivalence]         \label{def:define_periodic}
    Given an equivalence pair $(term_l, term_r)$ where $term_l$ and $term_r$ are distinct terms,
    if
    $$ \exists f \in \mathbb{F}. \ term_l \in f \text{-} term_r \vee term_r \in f \text{-} term_l $$
    holds, then $(term_l, term_r)$ is called a periodic equivalence pair and $term_l$ and $term_r$ are considered periodically equivalent.
\end{myDef}
% When $term_l$ and $term_r$ are periodically equivalent, there exists a term $t$ such that
% $ term_l \in f \text{-} t \wedge term_r \in f \text{-} t$, and we also refer to this situation as a periodic
% equivalence relation with respect to $t$.

\begin{myExample}  \label{example1}
   Suppose the equation to be processed is $x = f^{2}(x)$. By definition, $x$ and $f^{2}(x)$ are periodically equivalent, as shown
   in Figure~\ref{fig:periodic_subfig2}, this equivalence relation partitions the set $f$-$x$ into two equivalence classes:
   $\{x, f^{2}(x), \cdots\}$ and $\{f(x), f^{3}(x), \cdots\}$, each containing infinitely many elements. Therefore, after adding
   the equivalence edge $(x, \text{True}, f^{2}(x))$ to the \emph{CCG}, for any two elements $f^{n}(x) \in f\text{-}x$ and
   $f^{m}(x) \in f\text{-}x$ that belong to the same equivalence class, an equivalence edge $(f^{n}(x), m \equiv_{2} n, f^{m}(x))$
   should be added.
\end{myExample}
\begin{myExample} \label{example2}
   Assume we are processing the equation $f^{6}(x)=f^{8}(x)$, having previously processed the equation $x = f^{5}(x)$.
   Consequently, an equivalence edge with condition $6 \equiv_{5} 8$ already exists between nodes $f^{6}(x)$ and $f^{8}(x)$ in
   the \emph{CCG}. Since $\neg(6 \equiv_{5} 8)$ holds, the equivalence class derived from the relation $(x, f^{5}(x))$ is distinct
   from that derived from $(f^{6}(x), f^{8}(x))$. Given
    \begin{equation*}
    \left(
        \begin{aligned}
            x = f^{5}(x) & \Rightarrow x = f^{5}(x) = f^{10}(x) \\
            f^{6}(x) = f^{8}(x) & \Rightarrow f^{6}(x) = f^{8}(x) = f^{10}(x)
        \end{aligned}
    \right)
    \Rightarrow
    x = f^{5}(x) = f^{6}(x) \Rightarrow x = f (x)
\end{equation*}
   where the first implication follows from applying $f^5$ to $x = f^5(x)$, yielding $f^5(x) = f^{10}(x)$, the second follows
   similarly, and the last from $f(x)=f(f^5(x))$. We therefore conclude that the equivalence classes derived from the
   relations $(x, f^{5}(x))$ and $(f^{6}(x), f^{8}(x))$ are identical to those derived from $(x, f(x))$. A more general
   and formal treatment will be provided in Section~\ref{section3:congruence}.
\end{myExample}

\subsection{Congruence Relations Processing} \label{section3:congruence}
In this subsection, we explain how to handle congruence relations after adding an equivalence edge, \emph{i.e.}, the \texttt{CongruenceClosure} function (Line~\ref{algo:Congruence}) in Algorithm~\ref{algo:solve}. Suppose the added equivalence edge is $\boldsymbol{( \mathit{term}_l, \phi, \mathit{term}_r )}$. We will discuss the cases based on whether a \emph{periodic equivalence} exists between $term_l$ and $term_r$. During this process, we use $\widehat{E}_{\equiv}$ to record the conditional equivalence relations derived from handling the congruence relations. The computation rules presented later in this subsection state that if the conditions above the line hold, then the operations below the line are executed.

\begin{description}
    \item[$term_{l}$ and $term_{r}$ are not periodically equivalent.]

    We first consider the case where the newly added equivalence relation is not a \emph{periodic equivalence}, meaning there is no superterm relation between $term_{l}$ and $term_{r}$. Since $term_{l}$ and $term_{r}$ are equivalent, according to the congruence axiom~\ref{axiom4} and Theorem~\ref{theorem:Gen_Congruence}, for any $f \in \mathbb{F}$, we need to identify terms within $f \text{-} term_{l}$ and $f \text{-} term_{r}$ that can be derived as equivalent. The specific rule for adding equivalence relations is as follows. The condition $\phi_1 \wedge \phi_2$ ensures that both superterm edges can hold simultaneously, prevents an infinite upward search for
    congruent terms, and thereby guarantees algorithm termination. The validity of the equivalence condition between terms is ultimately checked during the final satisfiability check of the formula.
    \begin{equation} \label{formula1}
        \frac{
            \begin{array}{c}
                (term_l, \phi, term_r) \in E_{\equiv}, \quad
                (term_1, (\phi_1, f, \mathit{k}_1), term_l) \in E_{\succ}, \\
                (term_2, (\phi_2, f, \mathit{k}_2), term_r) \in E_{\succ}, \quad
                (term_1, \phi_{\text{old}}, term_2) \in E_{\equiv}, \quad
                \widehat{\phi} = \phi_1 \land \phi_2 \ \text{is satisfiable}.
            \end{array}
        }{
            % (term_1, (\phi \land \widehat{\phi} \land \mathit{k}_1 = \mathit{k}_2) \lor \phi_{\text{old}}, term_2) \in \widehat{E}_{\equiv}
            \widehat{E}_{\equiv} \leftarrow \widehat{E}_{\equiv} \cup \{(term_1, (\phi \land \widehat{\phi} \land \mathit{k}_1 = \mathit{k}_2) \lor \phi_{\text{old}}, term_2)\}
        }
    \end{equation}

    \item[$term_{l}$ and $term_{r}$ are periodically equivalent.]
    When a superterm relationship exists between $term_l$ and $term_r$ (assuming $term_r = f^k(term_l)$ for $k > 0$), the addition
    of an equivalence relation partitions the set $f\text{-}term_{l}$ into $k$ equivalence classes. First, for elements outside
    the set $f \text{-} term_{l}$, we handle them according to Rule~(\ref{formula1}); that is, for any $g \in \mathbb{F}$ with
    $g \neq f$, the elements in $\widehat{E}_{\equiv}$ follow Rule~(\ref{formula1}) with the function symbol $f$ replaced by $g$.
    Next, we discuss the elements in the set $f \text{-} term_{l}$, as discussed in Examples \ref{example1} and \ref{example2},
    we will consider two cases: whether a \emph{periodic equivalence} involving $term_l$ and $term_r$ has already been processed.

    \begin{enumerate}
    \item \textbf{ $(term_{l}, term_{r})$ is the first added \emph{periodic equivalence} pair with respect to $f \text{-} term_l$.} \\
        For elements in $f \text{-} term_{l}$, any two elements that belong to the same equivalence class are considered equivalent
        and processed according to the following rules:
        \begin{equation} \label{formula3}
            \frac{
                \begin{array}{c}
                    (term_l, \phi, term_r) \in E_{\equiv}, \quad
                    (term_{r}, (\phi_1, f, k), term_{l})  \in   E_{\succ}, \\
                    (term_{1}, (\phi_2, f, k_1), term_{l})  \in   E_{\succ}, \quad
                    (term_{2}, (\phi_3, f, k_2), term_{l})  \in   E_{\succ}, \\
                    (term_{1}, \phi_{\text{old}}, term_{2})   \in   E_{\equiv}, \quad
                    \widehat{\phi} = \phi_1 \wedge \phi_2 \wedge \phi_3 \ \text{is satisfiable}.
                \end{array}
            }{
            %   (term_1, ( \phi \wedge \widehat{\phi} \wedge k_1 \equiv_{k} k_2  )  \vee \phi_{\text{old}}, term_2) \in \widehat{E}_{\equiv}
              \widehat{E}_{\equiv} \leftarrow \widehat{E}_{\equiv} \cup \{(term_1, ( \phi \wedge \widehat{\phi} \wedge k_1 \equiv_{k} k_2  )  \vee \phi_{\text{old}}, term_2)\}
            }
        \end{equation}

    Note that, modular arithmetic can be expressed with an existential sentence of Presburger arithmetic with divisibility, as shown below:
    \begin{align*}
         m \equiv_{k} n  \Leftrightarrow  \exists c.\ k \mid (m-c) \wedge k \mid (n-c) \wedge c < k \\
         \neg ( m \equiv_{k} n ) \Leftrightarrow  \exists c. \ k \mid (m-c) \wedge \neg (k \mid (n-c)) \wedge c < k
    \end{align*}

    \item  \textbf{The \emph{periodic equivalence} relation $(\mathbf{term_{l^{\prime}}, term_{r^{\prime}}})$
    with respect to $f \text{-} term_{l}$ has been processed.} In this case, an equivalence edge already exists between $term_{l}$ and
    $term_{r}$. The corresponding \emph{periodic equivalence} pair $(term_{l'}, term_{r'})$ responsible for introducing this edge
    can be directly retrieved from the algorithm's execution trace.
    Consequently, the newly introduced \emph{periodic equivalence} $(term_{l}, term_{r})$ will further refine the equivalence classes
    partitioned by $(term_{l'}, term_{r'})$. In Theorem~\ref{theorem:th1}, we prove that multiple \emph{periodic equivalences} can
    be represented by a single \emph{periodic equivalence}, and both yield identical equivalence class partitions.
    Without loss of generality, we assume $term_{l'} = f^{k_1}(term_{l})$ and $term_{r'} = f^{k_2}(term_{l})$, with $k_2 > k_1$.
    The specific processing rules are as follows, where $\gcd$ denotes the greatest common divisor function:

    \vspace{-6pt}
    \begin{equation} \label{formula4}
        \frac{
            \begin{array}{c}
                (term_l, \phi, term_r) \in E_{\equiv}, \quad
                (term_{r}, (\phi_1, f, k), term_{l})  \in   E_{\succ}, \\
                (term_{l^{\prime}}, (\phi_2, f, k_1),term_{l})  \in   E_{\succ}, \quad
                (term_{r^{\prime}}, (\phi_3, f, k_2), term_{l})  \in   E_{\succ}, \\
                (term_{r^{\prime}}, (\phi_4, f, k_2-k_1), term_{l^{\prime}}) \in   E_{\succ}, \quad
                (term_{1}, (\phi_5, f, k_3), term_{l})  \in   E_{\succ}, \\
                (term_{2}, (\phi_6, f, k_4), term_{l})  \in   E_{\succ}, \quad
                (term_{1}, \phi_{\text{old}}, term_{2})   \in   E_{\equiv}, \\
                \widehat{\phi} = \phi_1 \wedge \phi_2 \wedge \phi_3 \wedge \phi_4  \wedge \phi_5  \wedge \phi_6 \ \text{is satisfiable}.
            \end{array}
        }{
        %   (term_1, (\phi \wedge \widehat{\phi} \wedge k_3 \equiv_{ \mathsf{gcd} (k,\ (k_2-k_1)) } k_4)
        %   \vee \phi_{\text{old}}, term_2) \in \widehat{E}_{\equiv}
          \widehat{E}_{\equiv} \leftarrow \widehat{E}_{\equiv} \cup \{(term_1, (\phi \wedge \widehat{\phi} \wedge k_3 \equiv_{ \mathsf{gcd} (k,\ (k_2-k_1)) } k_4)  \vee \phi_{\text{old}}, term_2)\}
        }
    \end{equation}

    The correctness of this rule is established by Theorem~\ref{theorem:th1} in Section~\ref{section3:correct}.
    Noted that, according to Bézout's identity~\cite{rosen1999discrete}, the $\mathsf{gcd}$ function can be expressed
    using an existential sentence of Presburger arithmetic with divisibility, as shown below:
        $$a = \mathsf{gcd}(m,n) \Leftrightarrow  \exists b.\  a \mid m \wedge a \mid n \wedge m \mid b \wedge n \mid (b-a) $$

    \end{enumerate}
\end{description}

Above, we described the handling of \texttt{CongruenceClosure()}, which is invoked after each addition of an equivalence
relation through the \texttt{EqProp} function. The resulting conditional equivalences are stored in $\widehat{E}_{\equiv}$
and subsequently added to the \emph{CCG}.

\begin{myThe}[Decidability] \label{theorem:decidability}
    For any finite conjunctive formula $\Psi$ of EUF$^n$ equalities and disequalities, its satisfiability is decidable via the \emph{CCG} algorithm.
\end{myThe}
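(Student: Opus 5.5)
The proof splits into three parts: the \emph{CCG} algorithm terminates, its output $\Phi$ lies in a decidable theory, and $\Phi$ is satisfiable exactly when $\Psi$ is. Throughout we work with a fixed conjunction $\Psi$, as in the statement.

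\textbf{Termination.} The node set $V$ is fixed by \texttt{Initialize} and never grows, so $E_{\equiv}$ and $E_{\succ}$ each contain at most one edge per ordered pair of nodes and function symbol. The only unbounded quantity is therefore the syntactic size of the edge conditions. To control it, we regard each condition as an element of the set of Presburger-with-divisibility formulas over the finitely many variables of $\Psi$, taken modulo logical equivalence. We change \texttt{AddEq} so that it pushes to $W$ only when $\phi_{\text{old}} \lor \phi_{\text{new}}$ is strictly weaker than $\phi_{\text{old}}$; this check is decidable. All conditions are built from finitely many atoms: the depth expressions $k_i$, their pairwise differences, and congruences modulo gcds of these. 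Hence only finitely many inequivalent conditions can occur. Each edge condition then only ever weakens along a finite lattice, so $W$ empties after finitely many steps. The \texttt{EqProp} loop uses a pre-loop snapshot of $E_{\equiv}$, so each call is itself finite.

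\textbf{The target theory.} Every condition is a positive Boolean combination of linear (in)equalities, $\equiv_k$ atoms and $\mathsf{gcd}$ terms, and $\Phi$ is a conjunction of negations of such conditions. Using the encodings of $\equiv_k$, $\neg\equiv_k$ and $\mathsf{gcd}$ given above, and moving the fresh existentials outward to prenex form, $\Phi$ becomes an existential sentence of Presburger arithmetic with divisibility. That fragment is decidable \cite{lipshitz1978diophantine,lechner2015complexity}.

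\textbf{Correctness, and the main obstacle.} The core claim is an invariant: for every assignment $\sigma$ and every pair $u,v\in V$, $\Psi[\sigma]$'s equalities entail $u[\sigma]=v[\sigma]$ in EUF iff $\sigma\models\phi_{uv}$, where $\phi_{uv}$ is the final condition on the edge $(u,\phi_{uv},v)$ (or $\mathbf{false}$ if there is none).

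Soundness (the $\Leftarrow$ direction) is proved by induction over the rule applications. Each rule is justified by an earlier result: equivalence propagation by transitivity, Rule~(\ref{formula1}) by Theorem~\ref{theorem:Gen_Congruence}, Rule~(\ref{formula3}) by iterating $term_l=f^k(term_l)$, and Rule~(\ref{formula4}) by Theorem~\ref{theorem:th1}, which reduces two periodic equivalences to a single one of period $\gcd(k,k_2-k_1)$.

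Completeness (the $\Rightarrow$ direction) is the hard part. We fix $\sigma$ and compare with the standard congruence closure of $\Psi[\sigma]$. The plan is to show that the relation $\{(u,v) : \sigma\models\phi_{uv}\}$ is a congruence on $V$ that contains the equalities of $\Psi[\sigma]$; the least such congruence is the congruence closure, so it is contained in our relation.
\begin{itemize}
\item Reflexivity and symmetry are immediate.
\item Transitivity comes from the clique structure maintained by \texttt{EqProp}. One must check carefully that the skipped propagation of edges added in lines~\ref{algo:tran_start}--\ref{algo:tran_end} is genuinely redundant, as argued informally in the text.
\item Closure under $f^{k}$ is the crux. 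We split on whether the pair is periodic. In the non-periodic case Rule~(\ref{formula1}) together with superterm inheritance covers it. In the periodic case, the partition of an $f$-chain in $\Psi[\sigma]$ is determined by the gcd of all period differences, and Theorem~\ref{theorem:th1} with Rule~(\ref{formula4}) captures exactly that gcd.
\end{itemize}

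A subtle point is that the EUF congruence closure may merge nodes through intermediate terms that are not in $V$. These have to be simulated by the composition-aware superterm edges created by \texttt{Initialize} and \texttt{InheritST}. I would handle this by an induction on the length of the merge derivation in $\Psi[\sigma]$.

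With the invariant in place, $\Psi[\sigma]$ is EUF-satisfiable iff no disequality $u\neq v$ has $u[\sigma]=v[\sigma]$ entailed, which holds iff $\sigma\models\Phi$. By Definition~\ref{thm:eufn-sat-equivalence}, $\Psi$ is therefore satisfiable iff $\Phi$ is, and the latter is decidable by the second part.
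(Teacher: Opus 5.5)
Your high-level route matches the paper's. The paper's proof is essentially your second paragraph: the \emph{CCG} output $\Phi$ is equisatisfiable with $\Psi$, and since $\equiv_k$, $\neg\equiv_k$ and $\mathsf{gcd}$ all have existential encodings, $\Phi$ is a decidable existential sentence of Presburger arithmetic with divisibility. The paper takes equisatisfiability from Theorem~\ref{theorem:correctness} and termination from Theorem~\ref{theorem:termination}. The trouble is in the two parts you re-derive yourself.

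\textbf{Termination.} $E_{\succ}$ is not limited to one edge per ordered pair and symbol, because the depth is part of the label. With $x=y\land f^m(x)=f^p(y)$, \texttt{InheritST} gives $f^m(x)$ superterm edges to $y$ of depth $m$ and of depth $p$. So everything rests on the set of depth expressions being finite, and that set is larger than ``the $k_i$ and their pairwise differences.'' \texttt{InheritST} subtracts the depths of two edges into a common node, and the new edges are inherited and subtracted again on later merges. For example, after $x=y$ and then $f^a(x)=w$, an edge between $f^b(y)$ and $f^c(w)$ of depth $b-a-c$ appears. In the acyclic case a height function on nodes might bound this. Once periodic equivalences occur, however, heights exist only modulo a period, and nodes become their own superterms under satisfiable conditions: after $x=f^2(x)$, $x$ inherits the subterm $x$ of $f^2(x)$ at depth $2$. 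You give no reason the depths stay finite. Without a finite atom set, ``finitely many conditions up to equivalence'' is false, since $m=1, m=2,\dots$ are pairwise inequivalent. Your modified push rule also needs an argument that skipping an unchanged edge loses nothing, because its endpoints may have gained superterms since it was last popped.

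\textbf{Completeness.} This is the real content of equisatisfiability, and here you have a plan rather than a proof. The least-congruence argument must run on the subterm-closed universe of $\Psi[\sigma]$. That universe contains terms $f^j(x)$ with $0<j<\sigma(m)$ that have no node in $V$. A relation on $V$ that is ``closed under $f^k$'' is not a congruence on that universe, and constructing the extension is exactly where the periodic reasoning lives. ``Induction on the length of the merge derivation'' names this missing argument without supplying it.

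Your case split also understates what must be shown, because a non-periodic equality can create a period. For instance, $x=f^2(x)\land y=f^3(y)\land f(x)=f(y)$ entails $x=f(x)$, since the two cycles merge. Yet $(f(x),f(y))$ is not a periodic pair, Rule~(\ref{formula1}) only equates superterms of equal depth, and Rule~(\ref{formula4}) needs a second periodic pair already registered on $f\text{-}x$. You would have to show that inheritance together with Rules~(\ref{formula1})--(\ref{formula4}) still derives this. Likewise, the partition of an $f$-chain is not fixed by a gcd alone, because pre-periods matter: $f^6(x)=f^8(x)$ leaves $x,\dots,f^5(x)$ in singleton classes.

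The paper's proof of this theorem sidesteps all of this by citing Theorem~\ref{theorem:correctness}. A self-contained argument must prove this direction in full.
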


\begin{proof}
    For an arbitrary formula $\Psi$, the \emph{CCG} algorithm transforms it into $\Phi$ such that $\Psi$ is satisfiable if and only if
    $\Phi$ is satisfiable. The formula $\Phi$ is an existential sentence of Presburger arithmetic with divisibility. This is because
    Algorithm~\ref{algo:solve} introduces existential quantifiers only through the $\mathsf{modulo}$ and $\mathsf{gcd}$ operations.
    Although Line~\ref{algo:neq_end} involves negation, it introduces no universal quantifiers, as both $m \equiv_{a} n$ and its negation
    are expressible as existential sentences. The $\mathsf{gcd}$ function, having no negation semantics, similarly avoids universal
    quantification. Therefore, the formula $\Phi$ produced by Algorithm~\ref{algo:solve} is an existential sentence of Presburger
    arithmetic with divisibility. Since this logic is decidable~\cite{lechner2015complexity}, the satisfiability of $\Psi$ is also decidable.
\end{proof}

\subsection{Correctness and Complexity}  \label{section3:correct}
% \subsubsection{Correctness}
To prove the correctness of Algorithm \ref{algo:solve}, we first demonstrate its termination.
\begin{myThe}[Termination] \label{theorem:termination}
    Algorithm \ref{algo:solve} terminates on any finite input conjunct $\Psi$, where $\Psi$ is a conjunction of EUF$^n$ equalities and disequalities.
\end{myThe}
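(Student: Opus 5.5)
The plan is to show that every loop in Algorithm~\ref{algo:solve} performs only finitely many iterations. The argument rests on one invariant: the node set $V$ is fixed after \texttt{Initialize} and never grows. Only edges are added or relabelled. Since $\Psi$ is finite, $V$ is finite, so there are at most $|V|^2$ node pairs that can carry an equivalence edge. There are also at most $|V|^2\cdot|\mathbb{F}_\Psi|$ possible superterm edges, where $\mathbb{F}_\Psi$ is the finite set of unary symbols and compositions occurring in $\Psi$. Initialization is clearly finite, since each of its rules ranges over finitely many tuples of nodes. The final disequality loop and the single call to \texttt{SolvePresburgerWithD} are also finite; the latter terminates by the decidability result of \cite{lechner2015complexity} used in Theorem~\ref{theorem:decidability}.

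Next, the inner procedures terminate. \texttt{AddEq} is non-recursive. It relabels one edge and calls \texttt{InheritST}, which iterates over the finitely many superterm edges incident to two nodes and their superterms. Composition checks are bounded by path length in a finite graph, so they are finite as well. \texttt{EqProp} iterates over a frozen snapshot of $E_{\equiv}$ (as noted in the text), so it makes at most $|V|^2$ calls to \texttt{AddEq}. \texttt{CongruenceClosure} applies Rules~(\ref{formula1}), (\ref{formula3}) and (\ref{formula4}). Each rule quantifies over finitely many existing edges, so each call returns a finite $\widehat{E}_{\equiv}$.

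The main obstacle is the worklist loop. Each popped edge may produce new \texttt{EqProp} calls, each of which pushes again onto $W$. A naive count does not bound this, because conditions are disjunctions $\phi_{\mathrm{old}}\vee\phi_{\mathrm{new}}$ that can keep growing syntactically. To get a bound, I would make the update semantic. I would treat a call to \texttt{EqProp}/\texttt{AddEq} as effective only if $\phi_{\mathrm{new}}\not\rightarrow\phi_{\mathrm{old}}$; this is decidable in existential Presburger arithmetic with divisibility. Only effective calls push onto $W$. Then I need a well-founded measure on the edge labels. The conditions generated have a restricted shape: Boolean combinations of finitely many atoms built from the linear expressions in $\Psi$, comparisons $k_1=k_2$ and $k_1>k_2$, and congruences modulo $\gcd$'s of differences of these expressions. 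Differences of the finitely many depth expressions form a finite set. Iterated $\gcd$'s of a finite set of expressions also form a finite set, by Theorem~\ref{theorem:th1}, which collapses multiple periodic equivalences into one. So the atom set $\mathcal{A}$ is finite, and there are only finitely many inequivalent conditions over it: at most $2^{2^{|\mathcal{A}|}}$.

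Each effective update strictly weakens some edge label in this finite lattice, so the total number of effective updates, and hence of pushes onto $W$, is bounded by $|V|^2\cdot 2^{2^{|\mathcal{A}|}}$. The outer loop over equalities of $\Psi$ is finite, so the algorithm terminates. Completing the argument requires verifying carefully that every rule only produces atoms from $\mathcal{A}$. I would do this by induction on the number of rule applications, checking each of Rules~(\ref{formula1})--(\ref{formula4}) and the \texttt{InheritST} and composition steps, which only conjoin existing conditions or form differences and sums of the expressions already present on the edges.
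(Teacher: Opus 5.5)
Your treatment of initialization, the disequality loop, \texttt{AddEq}, \texttt{EqProp} and a single call of \texttt{CongruenceClosure} is fine. The bound on the worklist loop, however, which is the only nontrivial part, is a bound for a different algorithm. In Algorithm~\ref{algo:solve}, \texttt{EqProp} pushes its edge onto $W$ unconditionally (Line~\ref{algo:push_W}), and \texttt{AddEq} always disjoins; there is no test of $\phi_{\text{new}}\rightarrow\phi_{\text{old}}$. Your measure, strict weakening of a label in a finite lattice, decreases only on effective updates, so it does not rule out an infinite cascade of redundant ones. A popped edge whose label did not change still triggers \texttt{CongruenceClosure}, which may return edges that again change nothing, and so on. Nothing in your argument excludes this: superterm inheritance does create cycles in the underlying graph (e.g.\ $z\to k$ and $k\to z$ in Section~\ref{section2:example}), and these are cut off only by unsatisfiability of the accumulated conditions. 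Inserting your filter changes the procedure. You would also owe a proof that the filter leaves the final CCG unchanged. That is not obvious, because $[term_l]$, $[term_r]$ and the superterm edges may have grown since the old label was recorded.

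Independently, the finiteness of $\mathcal{A}$ is asserted rather than shown. Inheritance produces depth expressions that are differences of existing ones. The edge $z\to k$ already carries $m-n$; inheriting it across a later equivalence $k\equiv w$ with $t=f^{p}(w)$ gives $(m-n)-p$. The composition check produces sums. So the set of pairwise differences of the expressions in $\Psi$ is not closed under the rules, and the induction you sketch at the end would fail for it. The appeal to Theorem~\ref{theorem:th1} only tells you that the periods are $\gcd$'s of depth differences, so it presupposes the very finiteness in question.

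The idea you are missing, and the one the paper uses, is a height argument on superterm chains rather than a lattice argument on labels. Equivalence propagation and superterm inheritance are finite per call, because $[term_l]$, $[term_r]$ and the super/subterms of $term_l, term_r$ are finite. In the congruence step, every edge returned by \texttt{CongruenceClosure} for a popped edge relates superterms of that edge's endpoints. So the cascade started by one equation climbs superterm chains. Since $V$ is finite, and a chain that revisits a node would make a term its own superterm (an unsatisfiable condition), these chains have length at most $|V|$. With finite branching at each step, the cascade is a finite tree and $W$ empties. This bounds the loop of Algorithm~\ref{algo:solve} as written, with no subsumption filter. The same chain bound is also exactly what you would need to make your set of depth expressions finite.
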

\begin{proof}
    Since the size of the $\Psi$ is finite, it suffices to consider whether
    the processing of each individual formula terminates. Lines~\ref{algo:neq_start}–\ref{algo:neq_end} handle disequalities by collecting the
    associated equivalence conditions of the involved terms, which is clearly a terminating process.

    For equalities, the processing involves three steps. The first step, \textbf{equivalence propagation} (Line~\ref{algo:call_EqProp}),
    updates the conditional equivalence relations between elements in $[term_l]$ and $[term_r]$. As the number of nodes in $[term_l]$
    and $[term_r]$ is finite, this step terminates. Following the insertion of this equivalence edge, the second step,
    \textbf{superterm inheritance}, is performed. This step terminates because the number of superterms and subterms of $term_l$ and
    $term_r$ is finite. The third step, \textbf{conditional congruence closure}, derives new equivalences based on the congruence axiom~\ref{axiom4} and the generalized congruence theorem (Theorem~\ref{theorem:Gen_Congruence}), which may in turn trigger additional congruence-based deductions. However, since the number of nodes is finite and no term can be a superterm of itself (as such conditions are unsatisfiable), no infinite superterm chains can arise. Thus, this process is guaranteed to terminate. Therefore, we conclude that
    Algorithm~\ref{algo:solve} terminates.
\end{proof}

We now prove the soundness and completeness of Algorithm~\ref{algo:solve}, namely, that the input EUF$^n$ formula $\Psi$ is unsatisfiable
if and only if the existential Presburger formula $\Phi$ produced by the algorithm is unsatisfiable. To establish this, we
first prove Lemma~\ref{lemma:correctness}, as stated below:
\begin{myLem} \label{lemma:correctness}
    After Algorithm~\ref{algo:solve} completes, for every equivalence edge $(term_l, \phi, term_r)$ in the \emph{CCG} and any assignment
    $\sigma: \mathsf{N}\to\mathbb{N}$, the following invariant holds: $\phi$ is satisfiable under $\sigma$
    if and only if $term_l$ and $term_r$ belong to the same equivalence class in the congruence closure of $\Psi[\sigma]$.
\end{myLem}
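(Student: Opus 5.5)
We prove the two directions of the invariant separately, fixing an arbitrary $\sigma$ throughout. Write $\sim_\sigma$ for the congruence closure of the equalities of $\Psi[\sigma]$ over the terms of $\Psi[\sigma]$ closed under the superterms $f^{j}(t)$ that are relevant to $V$.

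\textbf{Soundness} (if $\sigma \models \phi$, then $term_l \sim_\sigma term_r$). We argue by induction on the sequence of edge insertions and updates performed by Algorithm~\ref{algo:solve}. The invariant is: whenever an edge $(u,\phi,v)\in E_{\equiv}$ or a superterm edge $(v,(\phi,f,k),u)\in E_{\succ}$ is present and $\sigma\models\phi$, then $u\sim_\sigma v$, respectively $v = f^{\sigma(k)}(u)$ modulo $\sim_\sigma$.

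The base cases are the \texttt{Initialize} rules. For superterm edges, the claim is immediate from Axioms~\ref{axiom_new1}--\ref{axiom_new2}. For the initial equivalence edges, it follows from Theorem~\ref{theorem:Gen_Congruence}.

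For the inductive step, we check each rule in turn:
\begin{itemize}
\item \texttt{AddEq} only disjoins conditions, so the invariant is preserved.
\item The transitive loop in \texttt{EqProp} yields $\phi_{\text{new}}\land\phi_1\land\phi_2$, which gives $u\sim_\sigma v$ by transitivity.
\item \texttt{InheritST} is handled by replacing equals by equals in a superterm relation.
\item Rule~(\ref{formula1}) follows from Theorem~\ref{theorem:Gen_Congruence} applied to $term_l\sim_\sigma term_r$ with $\sigma(k_1)=\sigma(k_2)$.
\item For Rule~(\ref{formula3}), iterate the congruence axiom: $term_l\sim_\sigma f^{\sigma(k)}(term_l)$ gives $f^{a}(term_l)\sim_\sigma f^{a+j\sigma(k)}(term_l)$ for all $j$, hence $k_1\equiv_k k_2$ suffices.
\item Rule~(\ref{formula4}) reduces to the same argument once we invoke Theorem~\ref{theorem:th1}, which collapses two periodic equivalences of periods $k$ and $k_2-k_1$ into one of period $\gcd(k,k_2-k_1)$, as illustrated in Example~\ref{example2}.
\end{itemize}

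\textbf{Completeness} (if $term_l\sim_\sigma term_r$, then $\sigma\models\phi$ for the final edge). Let $R_\sigma$ be the relation on $V$ given by $u\,R_\sigma\,v$ iff $u=v$ or some final edge $(u,\phi,v)$ has $\sigma\models\phi$. We show that $R_\sigma$ is an equivalence relation that contains the equalities of $\Psi[\sigma]$ and is closed under instantiated congruence. Minimality of $\sim_\sigma$ then gives $\sim_\sigma\subseteq R_\sigma$.

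\begin{itemize}
\item \emph{Containment of the equalities of $\Psi[\sigma]$.} Each equality of $\Psi$ enters via \texttt{EqProp} with condition $\mathbf{true}$.
\item \emph{Symmetry.} Edges are undirected.
\item \emph{Transitivity.} This comes from the loop over $[term_l]\times[term_r]$ in \texttt{EqProp}. We check that every edge added later is either pushed to $W$ or is transitively generated, and that the pre-loop snapshot still covers all pairs. The latter needs an auxiliary invariant stating that $[t]$ is a clique after every \texttt{EqProp}.
\item \emph{Congruence closure.} Suppose $u\,R_\sigma\,v$ and $u'=f^{j}(u)$, $v'=f^{j}(v)$ under $\sigma$. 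The edge witnessing $u\,R_\sigma\,v$ was processed from $W$ by \texttt{CongruenceClosure}, or else it is redundant by the remark after \texttt{EqProp}. The appropriate rule, (\ref{formula1}), (\ref{formula3}) or (\ref{formula4}), then produced a disjunct true under $\sigma$ for $(u',v')$. This requires that \texttt{InheritST} and the composition check have already materialised the superterm edges $u'\to u$ and $v'\to v$ with conditions true under $\sigma$. We establish that as a separate lemma: the superterm edges are complete for $\sigma$-instantiated functional relations among nodes of $V$.
\end{itemize}

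\textbf{Main obstacle.} We expect the main difficulty to be the completeness of the periodic case. There we must show that the partition the algorithm records (a single gcd period anchored at the base term of $f\text{-}term_l$) coincides with the true $\sim_\sigma$-classes on $f\text{-}term_l$ when several periodic and non-periodic equalities interact. One subtlety is terms below the anchor point, such as the tail before the cycle. Another is the order in which the equalities are processed. We will first try to prove a normal-form lemma: for each chain $f\text{-}t$, under $\sigma$ the $\sim_\sigma$-classes are eventually periodic, with a threshold given by the least depth $a$ such that $f^a(t)$ is involved in a periodic pair, and a period given by the gcd of all the differences. We then match this normal form against the conditions produced by Rules~(\ref{formula3}) and (\ref{formula4}), using Theorem~\ref{theorem:th1} for the gcd collapse.
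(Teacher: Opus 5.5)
Your soundness half is essentially the paper's own argument. The paper inducts over the processed equations and, within each equation, over the \texttt{EqProp} calls. It justifies each new disjunct by transitivity, by Rules~(\ref{formula1})--(\ref{formula3}), and by Theorem~\ref{theorem:th1} for Rule~(\ref{formula4}). Beyond initialization, it gives no separate argument that every $\sigma$-congruent pair acquires a $\sigma$-true disjunct.

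One step you inherit uncritically. The conclusion of Rule~(\ref{formula4}) conjoins only $\phi$, $\widehat{\phi}$ and $k_3 \equiv_{\gcd(k,k_2-k_1)} k_4$. It does not conjoin the condition under which the earlier pair $(term_{l'},term_{r'})$ is equivalent, yet Theorem~\ref{theorem:th1} needs both periodic equalities to hold under $\sigma$. If that earlier edge was itself conditional (e.g.\ produced by Rule~(\ref{formula1}) under some $p=q$), the rule as written can assert an equivalence under a $\sigma$ for which the earlier equality fails. You must show that condition is implied, or add it.

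Your completeness half uses a genuinely different route (minimality of the least congruence), but it does not yet close. There are four gaps.

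(i) Transitivity. Algorithm~\ref{algo:solve} loops over $([term_l]\setminus\{term_l\})\times([term_r]\setminus\{term_r\})$, not over $[term_l]\times[term_r]$. With the literal loop, $x=y$ followed by $y=z$ never creates an $x$--$z$ edge. So your clique invariant, and with it $\sim_\sigma\subseteq R_\sigma$, is false unless you explicitly amend the algorithm. Even after the fix, what you need is the $\sigma$-wise invariant $\sigma\models\phi(u,v)\wedge\phi(v,w)\Rightarrow\sigma\models\phi(u,w)$, not a graph-level clique.

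(ii) Congruence. \texttt{CongruenceClosure} runs once per popped edge, against the superterm edges present at that moment. \texttt{InheritST} keeps creating superterm edges afterwards; for example, the edge $z\to k$ with depth $m-n$ in Section~\ref{section2:example} exists only after $x=y$ is processed. An already-processed equivalence edge is revisited only if some rule happens to re-derive it. A lemma that $E_\succ$ is complete at the end therefore does not show that any rule ever fired on $(u',v')$. You need a use-list-style invariant tying each newly created superterm edge to a congruence computation against the current equivalence edges at its endpoint.

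(iii) Minimality. $\sim_\sigma$ is the least congruence on the subterm-closed terms of $\Psi[\sigma]$. These include intermediate terms $f^{j}(t)$ that are not nodes of the \emph{CCG}. Closure of $R_\sigma$ on $V$ alone does not license the minimality step. You must either extend $R_\sigma$ to those terms or build a model of $\Psi[\sigma]$ from it.

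(iv) The periodic case. You correctly flag this as the crux, but it remains a plan, and the rules do not match a per-$\sigma$ normal form directly. Rule~(\ref{formula3}) relates superterms of $term_l$ only to one another. Rule~(\ref{formula4}) presupposes that the earlier pair lies in $f\text{-}term_l$, and it carries that pair's superterm conditions inside $\widehat{\phi}$. The choice between the two rules is made once at run time rather than per $\sigma$. So if Rule~(\ref{formula4}) is selected for a new pair such as $x=f^3(x)$ after the earlier pair $f^2(x)=f^m(x)$, it contributes nothing under $\sigma(m)=2$, and the period-$3$ consequences must come from elsewhere. Showing that every pair these rules miss is recovered through transitivity and inheritance is exactly the missing completeness argument.
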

\begin{proof}
    We prove by induction that after processing each equation, the \emph{CCG} and the congruence closure of $\Psi[\sigma]$ obtained
    after this processing satisfy the invariant stated in Lemma~\ref{lemma:correctness}. The detailed proof can be found in Appendix~\ref{sec:appendix_correct}; below we outline the main ideas.

    First, during Algorithm~\ref{algo:solve}'s initialization (Section~\ref{section3:init}), equivalence edges are generated solely
    from terms obtained by applying the same function identically to identical terms (congruence axiom~\ref{axiom4} and
    Theorem~\ref{theorem:Gen_Congruence}). The initialization rules (\emph{i.e.}, the condition $k_1 = k_2$) define the initial
    congruence closure for all $\Psi[\sigma]$, hence the invariant stated in Lemma~\ref{lemma:correctness} holds at initialization.

    Assume the invariant of Lemma~\ref{lemma:correctness} holds after the first $k$ equations. For the $(k{+}1)$-th equation,
    the algorithm adds equivalence edges via \texttt{EqProp}, which first adds an edge based on its parameters (Lines~\ref{algo:call_EqProp}
    and~\ref{algo:eq_end}), then adds edges for transitivity (Line~\ref{algo:tran_end}). Parameters come from two sources:
    the $(k{+}1)$-th equation itself, and edges from \texttt{CongruenceClosure} (Line~\ref{algo:eq_end}). Edges from the equation satisfy
    the invariant; those from \texttt{CongruenceClosure} use Rules~(\ref{formula1})--(\ref{formula4}), whose correctness is
    shown later. The proof for transitivity edges (added by \texttt{AddEq}) is given in Appendix~\ref{sec:appendix_correct}.
\end{proof}

The \texttt{CongruenceClosure} function derives new equivalence edges using Rules~(\ref{formula1})--(\ref{formula4}).
While Rules~(\ref{formula1})--(\ref{formula3}) trivially satisfy the invariant in Lemma~\ref{lemma:correctness} (\emph{e.g.}, Rule~(\ref{formula1})
directly encodes the congruence axiom in solving $\Psi[\sigma]$), we now prove that Rule~(\ref{formula4}) also preserves this invariant.

\begin{myLem}   \label{lemma1}
    Assume $n > m$, $ term = f^{m}(term) = f^{n}(term) \leftrightarrow term = f^{\mathsf{gcd}(m,n)}(term) $.
\end{myLem}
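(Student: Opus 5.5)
The plan is to prove the two directions separately, working inside an arbitrary model of EUF$^n$ with $t$ standing for $term$. The only tools needed are Axioms~\ref{axiom_new1}--\ref{axiom_new2} together with congruence (Theorem~\ref{theorem:Gen_Congruence}). From these I first derive the \emph{composition law} $f^{a}(f^{b}(t)) = f^{a+b}(t)$ for all $a,b\in\mathbb{N}$, by induction on $a$ using Axiom~\ref{axiom_new2}. The rest of the argument is essentially the classical fact that the set of periods of a point under an iterated map is closed under addition and truncated subtraction.

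\textbf{Step 1 (period set).} Define $P = \{p \in \mathbb{N} \mid f^{p}(t) = t\}$. Then $0 \in P$ by Axiom~\ref{axiom_new1}. I would show:
\begin{itemize}
\item $P$ is closed under addition: if $p,q\in P$ then $f^{p+q}(t) = f^{p}(f^{q}(t)) = f^{p}(t) = t$, using the composition law and congruence.
\item $P$ is closed under subtraction of a smaller element: if $p,q \in P$ with $p > q$, then $t = f^{p}(t) = f^{p-q}(f^{q}(t)) = f^{p-q}(t)$, so $p-q\in P$.
\item $P$ is closed under multiples: $p\in P$ implies $cp \in P$, by induction on $c$ from additive closure.
\end{itemize}

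\textbf{Step 2 ($\leftarrow$).} Let $d=\mathsf{gcd}(m,n)$ and assume $t = f^{d}(t)$. Since $d \mid m$ and $d\mid n$, multiple closure gives $m,n\in P$. Hence $t = f^{m}(t)=f^{n}(t)$.

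\textbf{Step 3 ($\rightarrow$).} Assume $m,n\in P$. Run the subtractive Euclidean algorithm on $(n,m)$: repeatedly replace the larger number by the difference of the two. This keeps both numbers in $P$ by the subtraction property, and it terminates with $(d,0)$ or $(d,d)$ where $d=\mathsf{gcd}(m,n)$. Formally this is a strong induction on $m+n$, using $\mathsf{gcd}(a,b)=\mathsf{gcd}(a-b,b)$ for $a>b$. The induction handles the edge cases as follows:
\begin{itemize}
\item If $m=0$, then $d=n$ and the claim is immediate.
\item If the two numbers become equal, that common value is $d$.
\end{itemize}
This gives $d\in P$, i.e.\ $t=f^{\mathsf{gcd}(m,n)}(t)$.

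The main obstacle is not the number theory but justifying the composition law and the rewriting $f^{p-q}(f^{q}(t))$. This step needs congruence applied to $f^{p-q}$ with a \emph{variable} depth, which is exactly what Theorem~\ref{theorem:Gen_Congruence} provides. I would state the composition law as a small auxiliary claim, proved by induction on $a$ with Axiom~\ref{axiom_new2}, before Step 1. An alternative for Step 3 is to invoke Bézout's identity, $d = \alpha n - \beta m$ with $\alpha,\beta\in\mathbb{N}$, and then apply multiple closure and a single subtraction. I would keep the Euclidean induction instead, since it avoids signed coefficients.
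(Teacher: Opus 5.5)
Your proposal is correct and follows essentially the same route as the paper. The converse direction goes by divisibility, which the paper calls obvious. The forward direction uses Theorem~\ref{theorem:Gen_Congruence} to show that the difference of two periods is again a period, and then runs the subtractive Euclidean algorithm down to $\mathsf{gcd}(m,n)$. Your version is more rigorous than the paper's sketch in three places: you make explicit the composition law $f^{a}(f^{b}(t)) = f^{a+b}(t)$ that the paper uses silently, you formalize the Euclidean loop as a strong induction on $m+n$, and you handle the edge case $m=0$.
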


\begin{proof}
    The detailed proof can be found in Appendix~\ref{sec:appendix_th1}.
\end{proof}

% \begin{proof}
%     As the implication from right to left is obvious, we now prove the implication from left to right,
%     \emph{i.e.}, $ term = f^{m}(term) = f^{n}(term) \rightarrow term = f^{\mathsf{gcd}(m,n)}(term) $. Here we assume $m<n$.
%     By Theorem~\ref{theorem:Gen_Congruence}, the identity $term = f^m(term)$ implies $f^{n-m}(term) = f^n(term)$, and thus
%     by the transitivity of the equivalence relation:
%         $$ term = f^{n-m}(term) = f^{m}(term) = f^{n}(term) $$
%     If $ n-m \neq m $, we can can repeatedly apply Theorem~\ref{theorem:Gen_Congruence} until $ term = f^{a}(term) = f^{b}(term) \ s.t. \ a=b $ is derived.
%     This process utilizes a variant of the Euclidean algorithm~\cite{rosen2011elementary} to find the greatest common divisor of $m$ and
%     $n$. Thus, $ a = b = \mathsf{gcd}(m, n) $, and the formula on the right $ term = f^{\mathsf{gcd}(m,n)}(term) $ is derived.
% \end{proof}

\begin{myThe}   \label{theorem:th1}
Given two periodic equivalence relations $ term = f^{k}(term) $ and $ f^{m}(term) = f^{n}(term) $.
Then
$$ term = f^{k}(term) \wedge f^{m}(term) = f^{n}(term) \leftrightarrow term = f^{\mathsf{gcd}(k, (n-m))}(term). $$
\end{myThe}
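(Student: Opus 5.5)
We assume $k>0$ and, without loss of generality, $n \geq m$; the case $n=m$ is immediate because $\mathsf{gcd}(k,0)=k$ and the second conjunct is trivial. The plan is to reduce the statement to Lemma~\ref{lemma1}. To do so, we first \emph{transport} the equation $f^{m}(term)=f^{n}(term)$ down to the base term. This turns it into the periodic equation $term = f^{n-m}(term)$. All manipulations use Theorem~\ref{theorem:Gen_Congruence}, which lets us apply $f^{j}$ to both sides of an equation, together with Axioms~\ref{axiom_new1} and~\ref{axiom_new2}. We also use the additivity $f^{a}(f^{b}(x)) = f^{a+b}(x)$, which follows from those axioms by a routine induction on $a$.

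A preliminary step is to show that $term = f^{k}(term)$ implies $term = f^{ak}(term)$ for every $a \in \mathbb{N}$. We prove this by induction on $a$. The inductive step applies $f^{k}$ to $term = f^{ak}(term)$ via Theorem~\ref{theorem:Gen_Congruence}, which gives $f^{k}(term) = f^{(a+1)k}(term)$, and then uses transitivity with $term = f^{k}(term)$.

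\textbf{Forward direction.} Choose $j = mk$, so that $j \geq m$ and $k \mid j$. Applying $f^{j-m}$ to both sides of $f^{m}(term)=f^{n}(term)$ and using additivity gives $f^{j}(term) = f^{j+(n-m)}(term) = f^{n-m}(f^{j}(term))$. The preliminary step gives $f^{j}(term)=term$, and substituting this on both sides yields $term = f^{n-m}(term)$. We now have $term = f^{k}(term)$ and $term = f^{n-m}(term)$ simultaneously.
\begin{itemize}
\item If $k = n-m$, then $\mathsf{gcd}(k,n-m)=k$ and there is nothing to prove.
\item Otherwise, writing the two exponents in increasing order, Lemma~\ref{lemma1} yields $term = f^{\mathsf{gcd}(k,n-m)}(term)$.
\end{itemize}

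\textbf{Backward direction.} Let $d=\mathsf{gcd}(k,n-m)$. By the preliminary step applied with period $d$, we get $term = f^{ad}(term)$ for all $a$. Taking $ad = k$ gives the first conjunct. Taking $ad = n-m$ gives $term = f^{n-m}(term)$. Applying $f^{m}$ to this equation gives $f^{m}(term) = f^{n}(term)$, which is the second conjunct.

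The main obstacle is the transport step in the forward direction. The equation $f^{m}(term)=f^{n}(term)$ concerns $f^{m}(term)$, not $term$ itself, and congruence only moves equations upward. The trick is to go up to a multiple of the period $k$, which lands back on $term$. This is precisely where the hypothesis $term = f^{k}(term)$ is needed, and it is why the result need not hold without that hypothesis. The remaining steps are routine inductions on composition depth.
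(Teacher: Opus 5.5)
Your proof is correct and follows the same route as the paper. The backward direction is identical. In the forward direction, both arguments lift $f^{m}(term)=f^{n}(term)$ to a multiple of $k$, so that the hypothesis $term=f^{k}(term)$ brings the equation back to $term$, and then both finish with Lemma~\ref{lemma1}. Your landing point $f^{mk}(term)$ gives $term=f^{n-m}(term)$ directly, which spares you the paper's closing identity $\gcd(k,c_1k-(n-m))=\gcd(k,n-m)$. Your explicit assumption $k>0$ is genuinely needed, since the forward direction fails for $k=0$ and $m\geq 1$.
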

\begin{proof}
       The detailed proof can be found in Appendix~\ref{sec:appendix_th1}.
\end{proof}

Assuming that all current \emph{CCG} edges satisfy the invariant in Lemma~\ref{lemma:correctness}, Theorem~\ref{theorem:th1} ensures that
Rule~(\ref{formula4}) precisely characterizes the equivalence class partitioning induced by multiple periodic equivalences.
Therefore, the addition of Rule~(\ref{formula4}) also preserves the invariant in Lemma~\ref{lemma:correctness}.

\begin{myThe}[Soundness and Completeness of Algorithm \ref{algo:solve}] \label{theorem:correctness}
    For any finite conjunctive formula $\Psi$ of EUF$^n$ equalities and disequalities, let $\Phi$ be the existential formula
    in Presburger arithmetic with divisibility generated by Algorithm~\ref{algo:solve}. Then
    \[
        \Psi \text{ is satisfiable } \iff \Phi \text{ is satisfiable}.
    \]
\end{myThe}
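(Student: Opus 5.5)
The plan is to reduce the theorem to Lemma~\ref{lemma:correctness}, using Definition~\ref{thm:eufn-sat-equivalence}, which defines satisfiability of $\Psi$ through its EUF instances $\Psi[\sigma]$. The key tool on the EUF side is the classical fact that a conjunction of EUF literals is satisfiable iff no disequality $s \neq t$ of it has $s$ and $t$ in the same class of the congruence closure of its equalities. I will take this fact from the congruence closure literature cited in the introduction. Both directions of the theorem then come down to comparing, for a fixed $\sigma$, the conjunct $\Phi = \bigwedge \neg\phi$ built in Lines~\ref{algo:neq_start}--\ref{algo:neq_end} with this criterion.

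First, termination (Theorem~\ref{theorem:termination}) ensures that $\Phi$ is well defined. Theorem~\ref{theorem:decidability} ensures it is an existential Presburger-with-divisibility formula whose free variables are the natural-number variables of $\Psi$; the auxiliary existentials come only from $\equiv_k$ and $\mathsf{gcd}$. Hence a model of $\Phi$ gives an assignment $\sigma:\mathsf{N}\to\mathbb{N}$, and conversely every $\sigma$ can be tested against $\Phi$ by choosing the auxiliary witnesses canonically (the remainder $c$, and the Bézout witness $b$).

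\textbf{($\Leftarrow$).} Suppose $\sigma \models \Phi$ and consider any disequality $term_l \neq term_r$ in $\Psi$. There are two cases.
\begin{itemize}
\item If an edge $(term_l,\phi,term_r)$ exists, then $\sigma \models \neg\phi$, so by Lemma~\ref{lemma:correctness} the two terms lie in different classes of the congruence closure of $\Psi[\sigma]$.
\item If no edge exists, I treat this as $\phi=\mathbf{false}$, matching the convention in \texttt{AddEq}. I then argue that the invariant covers this case too: the lemma's invariant, read with absent edges as $\mathbf{false}$, says the terms are never equivalent. The induction in the appendix has to establish exactly this. Note that if $term_l$ and $term_r$ are the same syntactic term, Line~\ref{algo:find_eq} finds no edge, so this case needs care. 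I would treat identical terms via reflexivity as an edge with condition $\mathbf{true}$, so that $\Phi$ becomes unsatisfiable, as it must.
\end{itemize}
In either case no disequality of $\Psi[\sigma]$ is violated, so $\Psi[\sigma]$ is EUF-satisfiable by the classical criterion, and $\Psi$ is satisfiable.

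\textbf{($\Rightarrow$).} Suppose $\Psi[\sigma]$ is EUF-satisfiable. Then for each disequality, $term_l$ and $term_r$ are not congruent in the closure of $\Psi[\sigma]$. By the \emph{only if} direction of Lemma~\ref{lemma:correctness}, every recorded condition $\phi$ is false under $\sigma$, so $\sigma\models\neg\phi$ for each conjunct, and hence $\sigma$, extended by the canonical witnesses, satisfies $\Phi$.

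The main obstacle is that the whole argument rests on Lemma~\ref{lemma:correctness} holding exactly, in both directions, for all term pairs mentioned in disequalities, including pairs with no edge. If I had to shore this up rather than cite it, I would first check that every pair of terms appearing in $\Psi$ ends up either with an edge or provably inequivalent under every $\sigma$. I would also check that the $\mathsf{gcd}$ and modulus encodings are faithful when the modulus expression evaluates to $0$ under $\sigma$, for example $k_2-k_1$. There, $\equiv_0$ should mean equality, and I would adjust the divisibility encoding so that it agrees with Theorem~\ref{theorem:th1} in this degenerate case.
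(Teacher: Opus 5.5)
Your proposal follows the paper's argument. The paper proves the theorem in one line, saying that Lemma~\ref{lemma:correctness} makes the \emph{CCG} capture exactly the congruence closure of every $\Psi[\sigma]$, and you spell out the implicit bridge through the standard criterion that a conjunction of EUF literals is satisfiable iff no disequality relates two terms of the same congruence class, checking both directions against $\Phi=\bigwedge\neg\phi$. Your reading of absent edges as $\mathbf{false}$ (and of identical terms as a reflexive $\mathbf{true}$ edge) is the strengthened form of Lemma~\ref{lemma:correctness} that the paper tacitly uses when it says the \emph{CCG} ``precisely captures all equivalence relations'', so your argument depends on that lemma exactly as the paper's does and adds no gap; your modulus-$0$ worry is moot, since the superterm conditions in $\widehat{\phi}$ force $k>0$ and $k_2>k_1$ in every disjunct containing $\equiv_k$ or $\mathsf{gcd}(k,k_2-k_1)$.
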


Lemma~\ref{lemma:correctness} establishes that the \emph{CCG} precisely captures all equivalence relations in $\Psi[\sigma]$. This
directly implies the correctness of Algorithm~\ref{algo:solve}, as stated in Theorem~\ref{theorem:correctness}.

We now analyze the complexity of Algorithm~\ref{algo:solve}.
\begin{myThe}[Complexity] \label{theorem:complexity}
    Algorithm~\ref{algo:solve} decides the satisfiability of a conjunction $\Psi$ of
    $\mathrm{EUF}^n$ equalities and disequalities of length $n$ in \textup{2NEXPTIME}.
\end{myThe}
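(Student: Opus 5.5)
The plan is to split the cost into two phases. First we bound the time and the size of the output of the graph-construction phase of Algorithm~\ref{algo:solve}, i.e.\ everything before \texttt{SolvePresburgerWithD}. Second, we feed the size of the produced formula $\Phi$ into the NEXPTIME bound of~\cite{lechner2015complexity} for existential Presburger arithmetic with divisibility. The target is to show that $\Phi$ has size at most $2^{p(n)}$ for some polynomial $p$ and is produced in time $2^{p(n)}$. A nondeterministic procedure running in time $2^{2^{O(p(n))}}$ then decides $\Phi$, which gives the 2NEXPTIME bound.

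For the first phase we begin with the graph itself. The node set $V$ consists of the terms of $\Psi$, so $|V| \le n$. The set $\mathbb{F}$ of relevant function symbols, including the composites needed by the composition check of Section~\ref{section3:init}, is bounded polynomially in $n$, since only compositions occurring syntactically in $\Psi$ matter. Hence $|E_{\equiv}| = O(n^2)$ and $|E_{\succ}| = O(n^2 |\mathbb{F}|)$, which is polynomial. The number of \emph{distinct} edges is therefore polynomial; what can grow is the size of the condition labels. Each call to \texttt{AddEq} replaces $\phi_{\text{old}}$ by $\phi_{\text{old}} \vee \phi_{\text{new}}$, where $\phi_{\text{new}}$ is a conjunction of at most a constant number (at most seven, from Rule~(\ref{formula4})) of existing labels together with an atom such as $k_1 = k_2$ or $k_3 \equiv_{\mathsf{gcd}(\cdot,\cdot)} k_4$.

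Next we bound the number of label updates. The worklist $W$ only receives edges produced by \texttt{EqProp}. Per the termination argument (Theorem~\ref{theorem:termination}), each edge of $E_{\equiv}$ is pushed to $W$ a number of times bounded by the polynomially many (edge, rule instance) pairs that can fire. We charge each push to a pair of superterm edges together with a source equivalence edge. This yields polynomially many \texttt{AddEq} calls, say $N = n^{O(1)}$.

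The main obstacle is that a label can embed other labels, so nesting depth rather than count governs size. If we store labels as DAGs, so that each new label refers to the existing ones by name, every update adds $O(1)$ nodes plus polynomial-size atoms. Expanding to a tree is avoided by introducing a fresh Boolean/definitional abbreviation per label. Even a naive tree expansion has depth at most $N$ and branching at most a constant $c$, giving size at most $c^{N} = 2^{n^{O(1)}}$.

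Each $\equiv_k$ and $\mathsf{gcd}$ atom is expanded via the existential encodings of Section~\ref{section3:congruence}, which adds only constantly many quantified variables and constant size per atom. Negation in Line~\ref{algo:neq_end} is pushed through using the existential form of $\neg(m \equiv_k n)$. The negated $\mathsf{gcd}$ does not arise, since $\mathsf{gcd}$ appears only as the modulus and can be named by an existential variable fixed positively. Consequently $|\Phi| \le 2^{n^{O(1)}}$, and the whole construction runs in deterministic time $2^{n^{O(1)}}$. The DNF step is unnecessary because the input is already a conjunct.

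Finally, the NEXPTIME decision procedure of~\cite{lechner2015complexity} applied to an input of size $2^{n^{O(1)}}$ runs in nondeterministic time $2^{2^{n^{O(1)}}}$. Adding the exponential preprocessing preserves this bound, so the algorithm is in 2NEXPTIME.

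The delicate point I would double-check is the exponential bound on $|\Phi|$, that is, whether the polynomial bound on worklist pushes is justified. If not, I would fall back on observing that each label is a monotone Boolean combination over a polynomial set of atoms $\{k_i = k_j,\ k_i \equiv_{d} k_j,\ k_i > k_j\}$ over the linear expressions in $\Psi$. Such a combination is equivalent to a DNF of at most $2^{\mathrm{poly}(n)}$ terms, which again gives an exponential-size $\Phi$.
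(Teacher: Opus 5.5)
The point you flagged as delicate is where the argument breaks, and the fallback does not rescue it. Your overall architecture matches the paper's: show $\Phi$ has single-exponential size, then apply the NEXPTIME bound of~\cite{lechner2015complexity}. However, the claim of $N = n^{O(1)}$ calls to \texttt{AddEq} is unjustified and conflicts with how Algorithm~\ref{algo:solve} runs.

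\textbf{Why the polynomial bound on $N$ fails.} Nothing in the algorithm deduplicates work. \texttt{EqProp} pushes its edge onto $W$ unconditionally, and every pop of an edge re-fires \emph{all} rule instances of \texttt{CongruenceClosure} attached to it. Charging pushes to (source edge, superterm pair) therefore only bounds the pushes of an edge by a sum of the numbers of \emph{pops} of its source edges. That recursion unrolls into the number of upward derivation chains. Theorem~\ref{theorem:termination} gives only finiteness, not a polynomial bound.

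Initialization puts conditional superterm edges in both directions between any $f^{k_1}(t)$ and $f^{k_2}(t)$. So these chains range over orderings of depth expressions, and this is exactly what the paper counts. There are at most $n!$ superterm orderings, and each chain needs at most $n-1$ unions. This gives $O(n!\cdot n)$ \texttt{EqProp} calls and a $\Phi$ of length $O(n!)$, i.e.\ $2^{O(n\log n)}$.

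With $N$ factorial, your tree-expansion estimate $c^N$ is doubly exponential, and the final bound becomes triply exponential. The DAG/definitional-abbreviation variant survives only once you have a single-exponential bound on $N$. That bound is precisely the missing ordering-count argument.

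\textbf{Why the fallback also fails.} Nothing shows the atoms form a polynomial set. Superterm depths arise by composing edges along chains and by taking differences in \texttt{InheritST}. For example, equalities $x_i = f^{a_i}(x_{i-1})$ and $x_i = f^{b_i}(x_{i-1})$ for $1\le i\le k$ can give rise to $2^k$ distinct depth sums on superterm edges down to $x_0$. Over exponentially many atoms, the monotone-DNF bound is doubly exponential.

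\textbf{A second gap: the side-condition checks.} You claim the construction runs in deterministic time $2^{n^{O(1)}}$, but this ignores the side conditions ``$\widehat{\phi}$ is satisfiable'' in Rules~(\ref{formula1})--(\ref{formula4}). After superterm inheritance these conditions contain equivalence conditions, possibly with divisibility atoms. Each check is then an instance of existential Presburger arithmetic with divisibility, for which the available upper bound is NEXPTIME. A naive deterministic simulation of many such checks on exponential-size labels costs far more than $2^{n^{O(1)}}$, even beyond 2NEXPTIME.

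The paper avoids this by observing that it suffices to check only the ordering constraints along the superterm chain (e.g.\ $m_1 > m_2 > \cdots$). Those constraints are what prevent a term from becoming its own superterm. Each such check is an NP problem, and $O(n!\cdot n)$ of them fit in exponential time. You need both this observation and a single-exponential bound on the number of \texttt{EqProp} calls before the final NEXPTIME step yields 2NEXPTIME.
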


\begin{proof}
    For an EUF$^n$ formula $\Psi$ and its instantiation yielding the disjunctive form of infinitely many EUF formulas $\Psi'$ (\emph{e.g.}, Equation~(\ref{equality:EUFn_EUF})), each EUF$^n$ term with a \emph{parametric composition depth} corresponds to infinitely many EUF terms. However, the number of possible superterm orderings is bounded by $n!$. For example, three terms $f^{m_1}(x)$, $f^{m_2}(x)$, and $f^{m_3}(x)$ yield $3!$ possible orderings of $m_1$, $m_2$, and $m_3$, each inducing a distinct superterm chain. For each ordering, a chain of length $n$ requires at most $(n-1)$ \textit{Union} operations, each corresponding to one \texttt{EqProp} execution. Consequently, Algorithm~\ref{algo:solve} executes $O(n! \cdot n)$ \texttt{EqProp} operations, where each \texttt{EqProp} invocation calls \texttt{AddEq} $O(n^2)$ times, and each \texttt{AddEq} call updates $O(n^2)$ superterm relations at Line~\ref{algo:inherit}.
    The cumulative cost of these internal operations is $O(n! \cdot n^5)$, which is asymptotically dominated by the subsequent solving of the generated formula and thus does not affect the overall complexity bound.

    % Each ordering is associated with corresponding constraints (\emph{e.g.}, $m_1 > m_2 > m_3$), which are incorporated into the equivalence conditions between terms. Upon termination of Algorithm~\ref{algo:solve}, these conditions may form an existential Presburger formula with divisibility of length $O(n!)$. Since solving such formulas is NEXPTIME~\cite{lechner2015complexity}, the complexity of solving the resultant formula is 2NEXPTIME (\emph{i.e.}, solvable in $O(2^{2^{p(n)}})$ time on a nondeterministic
    % Turing machine, for some polynomial $p$).

    % Each ordering is associated with a corresponding constraint (\emph{e.g.}, $m_1 > m_2 > m_3$), which is incorporated into the equivalence conditions between terms. Upon termination of Algorithm~\ref{algo:solve}, these conditions may give rise to an existential Presburger formula with divisibility whose length is $L = O(n!)$. The satisfiability problem for such formulas is in NEXPTIME~\cite{lechner2015complexity}; that is, for a formula of length $L$, it can be solved in time $2^{q(L)}$ for some polynomial $q$. Substituting $L = O(n!)$ yields a running time of \[2^{q(O(n!))}.\] Since \[n! = 2^{\Theta(n\log n)},\] and $q$ is a polynomial, we have \[q(O(n!)) = 2^{O(n\log n)}.\] Hence the overall running time is bounded by \[2^{2^{O(n\log n)}} \subseteq 2^{2^{p(n)}}\] for some polynomial $p$. Therefore, the resulting decision procedure runs in $2$NEXPTIME.

    Each ordering is associated with corresponding constraints (\emph{e.g.}, $m_1 > m_2 > m_3$), which are incorporated into the equivalence conditions between terms. Upon termination of Algorithm~\ref{algo:solve}, these conditions may form an existential Presburger formula with divisibility of length $O(n!)$. The satisfiability problem for such formulas is in NEXPTIME~\cite{lechner2015complexity}, meaning that, on a nondeterministic Turing machine, a formula of length $L$ can be decided in time $2^{p(L)}$ for some polynomial $p$. In our case, $L = O(n!)$, so the solving time becomes $2^{p(O(n!))} = 2^{O((n!)^c)}$ for some constant $c$. Since $n! = 2^{O(n \log n)}$, we have $(n!)^c = 2^{O(n \log n)}$. Thus, the total time is $2^{2^{O(n \log n)}}$, \emph{i.e.}, $2^{2^{p(n)}}$ for some polynomial $p$. Hence, the complexity of solving the resultant formula is 2NEXPTIME (\emph{i.e.}, solvable in $O(2^{2^{p(n)}})$ time on a nondeterministic Turing machine, for some polynomial $p$). Additionally, each call to \texttt{CongruenceClosure} (Line~\ref{algo:Congruence}) must verify the satisfiability of superterm conditions. Due to superterm inheritance, these conditions incorporate term equivalence relations, potentially involving divisibility. However, checking only the satisfiability of the ordering relation along the superterm chain (\emph{e.g.}, $m_1>m_2>\cdots$) suffices to ensure that no term becomes its own superterm, thereby guaranteeing termination. Since each such check requires solving an existential Presburger formula (NP-complete) and $O(n! \cdot n)$ checks are required, this step is in EXPTIME. Thus, the overall complexity of Algorithm~\ref{algo:solve} is 2NEXPTIME.
\end{proof}

%% file: figure/periodic_eq.tex
\tikzset{
    squarednode/.style = {rectangle, rounded corners, draw = black, fill=gray!5, very thick, text centered, align=center},
    originnode/.style = {circle, rounded corners, draw = black, fill=white!5,  text centered, align=center, minimum size=9mm, font=\bfseries\Large},
    arrow/.style       = {->, >=stealth, draw = black, line width=0.2mm},
    line/.style        = {-, draw = black, >=stealth,},
    dashed arrow/.style = {arrow, dashed},
    dashed line/.style  = {line, dashed},
    on background layer/.style={execute at begin scope={\scoped[on background layer]}},
    shadownode/.style={on background layer, draw=none, fill=lightgray!50!white, rounded corners=12pt, thick}
}

\begin{subfigure}{0.46\linewidth}
    \centering
    {\scriptsize
        \begin{tikzpicture}
            \begin{scope}[shift={(0,0)}]
                \node[shadownode, minimum width=45mm, minimum height=12mm] (shadowA) {};

                \node[originnode] (x) at ([xshift=8mm]shadowA.west)     {$x$};
                \node[originnode] (y) at ([xshift=-8mm]shadowA.east)    {$y$};
            \end{scope}

            \begin{scope}[shift={(0,1.5)}]
                \node[shadownode, minimum width=45mm, minimum height=12mm] (shadowB) {};

                \node[originnode, font=\footnotesize] (fx) at ([xshift=8mm]shadowB.west)     {$f(x)$};
                \node[originnode, font=\footnotesize] (fy) at ([xshift=-8mm]shadowB.east)    {$f(y)$};
            \end{scope}

            \begin{scope}[shift={(0,3)}]
                \node[shadownode, minimum width=45mm, minimum height=12mm] (shadowC) {};

                \node[originnode] (ffx) at ([xshift=8mm]shadowC.west)     {$\cdots$};
                \node[originnode] (ffy) at ([xshift=-8mm]shadowC.east)    {$\cdots$};
            \end{scope}

            \draw[arrow] (fx.south) -- node[left, align=center]{$f$} (x.north);
            \draw[arrow] (ffx.south) -- node[left, align=center]{$\cdots$} (fx.north);
            \draw[arrow] (fy.south) -- node[left, align=center]{$f$} (y.north);
            \draw[arrow] (ffy.south) -- node[left, align=center]{$\cdots$} (fy.north);

            % \draw[line] (x.east) --  (y.west);
            % \draw[line] (fx.east) --  (fy.west);
            % \draw[line] (ffx.east) --  (ffy.west);
        \end{tikzpicture}
    }
    \caption{Without superterm: Hierarchical structure}
    \label{fig:periodic_subfig1}
\end{subfigure}
\begin{subfigure} {0.46\linewidth}
    \centering
        {\scriptsize
        \begin{tikzpicture}
            \begin{scope}[shift={(0,0)}]
                \node[shadownode, minimum width=45mm, minimum height=12mm] (shadowA) {};

                \node[originnode] (x) at ([xshift=8mm]shadowA.west)     {$x$};
                \node[originnode, font=\tiny] (ffx) at (shadowA.center)    {$f^2(x)$};
                \node[originnode] (fnx) at ([xshift=-8mm]shadowA.east)    {$\cdots$};
            \end{scope}

            \begin{scope}[shift={(0,3)}]
                \node[shadownode, minimum width=45mm, minimum height=12mm] (shadowB) {};

                \node[originnode, font=\footnotesize] (fx) at ([xshift=8mm]shadowB.west)     {$f(x)$};
                \node[originnode, font=\tiny] (fffx) at (shadowB.center)    {$f^3(x)$};
                \node[originnode] (fnnx) at ([xshift=-8mm]shadowB.east)    {$\cdots$};
            \end{scope}

            \draw[arrow] (fx.south) -- node[left, align=center]{$f$} (x.north);
            \draw[arrow] (ffx.west) .. controls +(left:3mm) and +(right:3mm) ..  node[left, align=center]{$f$} (fx.east);
            \draw[arrow] (fffx.south) -- node[left, align=center]{$f$} (ffx.north);
            \draw[arrow] (fnx.west) .. controls +(left:3mm) and +(right:3mm) ..  node[left, align=center]{$f$} (fffx.east);
            \draw[arrow] (fnnx.south) -- node[left, align=center]{$f$} (fnx.north);
        \end{tikzpicture}
    }
    \caption{With superterm: Partitioned structure}
    \label{fig:periodic_subfig2}
\end{subfigure}

%% file: section/sec4.tex
% \section{Analysis of EUF$^n$'s Practical Encoding Capacity}  \label{sec:section4}
\section{Applications of EUF$^n$} \label{sec:section4}
This section demonstrates the applications of EUF$^n$ through two case studies: the single-source single-target
interleaved Dyck reachability problem and the verification of uninterpreted programs.

\input{section/sec4_Dyck.tex}

\subsection{Verification of Uninterpreted Programs}
In this subsection, we characterize the class of uninterpreted programs that are encodable in $\mathrm{EUF}^n$ and, on this basis,
identify a decidable subclass from the perspective of solving $\mathrm{EUF}^n$ formulas.

\subsubsection{Background of Uninterpreted Program Verification}
Uninterpreted programs~\cite{mathur2019decidable} are a class of programs that work over infinite domains and relate closely to EUF.
Functions in these programs are represented solely by function symbols, serving as abstractions for actual programs
~\cite{1997Automatic, CLU, lee2014unbounded, fan2024leveraging}, enabling quick verification of program properties.
Despite the abstraction, the verification of uninterpreted programs remains undecidable~\cite{mathur2019decidable}.
Recently, a decidable subclass known as \emph{coherent} programs~\cite{mathur2019decidable} has been identified, which is applicable to
verifying heap manipulation programs~\cite{mathur2019deciding}. However, very few real-world programs meet the
stringent \emph{coherence} requirements.

\noindent\textbf{Syntax of Uninterpreted Program.}
Let $ \Sigma_{UP} $ be a finite symbolic set,  $\Sigma_{f} \subseteq \Sigma_{UP} $ represents function set, $ \Sigma_{V} \subseteq \Sigma_{UP}$ represents variable set. The syntax definition of the uninterpreted program is shown in Figure \ref{UP syntax}, where $\mathtt{f} \in \Sigma_{f}$ denotes a function, $ \mathtt{x},\mathtt{y} \in \Sigma_{V} $ denotes a variable, and $\vec{\mathtt{z}}$ is a tuple of variables.
\begin{wrapfigure}{r}{0.55\textwidth}
    \begin{center}
    \vspace{6pt}
    {
    \begin{tabular}{rcl}
    $\mathtt{stmt} $ & $::=$ & $\mathtt{skip} \mid \mathtt{x:=y} \mid \mathtt{x:=f}(\vec{\mathtt{z}})$ \\
    &&$\mid \mathtt{assume(cond)} \mid \mathtt{assert(cond)}$ \\
    &&$\mid \mathtt{stmt} ; \mathtt{stmt}$\\
    &&$\mid {\mathtt{if}}\ \mathtt{(cond)}\ {\mathtt{then}}\ \mathtt{\{stmt;\}}\ {\mathtt{else}}\ \mathtt{\{stmt;\}}$ \\
    &&$\mid {\mathtt{while}}\ \mathtt{(cond)}\ \mathtt{\{stmt;\}}$ \\
    $\mathtt{cond}$ & $::=$ & $\mathtt{x=y} \mid \mathtt{cond \land cond} \mid \mathtt{\neg cond  }
    $ \\
    \end{tabular}
    }
    \end{center}
    % \vspace{-10pt}
    \caption{The syntax of uninterpreted programs.}
    \label{UP syntax}
    % \vspace{-12pt}
\end{wrapfigure}

% \begin{figure}[h]
%     \begin{center}
%     {
%     \begin{tabular}{rcl}
%     $\stdef{\mathtt{stmt}} $ & $::=$ & $\mathtt{skip} \mid \mathtt{x:=y} \mid \mathtt{x:=f}(\bar{\mathtt{z}})$ \\
%     &&$\mid {\mathtt{assume}}(\stdef{cond}) \mid {\mathbf{assert}}(\stdef{cond})$ \\
%     &&$\mid \stdef{\mathtt{stmt}} ; \stdef{\mathtt{stmt}}$\\
%     &&$\mid {\bf{if}}\ (\stdef{cond})\ {\bf{then}}\ \stdef{\mathtt{stmt}}\ {\bf{else}}\ \stdef{\mathtt{stmt}}$ \\
%     &&$\mid {\bf{while}}\ (\stdef{cond})\ \stdef{\mathtt{stmt}}$ \\
%     &&\\
%     $\stdef{cond}$ & $::=$ & $x=y %\mid {\bf{\neg}} \stdef{cond}
%      \mid \stdef{cond} {\bf{\wedge}} \stdef{cond} \mid {\bf{\neg}} \stdef{cond}
%     $ \\
%     \end{tabular}
%     }
%     \end{center}

%     \caption{The syntax of uninterpreted programs.}
%     \label{UP syntax}
% \end{figure}

In the above definition, a $\mathtt{skip}$ statement performs no operation and proceeds directly to the next statement.
The program includes two types of assignment statements: variable assignment and function assignment. As well as assume statements,
assert statements, and three program structures: sequential structure, $\mathtt{if}$-branch structure, and $\mathtt{While}$-loop
structure. Conditional statements include equivalence relations, as well as corresponding conjunctions and negations. Definitions
of the semantics and verification problem for uninterpreted programs can be found in Appendix~\ref{sec:appendix_background}.

% \begin{figure}[!ht]
%       \scalebox{0.8}{\input{figure/encoding_UP.tex}}
%     \caption{An example program and the formula encoding by $\text{EUF}^{n}$.}
%     \label{fig:exam_program}
%     \vspace{-10pt}
% \end{figure}

%Original Example
%%=========================================================================================================================== %%

\subsubsection{An Example of Verifying Uninterpreted Programs Using EUF$^n$}
EUF$^n$ can be used to encode verification problems for partially uninterpreted programs, including some that do not satisfy the
\emph{coherence} requirement. We now present an example of verifying a \emph{non-coherent} program using EUF$^n$. As shown in
Figure~\ref{fig:exam_program}, the goal is to verify whether the assertion at the end of the program holds.
The program contains two loops with identical behavior, they can be seen as two versions of the same code, differing only by
variable renaming—specifically, $\mathtt{x}$ to $\mathtt{y}$ and $\mathtt{z}$ to $\mathtt{k}$. The assertion at the end of the
program expresses the functional equivalence of these two code segments. However, since terms computed in the first loop are
recomputed in the second without being stored in variables, the program is \emph{non-coherent}~\cite{mathur2019decidable} and
cannot be verified using the methods in~\cite{mathur2019decidable}.

% The program can also be viewed as an abstraction of
% traversing a linked list starting from different pointers, where $\mathtt{x}$ and $\mathtt{y}$ initially point to the same node.
% The assertion states that after the same number of $\mathtt{next}$ traversals, both pointers still refer to the same node.

We now encode the program using $\mathrm{EUF}^n$. First, for each program variable, we introduce a formula variable to represent its initial value, such as $x_0$, $y_0$, $z_0$, and $k_0$.
Next, for assignment statements, the left-hand variable is rewritten, so we introduce new formula variables to represent the modified
values. For example, for the statement $\mathtt{x := y; z := k}$, we encode it as $x_1 = y_0 \land z_1 = k_0$.
We encode the loops in the example program by introducing a natural number variable $n$ for the number of iterations. The encoding
consists of three parts: (1) \textbf{Entry conditions}: Introduce a universal quantifier $i$ for the $i$-th entry
condition. The condition $\mathtt{x!=z}$ can be encoded as $i < n \rightarrow next^i(x_1) \neq add^i(z_1)$. (2) \textbf{Loop summary}:
In the first loop, the statements $\mathtt{x := next(x); z = add(z)}$ are executed $n$ times, yielding $x_2 = next^n(x_1)$ and
$z_2 = add^n(z_1)$. (3) \textbf{Exit condition}: After $n$ iterations, when the loop exits and the loop condition is not satisfied,
we encode it as $x_2 = z_2$. The encoding for the second loop is similar.

\begin{wrapfigure}{l}{0.63\textwidth}
%   \vspace{-5mm}
  \centering
  \adjustbox{scale=0.8}{
    \input{figure/encoding_UP.tex}
  }
  \vspace{-8pt}
  \caption{An example program and the formula encoding by $\text{EUF}^{n}$.}
  \label{fig:exam_program}
  \vspace{-10pt}
\end{wrapfigure}

This example program lacks branching structures. For programs with branches, we can encode each branch according to the rules
for sequential statements and take the disjunction of these encodings. The assertion statement $\mathtt{assert(x = y)}$ encodes
the variables $x$ and $y$ as $x_2$ and $y_1$, resulting in the formula $x_2 = y_1$. Thus, this EUF$^n$ formula encodes the
existence of a trace that violates the assertion. If the formula is satisfiable, the values of $n$ and $m$ correspond to such a violating
trace, suggesting the program is incorrect. Conversely, if unsatisfiable, it indicates no violating traces, confirming
the program's correctness.

% As previously mentioned, encoding loop conditions will introduces universal quantifiers. As shown in the right-hand formula of
% Figure \ref{fig:exam_program}, although this formula contains universal quantifiers, the \emph{CCG} algorithm for solving
% EUF$^n$ formulas (Algorithm \ref{algo:solve}) is fundamentally a reduction algorithm. Therefore, the EUF$^n$ formula with
% universal quantifiers in Figure \ref{fig:exam_program} can be transformed into a quantified Presburger arithmetic formula:

As previously mentioned, encoding loop conditions introduces formulas of the form $\forall i. i < n \rightarrow next^i(x_1) \neq add^i(z_1)$, where the bounded universal quantifier $\forall i. i<n$ is not supported in the syntax of EUF$^n$ defined in Figure~\ref{fig:syntax}. However, a simple modification to the \emph{CCG} algorithm suffices to solve such formulas. For instance, the formula $\forall i. i < n \rightarrow next^i(x_1) \neq add^i(z_1)$ can be expressed in \emph{CCG} as the equivalence of nodes $next^i(x_1)$ and $add^i(z_1)$ under the condition $\forall i. i<n$. Therefore, the EUF$^n$ formula with universal quantifiers in Figure \ref{fig:exam_program} can be transformed into a quantified Presburger arithmetic formula:
\begin{align*}
    \forall i,j. & \neg[(i<n \wedge i=m) \vee (i<n \wedge i=n)] \\
                 &\wedge \neg[(j<m \wedge j=m) \vee (j<m \wedge j=n)] \wedge \neg(m=n).
\end{align*}
The unsatisfiability of this formula formally establishes the correctness of the original program. Notably, this program
fails to preserve intermediate computation results (specifically, terms computed in the first loop are redundantly
recalculated in the second loop without variable retention), thereby violating the coherence criterion. Consequently,
such programs cannot be verified as correct under the methodology proposed by Mathur \emph{et al.}~\cite{mathur2019decidable}.

% %% ========================================================================================================================== %%

\subsubsection{Decidability of Uninterpreted Programs}
We now present a class of uninterpreted programs whose verification problems can be encoded as EUF$^n$ formulas (Theorem~\ref{theorem:UP_encode}),
and then, based on this class, we identify a decidable subclass of uninterpreted programs (Theorem~\ref{theorem:UP_decidable}).

\begin{myThe}[EUF$^n$ Encodability of Uninterpreted Programs] \label{theorem:UP_encode}
    The verification problem for uninterpreted programs without nested loops, conditional branches within loops, or multi-ary
    functions in loop bodies can be encoded as the satisfiability problem of EUF$^n$ formulas.
\end{myThe}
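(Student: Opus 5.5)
We construct an explicit encoding and prove its correctness by structural induction on the program syntax of Figure~\ref{UP syntax}. The encoding is in the style of static single assignment, following the example of Figure~\ref{fig:exam_program}.

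\textbf{Setup.} For every program variable $\mathtt{x}$ we introduce a fresh uninterpreted variable $x_0$ for its initial value. We maintain a map $\rho$ from program variables to their current formula term. We define a translation $\llbracket s \rrbracket_\rho$ that returns a formula together with an updated map $\rho'$:
\begin{itemize}
\item $\mathtt{skip}$ yields $\mathbf{true}$ and leaves $\rho$ unchanged.
\item $\mathtt{x:=y}$ yields $x_{i+1}=\rho(\mathtt{y})$, with $\rho(\mathtt{x})$ set to the fresh variable $x_{i+1}$.
\item $\mathtt{x:=f}(\vec{\mathtt{z}})$ yields $x_{i+1}=f(\rho(\vec{\mathtt{z}}))$. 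Outside loops, multi-ary $f$ is allowed, since EUF$^n$ keeps standard fixed-depth terms.
\item $\mathtt{assume(cond)}$ yields $\rho(\mathtt{cond})$.
\item Sequencing composes the translations.
\item $\mathtt{if}$ yields the disjunction of the two guarded branch encodings. The branches are joined by fresh variables $x_{j}$, equated in each disjunct to that branch's final value.
\end{itemize}
Assertions are handled by the standard reduction: the program violates $\mathtt{assert(cond)}$ iff the formula for some path prefix conjoined with $\neg\rho(\mathtt{cond})$ is satisfiable. So we take the disjunction over all assertion points. Each disjunct is a finite Boolean combination of equalities, so the result lies within the syntax of Figure~\ref{fig:syntax}.

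\textbf{Key step: a loop $\mathtt{while(cond)\{body\}}$.} By hypothesis the body is branch-free, loop-free, and uses only unary functions. It is therefore a straight-line sequence of assignments $\mathtt{x:=y}$ and $\mathtt{x:=f(z)}$ (assumes inside the body are treated like the entry condition).

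First we symbolically execute one iteration. This gives, for each variable $\mathtt{x}$, an update of the form $\mathtt{x}\mapsto g_{\mathtt{x}}(\mathtt{y})$, where $\mathtt{y}$ is the variable at loop head on which $\mathtt{x}$ depends and $g_{\mathtt{x}}$ is a composition of unary symbols, possibly the identity.

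The dependency $\mathtt{x}\leftarrow\mathtt{y}$ defines a functional graph on variables, so each variable's chain eventually enters a cycle. We would first normalize the body so that each variable depends on itself: $\mathtt{x}\mapsto h_{\mathtt{x}}(\mathtt{x})$ with $h_{\mathtt{x}}$ a composite unary function. The syntax explicitly allows composites $f_1\circ\cdots\circ f_n$ as unary symbols. After $n$ iterations the value of $\mathtt{x}$ is then $h_{\mathtt{x}}^n(x_{\mathrm{in}})$.

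For a permutation cycle of length $p$, the value after $n$ iterations depends on $n \bmod p$. We handle this by splitting $n=p\cdot q+r$ with $0\le r<p$ into a finite disjunction over $r$. Within each case the values are $(h_{1}\circ\cdots\circ h_p)^{q}$ applied to a fixed-depth term, which is expressible since $lin$ allows linear expressions such as $p\cdot q+r$. Variables that merely feed into a cycle, rather than lying on it, become stable after at most $|\Sigma_V|$ iterations. We handle them with a finite case split on whether $n$ is below that bound, unrolling concretely in the small cases.

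The loop then contributes three parts:
\begin{itemize}
\item the \emph{loop summary}, equating fresh variables with these $n$-parametric terms;
\item the \emph{exit condition}, namely $\neg\mathtt{cond}$ evaluated on the summary terms;
\item the \emph{entry conditions}, namely $\forall i<n.\ \mathtt{cond}$ evaluated on the $i$-parametric terms.
\end{itemize}
The entry conditions are the main obstacle, because they use a bounded universal quantifier that is outside the syntax of Figure~\ref{fig:syntax}. We would follow the paper's treatment in the example: extend the \emph{CCG} so that the quantified (dis)equalities are recorded as conditional edges guarded by $i<n$, and read ``encodable'' as modulo this extension. If a quantifier-free statement is demanded instead, we would try a further route. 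For a disequality between two parametric chains $a^i(u)\neq b^i(v)$, we expect the set of bad $i$ given by the \emph{CCG} analysis to be a Presburger-definable (periodic) set. The quantifier could then be eliminated over that set. This is the step we are least sure of.

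\textbf{Correctness.} We prove by induction on program structure that for every $\sigma$ assigning iteration counts, and every model, the instantiated formula $\Psi[\sigma]$ holds iff the corresponding execution path with those iteration counts exists. For loops, the induction uses Axioms~\ref{axiom_new1}--\ref{axiom_new2} and Theorem~\ref{theorem:Gen_Congruence} to show that $h^n(x_{\mathrm{in}})$ equals the value after $n$ unrolled iterations, by a secondary induction on $n$. Combining this with Definition~\ref{thm:eufn-sat-equivalence}, the program has an assertion-violating execution iff the encoding is EUF$^n$-satisfiable.
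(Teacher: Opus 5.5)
Your proposal follows essentially the same route as the paper: an SSA-style encoding with disjunction for branches, and a one-iteration block summary whose finite dependency structure yields a pre-period and a period (the paper's Periodic Summary Composition theorem). Like the paper, it case-splits for short runs and builds each loop's contribution from a summary, an exit condition, and $\forall i<n$ entry conditions, which it admits via the \emph{CCG} extension exactly as \texttt{EncodingLoop} does.

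The remaining differences are cosmetic. You split $n=pq+r$ into a finite disjunction, where the paper uses $\lfloor (i-k)/p\rfloor$ exponents with a residual composition $f_s$; taking $p$ to be the least common multiple of the cycle lengths keeps a single $q$ per loop. You also add an explicit inductive correctness argument that the paper leaves implicit.
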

\begin{proof}
     The detailed proof is provided in Appendix~\ref{sec:appendix_encoding_UP}.
\end{proof}

\begin{myThe}[Decidability of the Uninterpreted Programs Verification Problem]\label{theorem:UP_decidable}
    There exists a decidable class $\mathcal{P}$ of uninterpreted programs where: (1) Every program $P \in \mathcal{P}$ can be
    encoded into an EUF$^n$ formula $\phi_P$; (2) For any assume statement of the form $\mathtt{assume(x = y)}$
    in $P$, the superterm edges set $E_{\succ}$ contains no element $e$ satisfying $e = (x, \phi, y)$ with $\phi$ is satisfiable.
\end{myThe}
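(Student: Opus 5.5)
We would prove Theorem~\ref{theorem:UP_decidable} by taking $\mathcal{P}$ to be exactly the programs covered by Theorem~\ref{theorem:UP_encode} that also satisfy condition (2). We then reduce verification of any $P \in \mathcal{P}$ to running Algorithm~\ref{algo:solve} on $\phi_P$ and show that the result is a decidable Presburger problem.

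\textbf{Step 1 (encoding).} By Theorem~\ref{theorem:UP_encode}, $P$ (no nested loops, no branches inside loops, only unary functions in loop bodies) yields a formula $\phi_P$. The encoding works as follows. Each loop gets an iteration counter $n$. Loop summaries become equalities such as $x_2 = next^n(x_1)$. Exit conditions become literals over the final terms. Entry conditions become bounded-universal literals $\forall i.\, i<n \rightarrow t^i \bowtie s^i$. Branches outside loops produce a finite disjunction, which we put in DNF and treat conjunct by conjunct. The negated assertion is conjoined to each conjunct. $P$ is correct iff every conjunct is unsatisfiable.

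\textbf{Step 2 (running the CCG).} On each conjunct we run Algorithm~\ref{algo:solve}, extended as sketched in the running example: a bounded-universal literal is represented in the \emph{CCG} as a conditional edge between $f^i(u)$ and $g^i(v)$ whose condition carries the bound $i<n$. Termination follows from Theorem~\ref{theorem:termination}, since this extension adds only finitely many nodes and edges.

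\textbf{Step 3 (role of condition (2)).} The key point is that the periodic-equivalence Rules~(\ref{formula3})--(\ref{formula4}) are never triggered by an \texttt{assume}. Periodic equivalence arises only when an equality relates two terms connected by a satisfiable superterm edge (Definition~\ref{def:define_periodic}). Condition (2) forbids this for every $\mathtt{assume(x=y)}$. The remaining equalities in $\phi_P$ are definitional (fresh SSA variables equated to terms), so they cannot create a superterm cycle either. Consequently every derived condition is built from linear constraints and the quantified index $i$ only, via Rule~(\ref{formula1}), with no $\equiv_k$ or $\mathsf{gcd}$ whose modulus depends on the universally quantified $i$.

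\textbf{Step 4 (decidability of the output).} The output $\Phi$ is then a Presburger formula, possibly with the bounded quantifiers $\forall i$, as in the example formula displayed above, and without divisibility by variables. Full Presburger arithmetic is decidable, so $\Phi$ is decidable. By Lemma~\ref{lemma:correctness} and Theorem~\ref{theorem:correctness}, with the invariant extended pointwise over $i$, $\phi_P$ is satisfiable iff $\Phi$ is. This gives a decision procedure for $\mathcal{P}$.

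\textbf{Main obstacle.} The hard part is Step 3 together with the soundness of the quantified extension. We must check that the Lemma~\ref{lemma:correctness} invariant survives when edge conditions contain the bound variable $i$. We must also check that mixing divisibility with the $\forall i$ quantification cannot occur, because that combination would be undecidable. We would first argue, by case analysis on the sources of equalities in $\phi_P$ (assumes, SSA definitions, loop summaries, exit conditions), that periodic equivalences can only come from assume statements, and that condition (2) excludes them. If divisibility conditions involving constants do appear, we would note that constant-modulus congruences are still Presburger-definable, so decidability is preserved.
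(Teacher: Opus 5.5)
Your overall route is the paper's: condition~(2) is used to exclude periodic equivalences. Hence Rules~(\ref{formula3})--(\ref{formula4}) never fire, no divisibility by variables enters the edge conditions, and the output is a quantified Presburger formula equisatisfiable with $\phi_P$.

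The gap is in the justification you add in Step~3 and plan in your last paragraph. You claim that every equality of $\phi_P$ not coming from an \texttt{assume} is a definitional SSA equation, so only \texttt{assume}s can create periodic equivalences. That is false. Three sources yield equalities between already-defined terms:
the exit literal of a \texttt{while (x != y)} loop (already $x_2 = z_2$ in the paper's own example, Figure~\ref{fig:exam_program});
the bounded entry literals of a \texttt{while (x = y)} loop;
and the negated assertion when the assertion is a disequality.
These can be periodic even in a program containing no \texttt{assume} at all.

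Take \texttt{y := f(x); while (x != y) \{ x := f(x); \}}. Its encoding contains $y_1 = f(x_0)$, $x_1 = f^{n}(x_0)$, the exit literal $x_1 = y_1$, and the entry literals $\forall i.\, i<n \rightarrow f^{i}(x_0) \neq y_1$. After superterm inheritance, $x_1$ is a conditional superterm of $y_1$ (condition $n>1$). So processing $x_1 = y_1$ fires Rule~(\ref{formula3}) with the variable modulus $n-1$, and $\Phi$ ends up combining divisibility by $n-1$ with the universally quantified entry literals over $f^{i}(x_0)$. That is neither plain Presburger arithmetic nor an existential sentence, so your Step~4 (and Theorem~\ref{theorem:decidability}) no longer applies. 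Your case analysis breaks exactly on the ``exit conditions'' case.

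The paper's proof never makes this case analysis; it simply asserts that condition~(2) rules out periodic equivalences. It therefore implicitly needs a broader reading of~(2), and you should adopt one explicitly. Two options:
\begin{enumerate}
\item Apply~(2) to every equality the encoding derives from a guard, an \texttt{assume}, or a negated assertion. This means desugaring \texttt{while (c)} into iterated \texttt{assume(c)} followed by \texttt{assume($\neg$c)}, and \texttt{assert} into \texttt{assume}, in line with the paper's appendix.
\item More robustly, state~(2) on the \emph{CCG} itself: no equivalence edge is ever added between two terms joined by a superterm edge with a satisfiable condition.
\end{enumerate}
The second form also removes an order dependence your argument overlooks. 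If $x_1 = y_1$ were processed before $x_1 = f^{n}(x_0)$, it would be the ``definitional'' equation that triggers the periodic rule.

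With that reading, the rest of your plan goes through at the level of detail of the paper's proof. Only Rule~(\ref{formula1}) applies, constant-period floor terms remain Presburger-definable, and the Lemma~\ref{lemma:correctness} invariant is lifted pointwise in $i$.
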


\begin{proof}
    Encoding loops in uninterpreted programs introduces universal quantifiers. According to Algorithm~\ref{algo:solve} and the
    discussion in Section~\ref{section3:congruence}, if \textbf{periodic equivalence} relations do not appear in the formula, then no
    divisibility predicates are introduced in the conditions of equivalence edges. Theorem~\ref{theorem:UP_decidable}, specifically
    Condition~(2), ensures the absence of such \textbf{periodic equivalence} relations. As a result, the EUF$^n$ formula obtained
    from encoding this class of uninterpreted programs can be transformed into a quantified Presburger arithmetic formula.
    The satisfiability of the original formula is preserved under this transformation by Algorithm~\ref{algo:solve}, and the resulting
    formula is decidable. Therefore, the verification problem for this class of uninterpreted programs is decidable.
\end{proof}

For any given uninterpreted program, determining whether it satisfies Condition~(1) is straightforward. After encoding, Condition~(2)
can be verified during the \emph{CCG}-based solving process. The decidable subclass of uninterpreted programs established in
Theorem~\ref{theorem:UP_decidable} has a non-empty intersection with the existing class of coherent decidable programs
\cite{mathur2019decidable, mathur2019deciding}, but neither class subsumes the other. For example, Figure~\ref{fig:exam_program}
illustrates a non-coherent program; on the other hand, uninterpreted programs with nested loop structures, while potentially
coherent, cannot be encoded in EUF$^n$.
Furthermore, \cite{gulwani2007assertion} abstract programs using uninterpreted functions, where conditional branches are replaced by nondeterministic choice, equality guards are omitted, and the verification mainly targets equality assertions. \cite{godoy2009invariant} proposed a decidable class of uninterpreted programs called Sloopy programs, which similarly abstracts control flow with nondeterministic choice and reduces assertion conditions to Presburger formulas, an idea closely related to the verification approach in this paper. The decidable classes in these two works both have a non-empty intersection with the class defined in this paper, but neither is subsumed by the other. Their classes support multi-argument uninterpreted functions inside loops, whereas our approach can directly model branch conditions in programs.

The limitations of EUF$^n$ in encoding nested loops and branching within loops stem from the absence of variables representing
inner iteration counts and the inability to handle compositions of multi-arity functions at \emph{parametric depth}, thereby restricting
loops to unary functions. Despite this, this class of uninterpreted programs remains practically useful: it can model heap operations
(e.g., reads/writes~\cite{mathur2019deciding}), and multi-arity functions can be abstracted into unary functions with vector arguments,
thereby falling within this class. Moreover, in control property verification, function semantics are often irrelevant in that the
order and frequency of specific operations are what truly matter~\cite{bueno2021software}. For instance, in the OpenSSL vulnerability
CVE-2015-0291, a deallocated state variable \texttt{shared\_sigalgs} was not synchronized with its length field \texttt{shared\_sigalgslen},
leading to a null dereference. Such bugs can be abstracted as constraints on update counts of related variables, \emph{e.g.}, using
counter variables with updates like \texttt{count\textsubscript{i} = add(count\textsubscript{i})}, and enforcing relations such as
$count_i = count_j$ or $count_i = add^n(count_j) \land n > 0$.

%% file: section/sec4_Dyck.tex
\subsection{Single-Source Single-Target Interleaved Dyck Reachability}
In this subsection, we formalize the encoding of a subclass of the single-source single-target interleaved Dyck reachability
problem (\textsc{IDRP}) in \textsc{EUF}$^n$ and prove that this subclass is decidable.

\subsubsection{Background} We first introduce the concepts of IDRP and motivate our EUF$^n$ encoding.

\noindent\textbf{Regular Expression.}
  A regular expression over an alphabet $\Sigma_{\mathcal{R}}$ is defined recursively as:
  \begin{align*}
    \mathcal{R} ::= \epsilon  \mid a \mid (\mathcal{R}) \mid \mathcal{R}_1 \cdot \mathcal{R}_2 \mid \mathcal{R}_1 + \mathcal{R}_2  \mid  \mathcal{R} \text{\Kleene}
  \end{align*}
  where $\epsilon$ denotes the empty string and $a \in \Sigma_{\mathcal{R}}$ represents a terminal symbol, $(\mathcal{R})$ denotes
  expression grouping to explicitly specify precedence or grouping, $\mathcal{R}_1 \cdot \mathcal{R}_2$ represents concatenation,
  $\mathcal{R}_1 + \mathcal{R}_2$ signifies alternation, $\mathcal{R} \text{\Kleene}$ indicates zero or more repetitions.
  Operator precedence (descending): Kleene star ($^*$), concatenation ($\cdot$), alternation ($+$).

\noindent\textbf{Dyck Language.}
The Dyck language, a context-free language (CFL) class for bracket-matching, is formally defined by the production rule
$ S \rightarrow SS \mid \llparenthesis_{i} S \rrparenthesis_{i} \mid \epsilon $ over alphabet
$\Sigma = \{ \llparenthesis_{i}, \rrparenthesis_{i} \mid i = 1, \ldots, k \}$. Here $\llparenthesis_{i}$ are termed \emph{opening labels}
and $\rrparenthesis_{i}$ \emph{closing labels}. The Dyck reachability problem determines existence of a path between graph nodes where the
concatenated edge labels form a valid Dyck string. This formalism models critical program behaviors including function call/return
pairs~\cite{sridharan2005demand,spath2019context} and field access sequences~\cite{spath2019context}, establishing its broad utility in program analysis.

\noindent\textbf{Single-Source Single-Target IDRP.}
Given Dyck languages $D_{\alpha}$ and $D_{\beta}$, defined over alphabets $\Sigma_{\alpha}$ and $\Sigma_{\beta}$, respectively,
a string $s$ is belongs to an interleaved Dyck language, denoted as $ s \in D_{\alpha} \odot D_{\beta}$,
if its projections on $\Sigma_{\alpha}$ and $\Sigma_{\beta}$ belong to $ D_{\alpha} $ and $ D_{\beta} $ respectively
(\emph{i.e.}, $  s \downharpoonright \Sigma_{\alpha} \in D_{\alpha} \wedge s \downharpoonright \Sigma_{\beta} \in D_{\beta} $).
Given a graph $G = (V, E)$ where each edge $e \in E$ carries a label $\lambda(e) \in \Sigma_\alpha \cup \Sigma_\beta$, two
nodes $s, t \in V$ are \textit{$D_{\alpha} \odot D_{\beta}$-reachable} iff there exists a path from $s$ to $t$ whose edge label
concatenation forms a string $p \in D_{\alpha} \odot D_{\beta}$. For instance, in Figure~\ref{fig:interleaved_examp}, nodes $a$
and $d$ are connected by a path labeled $\llparenthesis_{1}\llparenthesis_{2}\llbracket_{1}\rrparenthesis_{2}\rrparenthesis_{1}\rrbracket_{1}$,
demonstrating $D_{\alpha} \odot D_{\beta}$-reachability, where
$\Sigma_{\alpha}=\{\llparenthesis_{1}, \llparenthesis_{2}, \rrparenthesis_{1}, \rrparenthesis_{2}\}$ and
$\Sigma_{\beta}=\{\llbracket_{1}, \rrbracket_{1}\}$.

% % \vspace{-10pt}
% \begin{figure}[ht]
%     \centering
%     \input{figure/interleaved_ex.tex}
%     \caption{An example of interleaved Dyck reachability.}
%     \label{fig:interleaved_examp}
% \end{figure}
% % \vspace{-10pt}

Furthermore, reachability evaluated solely between two specified points is termed single-source single-target interleaved Dyck
reachability. While the interleaved Dyck reachability problem is undecidable~\cite{reps2000undecidability},
its importance in static analysis applications such as context-sensitive data-dependence analysis~\cite{reps2000undecidability}
and context- and field-sensitive data-flow analysis~\cite{spath2019context} has driven substantial research into
approximation techniques~\cite{spath2019context,li2023single,zhang2017context,conrado2025program}.
However, existing methods universally neglect the structural properties of connecting paths between reachable nodes.

\begin{wrapfigure}{r}{0.5\textwidth}
  \vspace{-2pt}
  \centering
  \input{figure/interleaved_ex.tex}
  \caption{An example of interleaved Dyck reachability.}
  \label{fig:interleaved_examp}
  \vspace{-10pt}
\end{wrapfigure}

\noindent\textbf{Path Expressions.} Given a graph $G$, all possible paths between any two nodes can be represented using a
regular expression~\cite{tarjan1981fast}. Such expressions can be computed in $O(|E| \cdot \alpha(|E|))$ time, where $|E|$ is
the number of edges and $\alpha$ is the inverse Ackermann function. For example, in Figure~\ref{fig:interleaved_examp}, the path
expression between nodes $a$ and $d$ is represented by the regular expression
$ \llparenthesis_{1} (\llparenthesis_{2}\llbracket_{1}\rrparenthesis_{2})^{*} \rrparenthesis_{1} \rrbracket_{1} $.

\noindent\textbf{Observation.}
We identify two structural patterns under which reachability is decidable: (1) path expressions contain no Kleene star operators; and
(2) Kleene star operators are present, but they are neither:
  \begin{itemize}
    \item Nested within another Kleene star operator (\emph{e.g.}, $((a)^*b)^*$), nor
    \item Combined with the alternation operator (\emph{e.g.}, $(a + b)^*$)
  \end{itemize}

The decidability of the single-source single-target IDRP is straightforward in case (1). For case (2), this paper establishes its
decidability by encoding the problem using $\mathrm{EUF}^n$. We further reveal that state-of-the-art approximation
techniques—including SPDS~\cite{spath2019context}, LCL~\cite{zhang2017context}, ILP-based methods~\cite{li2023single}, and
$d$-MCFL($r$)~\cite{conrado2025program}—fail to yield accurate results even under these constraints.

\subsubsection{Decidability of the IDRP}
We first outline the main idea of the encoding scheme. Nodes in the graph are encoded as variables, and labeled edges are encoded using uninterpreted functions.
For example, a sequence such as $a \xrightarrow{ \llparenthesis_{1} } b \xrightarrow{ \rrparenthesis_{1} } c$
is encoded as $a = f(b) \land c = f(b)$, where matching labels (\emph{e.g.} $\llparenthesis_{1}$/$\rrparenthesis_{1}$) are mapped to the
same uninterpreted function. In this way, parenthesis matching is modeled as congruence reasoning, while the relative order of parentheses
is captured by the composition order of uninterpreted functions. This idea closely resembles the encoding of bidirected Dyck
reachability using EUF by Du \emph{et al.}~\cite{du2026euf}.

We restrict our scope to single-source single-target scenarios for two reasons.
First, Interleaved Dyck Reachability is not an equivalence relation.
In the example above, node $a$ can reach node $c$, but not vice versa; yet the encoded formula makes variables $a$ and $c$ equivalent.
Without this restriction, we cannot distinguish whether $a$ reaches $c$ or $c$ reaches $a$.
Second, branching in paths introduces another problem: edges such as $a \xrightarrow{\llparenthesis_{1}} b$ and $c \xrightarrow{\rrparenthesis_{1}} b$ are
directly encoded as $a = f(b) \wedge c = f(b)$, which incorrectly implies $a = c$ even though there is no reachability between them.

% Although Interleaved Dyck Reachability is not an equivalence relation,
% in the single-source single-target setting, the direction of the path is well-defined. For example, given the edge
% $a \xrightarrow{\llparenthesis_{1}} b$, which is encoded as $a = f(b)$, this encoding is fixed and will not be confused with the reverse
% edge $b \xrightarrow{\rrparenthesis_{1}} a$, which is also encoded as $a = f(b)$. Therefore, path regular expressions without Kleene
% stars can be directly encoded via the method described above.

% For each Kleene star structure in the expression, we introduce natural number variables (\emph{e.g.}, $n$) to represent repetition counts, with different values corresponding to paths that traverse the loop different numbers of times, thereby capturing distinct inter-node paths; and we assign two variables (\emph{e.g.}, $v_{in}$ and $v_{out}$) representing the loop's entry and exit points. The paths between $v_{in}$ and $v_{out}$ can be encoded using EUF$^n$ formulas, where the extended expressive power of EUF$^n$ enables the encoding of loops, as established in Theorem~\ref{theorem:dyck-decidable}.

From the above encoding, a branch-free (``linear'') unidirectional path can be encoded as an EUF formula.
When the path contains loops (like Figure~\ref{fig:interleaved_examp}), we can introduce natural number variables (\emph{e.g.}, $n$) to represent repetition counts, with different values corresponding to linear paths that traverse the loop different numbers of times.
By assigning two variables (\emph{e.g.}, $v_{in}$ and $v_{out}$) representing the loop's entry and exit points, the path between $v_{in}$ and $v_{out}$ can be encoded using EUF$^n$ formulas, where the extended expressive power of EUF$^n$ enables the encoding of loops, as established in Theorem~\ref{theorem:dyck-decidable}.

Note that the encoding is performed separately for $D_{\alpha}$- and $D_{\beta}$-reachability between nodes. If the sets of reachable
assignments for the loop count variables corresponding to $D_{\alpha}$- and $D_{\beta}$-reachability intersect, then the combined
assignment of all loop count variables corresponds to a path witnessing the $D_{\alpha}$- and $D_{\beta}$ reachability. We now illustrate
the encoding with the $D_{\alpha} \odot D_{\beta}$-reachability problem between nodes $a$ and $d$ in Figure~\ref{fig:interleaved_examp}.
The path expression between $a$ and $d$ is given by
$\mathcal{R} = \llparenthesis_{1} (\llparenthesis_{2}\llbracket_{1}\rrparenthesis_{2})^{*} \rrparenthesis_{1} \rrbracket_{1}$,
which contains a Kleene star structure. We introduce a variable $n$ to represent the number of iterations of the loop. The final encoding
using EUF$^n$ is shown below, with $D_{\alpha}$ and $D_{\beta}$ components separated for clarity:
\begin{align*}
    \text{Encoding} \  D_{\alpha}: & a = f_1(b_{\text{in}}) \wedge b_{\text{in}} = b_{\text{out}} \wedge c = f_1(b_{\text{out}}) \wedge c = d  \\
    \text{Encoding} \  D_{\beta}:  & a^{\prime} = b^{\prime}_{\text{in}} \wedge b^{\prime}_{\text{in}} = g_1^n(b^{\prime}_{\text{out}}) \wedge b^{\prime}_{\text{out}} = c^{\prime} \wedge g_1(c^{\prime}) = d^{\prime}
\end{align*}
The complete encoding procedure is detailed in Appendix~\ref{sec:appendix_encoding_Dyck}. When solving the formula obtained from encoding $D_{\alpha}$ using the \emph{CCG}, the condition for $a$ to be equivalent to $d$ is set to $\mathbf{true}$. When solving the formula from encoding $D_{\beta}$, the condition for $a^{\prime}$ to be equivalent to $d^{\prime}$ is specified as $n = 1$. Since the conjunction $\mathbf{true} \land n = 1$ is satisfiable, there exists a path such that nodes $a$ and $d$ are interleaved Dyck reachable, \emph{i.e.}, a path that enters the loop exactly once.

% Figure~\ref{fig:WellFormedR} defines the syntax of regular expressions under these structural constraints, termed \textit{WellFormed}
% regular expressions, where $a \in \Sigma_{\mathcal{R}}$. In this language, expressions prohibit nested Kleene stars and avoid alternation
% inside Kleene star. We first present an intuitive encoding scheme for \textit{WellFormed} regular expressions using EUF$^n$, and then
% prove that the corresponding IDRP is decidable when all path expressions between the target node pair are \textit{WellFormed}.

\begin{wrapfigure}{r}{0.5\textwidth}
  \centering
  $\begin{array}{rcl}
    WellFormed &::=& Term \mid Term \cdot Term \\
    Term   &::=& Factor \mid (Factor)^* \\
    Factor &::=& Atomic \mid Atomic \cdot Atomic \\
    Atomic &::=& a \mid \epsilon
  \end{array}$
  \vspace{-8pt}
  \caption{The Syntax of \textit{WellFormed} regular expressions.}
  \label{fig:WellFormedR}
  \vspace{-12pt}
\end{wrapfigure}
Formally, Figure~\ref{fig:WellFormedR} defines the syntax of regular expressions under the above structural constraints, termed \textit{WellFormed} regular expressions, where $a \in \Sigma_{\mathcal{R}}$.
In this language, expressions prohibit nested Kleene stars and avoid alternation inside a Kleene star.
If the path expression between nodes can be represented by a finite number of \textit{WellFormed} expressions, \emph{e.g.}, $\mathcal{R}_1 + \mathcal{R}_2$, we can encode $\mathcal{R}_1$ and $\mathcal{R}_2$ separately and take the disjunction of the resulting formulas.
Having presented an intuitive encoding scheme for \textit{WellFormed} regular expressions using EUF$^n$, we now prove that the corresponding IDRP is decidable when all path expressions between the target node pair are \textit{WellFormed}.

\begin{myThe}[Decidability of the IDRP] \label{theorem:dyck-decidable}
  Let $\mathcal{P}$ be the set of regular expressions encoding paths between two nodes $u$ and $v$ in a single-source single-target
  Interleaved Dyck reachability problem. If each expression $\mathcal{R} \in \mathcal{P}$ is \textit{WellFormed}, then $\mathcal{P}$ can be
  encoded as a set of EUF$^n$ formulas, thus establishing the decidability of the reachability problem.
\end{myThe}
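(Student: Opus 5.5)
The plan is to give an explicit, compositional encoding of a \textit{WellFormed} expression $\mathcal{R}$ into a pair of EUF$^n$ formulas, one for $D_\alpha$ and one for $D_\beta$. We then show that $u$ and $v$ are $D_\alpha \odot D_\beta$-reachable along some path in $L(\mathcal{R})$ iff there is an assignment $\sigma$ to the loop-count variables under which both formulas, conjoined with $u = v$ (resp. $u' = v'$) as the reachability query, hold. Decidability then follows from Theorem~\ref{theorem:decidability} together with Theorem~\ref{theorem:correctness}. Concretely, running Algorithm~\ref{algo:solve} on each encoding gives, by Lemma~\ref{lemma:correctness}, a Presburger-with-divisibility condition $\phi_\alpha$ (resp. $\phi_\beta$) describing exactly the assignments $\sigma$ for which the endpoints are congruent. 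The IDRP instance is positive iff $\phi_\alpha \land \phi_\beta$ is satisfiable, an existential sentence of the decidable theory. If $\mathcal{P}$ contains several expressions, we take the disjunction over $\mathcal{R} \in \mathcal{P}$, which is finite by assumption.

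The encoding proceeds by structural induction on the grammar of Figure~\ref{fig:WellFormedR}.

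\textbf{Atomic.} For each projection, a symbol not in the current alphabet, or $\epsilon$, is encoded as an equality between consecutive node variables. An opening label $\llparenthesis_i$ from $p$ to $q$ is encoded as $p = f_i(q)$. A closing label $\rrparenthesis_i$ from $p$ to $q$ is encoded as $f_i(p) = q$. This orients the nodes so that matched pairs collapse by congruence, as in the running example.

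\textbf{Factor.} A factor is a concatenation of at most two atomics. It therefore corresponds to a fixed unary composition $h$ relating its entry variable to its exit variable, after cancelling an adjacent matched pair, if any, by the same congruence reasoning.

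\textbf{Term.} For $(Factor)^*$, we introduce a fresh $n \in \mathsf{N}$ and entry/exit variables $v_{in}, v_{out}$, and encode the loop as $v_{in} = h^n(v_{out})$ (or its mirror for closing factors). Because the factor is a single fixed word, iterating it $n$ times is the composition $h^n$. This is where the absence of nested stars and of alternation inside stars is essential: it guarantees that a single parameter and a single unary composite suffice.

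\textbf{WellFormed.} Concatenation of terms is handled by identifying the exit variable of one piece with the entry variable of the next.

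The key lemma, proved by induction on the same grammar, is the following. For every $\sigma$, the EUF formula $\Psi[\sigma]$ proves $u = v$ iff the concrete word obtained by unrolling each star $\sigma(n)$ times has projection in the corresponding Dyck language. The direction from a balanced word to equality is easy: we match brackets innermost-first and apply the congruence axiom~\ref{axiom4} and Theorem~\ref{theorem:Gen_Congruence}. For the converse, we exhibit a term model, namely the free term algebra modulo only the equalities forced by the word. In this model, an unbalanced projection leaves a residual $f_i$-prefix, so $u \neq v$ holds.

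The main obstacle I expect is exactly this converse, which requires showing that the encoding introduces no spurious equalities. The single-source single-target restriction excludes the branching problem discussed above, but mismatched pairs such as $\llparenthesis_1\rrparenthesis_2$ must not be unified, and the orientation trick for open versus closed labels must be shown not to collapse non-Dyck words. A wrong-direction path is the main worry here. I would handle it with the free-model argument on the linear chain of equalities: each concrete unrolling is a path graph, and its congruence closure can be computed explicitly as stack reduction of the word. Once this lemma is in place, assembling the pieces via Lemma~\ref{lemma:correctness} is routine.
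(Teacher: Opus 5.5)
Your overall architecture matches the paper's. You encode $D_\alpha$ and $D_\beta$ separately with shared loop counters, take the CCG equivalence conditions $\phi_\alpha,\phi_\beta$ between the endpoint variables, and check $\phi_\alpha\land\phi_\beta$. Your free-model lemma for concrete, star-free chains is also correct: the congruence closure of the oriented chain amounts to cancelling adjacent $\llparenthesis_i\rrparenthesis_i$ pairs, and you argue this more carefully than the paper does.

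\textbf{The gap is in the Term step.} You assume every factor, after cancelling matched pairs, is a single unary composite $h$ with $v_{in}=h(v_{out})$ ``or its mirror'', so that $n$ iterations give $h^n$. That holds only when the reduced factor, projected onto one alphabet, consists solely of opening labels or solely of closing labels.
\begin{itemize}
\item The \textit{WellFormed} grammar also allows bodies such as $(\rrparenthesis_1\llparenthesis_2)^*$ or $(\rrparenthesis_1\llparenthesis_1)^*$ with both labels in $\Sigma_\alpha$. These are legal even under your literal ``at most two atomics'' reading.
\item With your orientation rules, one iteration of $\rrparenthesis_1\llparenthesis_2$ yields $f_1(v_{in})=f_2(v_{out})$. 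This is an equation between two compound terms; it does not express $v_{in}$ as a composite of $v_{out}$ or vice versa.
\item The $n$-fold iterate is not of the form $h^n$ either, because the opening suffix of one iteration interacts with the closing prefix of the next.
\item Bodies with an irreducible mismatch, such as $\llparenthesis_1\rrparenthesis_2$, also violate your claim, though there only $n=0$ is feasible.
\end{itemize}
For such stars your $\Psi[\sigma]$ is not the unrolled chain, so your key lemma does not transfer to the parametric encoding.

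\textbf{How the paper handles this case.} After removing matched pairs, it classifies the reduced body $s'$ as all-opening, all-closing, or $k_1$ closing labels followed by $k_2$ opening labels, and it rejects invalid patterns. In the mixed case it handles $n=1$ directly. For $n>1$, say $k_1>k_2$, it checks whether the $k_2$ opening labels match the $k_2$-prefix of the closing labels.
\begin{itemize}
\item If they do not match, no $n\ge 2$ yields a valid path.
\item If they match, the $n$-fold iterate reduces to the $k_1$ closings, then $n-1$ copies of the $k_1-k_2$ residual closings, then the $k_2$ openings.
\item This reduced form is encoded with an auxiliary variable $v_{\mathit{temp}}$. A parametric term of depth $n-1$ links $v_{in}$ to $v_{\mathit{temp}}$, and a constant opening composite links $v_{\mathit{temp}}$ to $v_{out}$.
\item The case $k_1<k_2$ is symmetric, with residual openings instead of closings.
\end{itemize}
You need this case analysis, including the resulting case distinctions on $n$, before your free-model lemma and the CCG step go through.

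A smaller point: the reachability query is entailment of $u=v$, not satisfiability of $\Psi\land u=v$, which is trivial. Your ``Concretely'' sentence uses the correct notion, the CCG condition between $u$ and $v$. So the appeal to Theorem~\ref{theorem:decidability} should go through Lemma~\ref{lemma:correctness} rather than through EUF$^n$ satisfiability.
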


% \begin{proof}
%   The detailed proof and the specific encoding algorithm are provided in Appendices~\ref{sec:appendix_encoding_Dyck}.
% \end{proof}

\begin{proof}
  We first demonstrate that any \textit{WellFormed} regular expression $\mathcal{R}$ can be encoded as an EUF$^n$ formula.
  As shown in the example in Figure~\ref{fig:interleaved_examp}, the primary challenge is automatically encoding Kleene star
  structures. We subsequently show how EUF$^n$ formulas can express such structures, with the complete encoding algorithm for $\mathcal{R}$ provided in Appendix~\ref{sec:appendix_encoding_Dyck}.

  Given a Kleene star expression $(s)^*$ and a natural number variable $n$, we remove the Dyck-matching pairs from $s$ to obtain
  a simplified string $s'$. It can be shown that the encodings for $(s)^*$ and $(s')^*$ are equivalent. It is straightforward to detect
  whether $s'$ contains any invalid pattern. The possible valid forms of $s'$ fall into three categories: (1) a sequence of only
  \emph{opening labels}, (2) a sequence of only \emph{closing labels}, or (3) a sequence of \emph{closing labels} followed by
  \emph{opening labels}. Cases (1) and (2) can be encoded easily, and the final formula takes the form:
  \[
  v_{in}/v_{out} = (f_1 \circ \cdots \circ f_k)^n (v_{out}/v_{in}),
  \]
  where $k$ is the length of $s'$.

  Suppose in case (3), the string contains $k_1$ \emph{closing labels} followed by $k_2$ \emph{opening labels}. When $n = 1$, the functional relationship between
  $v_{in}$ and $v_{out}$ is easy to compute. When $n > 1$, assume without loss of generality that $k_1 > k_2$ (the case $k_1 \leq k_2$ follows analogously).
  If the $k_2$ \emph{opening labels} do not match the $k_2$-length prefix of the $k_1$ \emph{closing labels}, then only one valid path exists for $n = 1$.
  Otherwise, repeating $s'$ $n$ times yields a path that begins with $k_1$ \emph{closing labels}, followed by $(k_1 - k_2)$ repeated \emph{closing labels} for
  each additional iteration, and ends with $k_2$ \emph{opening labels}. After introducing an intermediate variable $v_{\mathit{temp}}$, the corresponding
  functional relationship among $v_{\mathit{temp}}$, $v_{in}$, and $v_{out}$ can be derived accordingly.

  After encoding, the resulting EUF$^n$ formula contains no periodic equivalences and consists solely of a conjunction of equalities. Solving this formula
  using the \emph{CCG}-based method yields the equivalence conditions between the variables corresponding to the source and target nodes. These conditions
  form a quantifier-free Presburger formula, and its satisfiability problem is NP-complete. If this formula is satisfiable, then $D_{\alpha}$- or
  $D_{\beta}$-reachability holds between the nodes. If the intersection of their solution sets is nonempty, then $D_{\alpha} \odot D_{\beta}$-reachability is established.
\end{proof}

%%===========================================%%

However, existing methods—such as $d$-MCFL($r$)-based reachability \cite{conrado2025program}, SPDS \cite{spath2019context},
LCL \cite{zhang2017context}, and ILP-based approaches \cite{li2023single}—may still produce false positives even under
the constraints of \textit{WellFormed} regular expressions. First, consider the under-approximation technique based on $d$-MCFL($r$)
reachability~\cite{conrado2025program}, where parameters
$d$ and $r$ control approximation precision (with larger values yielding tighter approximations for IDRP).
% This method only
% considers valid paths comprising at most $2d$ concatenated segments.
While this strategy eliminates false positives, it
inevitably introduces false negatives: for any fixed $d$ and $r$, there exist Interleaved Dyck strings unrecognizable
by $d$-MCFL($r$), rendering it incapable of resolving certain IDRP instances even in loop-free cases. We now present
counterexamples for the over-approximation algorithm under the same constraints.

\vspace{8pt}
\noindent{\bf{SPDS~\cite{spath2019context}}} The synchronized pushdown system (SPDS) approximates IDRP by
independently computing the reachability of the two Dyck languages involved in the interleaving. For example, given $L=D_\alpha \odot D_\beta$,
SPDS considers $s$ and $t$ to be $L$-reachable if there are paths from
$s$ to $t$ that are both $D_\alpha$-reachable and $D_\beta$-reachable. However, this approximation may lead to false positives if the
$D_\alpha$-reachable and $D_\beta$-reachable paths are not the same, and there is no path that is both $D_\alpha$-reachable and
$D_\beta$-reachable, \emph{i.e.}, no path that is interleaved Dyck reachable. For instance, suppose the path labels between nodes
$s$ and $t$ form the following regular expression with only one layer of looping is
$$ \llparenthesis_{1} ( \rrparenthesis_{1} \llbracket_{1} )^{*} \llbracket_{1} \rrbracket_{1}, $$
where $\Sigma_\alpha = \{ \llparenthesis_{1}, \rrparenthesis_{1} \}$ and $ \Sigma_\beta = \{ \llbracket_{1}, \rrbracket_{1} \} $.
Since there exists a path
$ \llparenthesis_{1} \rrparenthesis_{1} \llbracket_{1} \llbracket_{1} \rrbracket_{1} \downharpoonright \Sigma_{\alpha} \in D_{\alpha} $
and a path $ \llparenthesis_{1} \llbracket_{1} \rrbracket_{1} \downharpoonright \Sigma_{\beta} \in D_{\beta} $, SPDS incorrectly
concludes that $s$ and $t$ are interleaved Dyck reachable. However, upon further analysis, it becomes clear that only the path entering
the loop once is $D_{\alpha}$-reachable, and this path is not $D_{\beta}$-reachable. Thus, no interleaved Dyck reachable path exists between
$s$ and $t$.

The following explains how to solve this example using EUF$^n$ encoding. First, we define a variable $n$ to represent the number of loop
executions in the path, and then proceed with the encoding for $D_{\alpha}$, resulting in the following:
$$ s=f(v_{1\_\text{in}}) \wedge v_{1\_\text{out}} = f^n(v_{1\_\text{in}}) \wedge v_2 = v_{1\_\text{out}}  \wedge t = v_2 $$
The condition for $s$ and $t$ to be equivalent when processing this formula is $n=1$, meaning the path that executes the loop once
is reachable via $D_{\alpha}$. Then proceed with the encoding for $D_{\beta}$, resulting in the following:
$$ s^{\prime}=v^{\prime}_{1\_\text{in}} \wedge v^{\prime}_{1\_\text{in}} = g^n(v^{\prime}_{1\_\text{out}}) \wedge v^{\prime}_{1\_\text{out}} = g(v^{\prime}_2) \wedge t^{\prime} = g(v^{\prime}_2) $$
The condition for $s$ and $t$ to be equivalent when processing the formula is $n=0$, meaning the path that does not enter the loop
is reachable via $D_{\beta}$. By combining $D_{\alpha}$ and $D_{\beta}$, the formula has no solution, indicating that there is no
path between $s$ and $t$ that is reachable via both $D_{\alpha}$ and $D_{\beta}$.

An improved version of the SPDS method is the mutual refinement algorithm~\cite{ding2023mutual, conrado2024better}, which removes edges
that do not contribute to the current round of Dyck reachability solving. However, such approaches do not fundamentally improve upon
the SPDS method; thus, counterexamples satisfying the constraints of Theorem~\ref{theorem:dyck-decidable} but still
causing false positives can be constructed.

\vspace{8pt}
\noindent{\bf{LCL-reachability~\cite{zhang2017context}}} A Linear Conjunctive Language (LCL) describes the language formed by the intersection of linear
context-free languages, which provide a precise formulation of Interleaved Dyck languages. However, its reachability solving algorithm is based on
finite summaries, which can lead to imprecision due to the limitations of finite summaries and the inability to distinguish
different paths between nodes. For example, if the paths between two nodes $s$ and $t$ include $p_1a$ and $bp_2$, the summary for
$s$ and $t$ will be derived from the summaries of $p_1$ and $p_2$, even if $p_1$ and $p_2$ may not on the same path. In the following
example, $\Sigma_{\alpha} = \{ \llparenthesis_{1}, \rrparenthesis_{1} \}$, $ \Sigma_{\beta} = \{ \llbracket_{1}, \rrbracket_{1} \} $
and the path expression between nodes $s$ and $t$ is:
$$ \llparenthesis_{1} \llbracket_{1} (\rrparenthesis_{1} \rrbracket_{1})^{*}  \rrbracket_{1} $$
there exists a path $p_1 \cdot \rrbracket_{1} $ where $p_1 = \llparenthesis_{1} \llbracket_{1} \rrparenthesis_{1} \rrbracket_{1}$,
and a path  $\llparenthesis_{1} \cdot p_2$ where $p_2 = \llbracket_{1} \rrbracket_{1}$.
Both $p_1$ and $p_2$ are feasible paths, and their summaries are the same, indicating the existence of an interleaved Dyck
reachable path. However, according to the summary generation rules of LCL-reachability algorithm~\cite{zhang2017context}, two reachable
summaries also generate a summary indicating reachability. In this case, there is no interleaved Dyck reachable path between $s$ and
$s$, so LCL-reachability algorithm draws an incorrect conclusion.

The following explains how to solve this example using EUF$^n$. Similarly, we first define a variable $n$ to represent the number of
loop executions, and then proceed with the encoding for $D_{\alpha}$, resulting in the following:
$$ s=f(v_1) \wedge v_1 = v_{2\_\text{in}} \wedge v_{2\_\text{out}} = f^{n} (v_{2\_\text{in}}) \wedge t = v_{2\_\text{out}}.$$
By solving the formula, we obtain that $s$ and $t$ are equivalent when $n=1$. And then proceed with the encoding for
$D_{\beta}$, resulting in the following:
$$ s^{\prime}=v^{\prime}_1 \wedge v^{\prime}_1 =g(v^{\prime}_{2\_\text{in}}) \wedge v^{\prime}_{2\_\text{out}} = g^{n} (v^{\prime}_{2\_\text{in}}) \wedge t^{\prime} = g(v^{\prime}_{2\_\text{out}}).$$
By solving the formula, we obtain that $s$ and $t$ are equivalent when $n=0$. By combining $n=1$ and $n=0$, we obtain no solution,
thus concluding that there is no interleaved Dyck reachable path between $s$ and $t$.

\vspace{8pt}
\noindent{\bf{ILP-based~\cite{li2023single}}} When solving the single-source single-target IDRP, \cite{li2023single} derived path expressions between nodes and used ILP to count the number of parentheses and distinguish paths
with different loop executions. However, their approach only considered the Parikh images of the corresponding languages and
could not encode the relative order of parentheses. Although \cite{li2023single}  used LCL reachability to check the order
of parentheses, as mentioned earlier, LCL reachability cannot distinguish between different paths. Therefore, this check cannot fully
prevent errors in parentheses order.

For example, given a Dyck reachability instance where $\Sigma = \{ \llparenthesis_{1}, \llparenthesis_{2}, \rrparenthesis_{1}, \rrparenthesis_{2} \}$,
the path expression between nodes $s$ and $t$ is:
$$ \llparenthesis_{2} ( \llparenthesis_{1} )^{*} \rrparenthesis_{2} \rrparenthesis_{1} $$
In this case, the ILP method assigns a variable $n$ to represent the number of loop executions and requires that the counts of $\llparenthesis_{1}$
and $\rrparenthesis_{1}$ be equal, resulting in the constraint $n=1$. Similarly, it requires the counts of $\llparenthesis_{2}$ and $\rrparenthesis_{2}$
to be equal, resulting in the constraint $1=1$, which is satisfiable, yielding $n=1$. However, the corresponding path's parentheses order
is incorrect, $ \llparenthesis_{1} $ cannot match with $ \rrparenthesis_{2} $. Since LCL cannot distinguish between different paths between nodes,
it will generate a valid summary as long as there exists any path with a valid parentheses order (\emph{e.g.}, $ \rrparenthesis_{2} \rrparenthesis_{1} $ ).
As a result, LCL fails to identify the unique path with matching parentheses counts and incorrectly reports that $s$ and $t$ are reachable,
despite the invalid parentheses order.

The following explains how to solve this example using EUF$^n$. First, assign a loop execution count $n$ to the loop,
and then proceed with the encoding for $D_{\alpha}$, resulting in the following:
$$ s=f_2(v_{1\_\text{in}}) \wedge v_{1\_\text{in}} = f_1^{n}( v_{1\_\text{out}}) \wedge v_2 = f_2(v_{1\_\text{out}}) \wedge t = f_1( v_2 ).$$
By solving the formula, we find that no conditional equivalence edge exists between $s$ and $t$, thus concluding that there is no interleaved
Dyck reachable path between them.

%% file: figure/interleaved_ex.tex
\tikzstyle{squarednode} = [rectangle, rounded corners, draw = gray, fill=gray!5, very thick, text centered, align=center]%, minimum size=5mm]
\tikzstyle{originnode} = [circle, rounded corners, draw = black, fill=white!5,  text centered, align=center, minimum size=7mm, font=\bfseries\Large]
\tikzstyle{arrow} = [->,>=stealth, draw=brown, line width=0.2mm]
\tikzstyle{line} = [-,>=stealth, draw=red]

{\scriptsize
\begin{tikzpicture}

    \node[originnode] (a){$a$};
    \node[originnode, right of=a, node distance=19mm] (b){$b$};
    \node[above of=b, node distance=15mm] (b_above){};
    \node[originnode, left of=b_above, node distance=8mm] (e){$e$};
    \node[originnode, right of=b_above, node distance=8mm] (f){$f$};
    \node[originnode, right of=b, node distance=19mm] (c){$c$};
    \node[originnode, right of=c, node distance=19mm] (d){$d$};

    \draw[arrow] (a.east) --  node[above, align=center]{$ \llparenthesis_{1} $} (b.west);
    % \draw[arrow] (b.west) .. controls +(up:10mm) and +(down:-10mm) ..  node[above, align=center]{$ \llbracket_{1} $} (b.east);
    \draw[arrow] (b.west) --  node[left, align=center]{$ \llparenthesis_{2} $} (e.south);
    \draw[arrow] (e.east) --  node[above, align=center]{$ \llbracket_{1} $} (f.west);
    \draw[arrow] (f.south) --  node[right, align=center]{$\rrparenthesis_{2}$} (b.east);
    \draw[arrow] (b.east) --  node[above, align=center]{$ \rrparenthesis_{1}  $} (c.west);
    \draw[arrow] (c.east) --  node[above, align=center]{$ \rrbracket_{1} $} (d.west);

\end{tikzpicture}
}

%% file: figure/encoding_UP.tex
% \begin{align*}
%     & x := y; \\
%     & z := k; \\
%     & while (x != z) \{   \\
%     & \quad  x := f(x);      \\
%     & \quad   z := g(z);      \\
%     &    \};        \\
%     & while (y != k) \{   \\
%     &  \quad   y := f(y);     \\
%     &  \quad   k := g(k);    \\
%     &  \};    \\
%     & assert(x = y);
%    \end{align*}
\begin{subfigure}[t]{0.4\textwidth}
\begin{align*}
  & \mathtt{x := y;} \\
  & \mathtt{z := k;} \\
  & \mathtt{while \; (x != z) \{}   \\
  & \quad  \mathtt{x := next(x);}      \\
  & \quad  \mathtt{z := add(z);}      \\
  & \mathtt{\}}        \\
  & \mathtt{while \; (y != k) \{}   \\
  & \quad  \mathtt{y := next(y);}     \\
  & \quad  \mathtt{k := add(k);}    \\
  & \mathtt{\}}    \\
  & \mathtt{assert(x = y)}
\end{align*}
\end{subfigure}
\begin{subfigure}[t]{0.4\textwidth}
  \begin{align*}
    % &  \exists x_1 \exists x_2  \exists y_0 \exists y_1 \dots \exists n \exists m \forall i \forall j.  \\
    &  \forall i \forall j. \\
    &  x_1 = y_0 \\
    &  \wedge z_1 = k_0  \\
    &  \wedge [ i < n \rightarrow next^{i}(x_1) \neq add^{i}(z_1) ]  \\
    &  \wedge x_2 = next^{n}(x_1) \wedge z_2 = add^{n}(z_1)  \\
    &  \wedge x_2 = z_2 \\
    &  \wedge [ j < m \rightarrow next^{j}(y_0) \neq add^{j}(k_0) ]  \\
    &  \wedge y_1 = next^{m}(y_0) \wedge k_1 = add^{m}(k_0) \\
    &  \wedge y_1 = k_1 \\
    &  \wedge x_2 \neq y_1
   \end{align*}
\end{subfigure}

%% file: section/sec5.tex
\vspace{-3pt}
\section{Related Work} \label{sec:section5}
The EUF theory is widely applied in program verification, such as in verifying pipeline processor equivalence~\cite{1997Automatic}
and regression verification for C programs~\cite{2012Regression}. The satisfiability of its quantifier-free fragment is decidable.
Early algorithms by Shostak~\cite{congruence_closure3}, Nelson and Oppen~\cite{congruence_closure2}, and Downey \emph{et al.}~\cite{congruence_closure1} leverage directed acyclic graph (DAG) and \textit{Union-Find} data structure to achieve $O(n \log n)$ congruence closure complexity. Bachmair \emph{et al.}~\cite{bachmair2003abstract, bachmair2000abstract} later unified these algorithms via a rewriting framework. Nieuwenhuis and Oliveras~\cite{2005Proof, 2007Fast} introduced an incremental algorithm with an \emph{Explain} feature, now standard in SMT solvers, which outputs a small subset of constraints implying term equivalence. Congruence closure algorithms are also vital in automated theorem proving~\cite{de2007efficient} and compiler optimization~\cite{tate2009equality}. The Egg library~\cite{willsey2021egg}, implementing equality saturation~\cite{tate2009equality}, enhances performance via a \emph{rebuilding} mechanism that amortizes consistency maintenance costs.

In extending EUF, Bryant \emph{et al.}~\cite{PositiveEquality} introduced Positive EUF (PEUF), which defines positive terms. While PEUF does not increase EUF's expressive power, it significantly improves solving efficiency. Additionally, Bryant \emph{et al.}~\cite{CLU} proposed CLU (Counter arithmetic with Lambda expressions and Uninterpreted functions), which extends EUF by allowing integer-valued domains for functions and predicates. This extension is orthogonal to our work, which introduces \emph{parametric composition depth} for unary functions. Additionally, Zakhour \emph{et al.}~\cite{zakhour2025dis} extended e-graphs (an implementation of congruence graphs) with a unified framework for handling both equalities and disequalities. Deepak~\cite{kapur2019conditional} introduced conditional congruence closure using Horn equations, where target equalities hold only under conditions expressed as conjunctions of EUF equalities.

On the other hand, the satisfiability definition of EUF$^n$ formulas (Definition~\ref{thm:eufn-sat-equivalence}) implies that enumerating all
assignments to the natural number variables yields an infinite set of EUF formulas, with the EUF$^n$ formula being satisfiable
if and only if at least one formula in this set is satisfiable. This is closely related to the concept of \emph{disjunctive satisfiability}
introduced by Mathur \emph{et al.}~\cite{mathur2025decision}, which studies the existence of a model satisfying at least one formula in
a given infinite formula set. While their work focuses on regular first-order theories with formula sets described by regular tree languages,
EUF$^n$ can express non-regular terms such as $f^n(g^n(h^n(x)))$, which cannot be captured by regular tree languages. Thus, our work
addresses the \emph{disjunctive satisfiability} of a non-regular EUF theory, extending beyond the regular
framework of~\cite{mathur2025decision}.

%% file: section/sec6.tex
\vspace{-6pt}
\section{Conclusion} \label{sec:section6}
This paper introduces EUF$^n$, a decidable extension of EUF theory that supports \emph{parametric composition depth} for unary
functions (\emph{e.g.}, $f^n(x)$), significantly enhancing the theory's expressive power. We also present a \emph{CCG} algorithm
designed to solve quantifier-free EUF$^n$ formulas by reducing them to existential sentence of Presburger arithmetic with divisibility.
This satisfiability-preserving transformation establishes the decidability of the satisfiability of quantifier-free EUF$^n$ formulas,
as the corresponding Presburger fragment is known to be decidable. Furthermore, we demonstrate the applications of EUF$^n$ through
two practical scenarios: the single-source single-target IDRP and the verification of uninterpreted
programs, for which we identify novel decidable subclasses.

\section*{Acknowledgments}
 This research was supported by the National Key R\&D Program of China (No. 2022YFB4501903) and the NSFC Program (No. 62172429 and U2341212). We also thank the anonymous reviewers for their valuable feedback.
% \titlenote{This research was supported by the National Key R\&D Program of China (No. 2022YFB4501903) and the NSFC Program (No. 62172429 and U2341212). We also thank the anonymous reviewers for their valuable feedback.}

%% file: section/appendix.tex
% \newpage
% \section{Appendix}\label{sec:appendix}

%%=============================================================== axioms ===============================================================
\section{Axioms of Presburger Arithmetic~\cite{mendelson2009introduction}} \label{sec:appendix_presburger}
Let the language $\mathcal{A}_0 = \{ 0, S, + \}$, where $0$ is a constant symbol, $S$ is a unary function symbol,
$+$ is a binary function symbol. The axioms of Presburger arithmetic are defined as the universal closures of the following formulas:

\begin{enumerate}
    \item $x=y \rightarrow (x=z \rightarrow y=z)$
    \item $x=y \rightarrow S(x)=S(y)$
    \item $\neg( S(x) = 0)$
    \item $S(x)=S(y) \rightarrow x = y$
    \item $x+0 = x$
    \item $ x+S(y) = S(x+y) $
    \item The above axioms define the properties of equality (axioms (1) and (2)), the successor function, and the
    addition function. Additionally, Presburger arithmetic includes an infinite axiom schema, the induction axiom:
    Let $P(x)$ be a first-order formula in the language $\mathcal{A}_0$, where $x$ is a free variable of $P$. The
    induction axiom is formulated as follows:
    \[
        (P(0) \land \forall x \, (P(x) \rightarrow P(S(x)))) \rightarrow \forall y \, P(y).
    \]
\end{enumerate}

The less-than predicate $<$ is definable in the language $\mathcal{A}_0$ by the formula $x < y \Leftrightarrow \exists z\, (\neg(z=0) \land x+z=y)$.

%%=============================================================== correctness proof  ===============================================================
\section{Details of the proof of Lemma~\ref{lemma:correctness}.}   \label{sec:appendix_correct}
\begin{lemma}(Lemma~\ref{lemma:correctness})
    During Algorithm~\ref{algo:solve}, for every equivalence edge $(term_l, \phi, term_r)$ in the \emph{CCG} and any assignment
    $\sigma: \mathsf{N}\to\mathbb{N}$, the following invariant holds: $\phi$ is satisfiable under $\sigma$
    if and only if $term_l$ and $term_r$ belong to the same equivalence class in the congruence closure of $\Psi[\sigma]$.
\end{lemma}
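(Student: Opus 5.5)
We prove the invariant by induction on the sequence of edge-modifying operations performed by Algorithm~\ref{algo:solve}, for a fixed but arbitrary assignment $\sigma$. Write $\equiv_\sigma^{(t)}$ for the congruence closure of the equalities of $\Psi[\sigma]$ processed up to step $t$. The invariant at step $t$ is that for every edge $(u,\phi,v)\in E_{\equiv}$, $\sigma\models\phi$ iff $u\equiv_\sigma^{(t)} v$. We also carry an auxiliary invariant for superterm edges: $(v,(\phi,f,k),u)\in E_{\succ}$ with $\sigma\models\phi$ iff $v\equiv_\sigma f^{\sigma(k)}(u)$ in the current closure with $\sigma(k)>0$. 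The base case is initialization (Section~\ref{section3:init}). Superterm edges there carry $k_2>k_1$, which matches the syntactic relation between $f^{\sigma(k_2)}(t)$ and $f^{\sigma(k_1)}(t)$. Equivalence edges carry $k_1=k_2$. With no equalities processed, the congruence closure of the instantiated terms identifies exactly the syntactically equal ground terms, which is Theorem~\ref{theorem:Gen_Congruence} together with axioms~\ref{axiom_new1}--\ref{axiom_new2}.

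For the inductive step we split on the source of the new edge. First, edges added by \texttt{AddEq} with parameters from an input equation carry $\mathbf{true}$, which is sound and complete by definition. Second, for transitivity edges (Line~\ref{algo:tran_end}) we show that $\sigma\models\phi_{\text{new}}\land\phi_1\land\phi_2$ implies the two terms are merged, using the hypothesis on the two old edges and transitivity. Conversely, if $term_1$ and $term_2$ become merged at this step but were not merged before, the merge must route through the new pair $(term_l,term_r)$, so $term_1\equiv term_l$ and $term_2\equiv term_r$ held previously. The disjunction $\phi_{\text{old}}\lor\cdots$ maintained by \texttt{AddEq} guarantees that no previously valid condition is lost. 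Third, \texttt{InheritST} preserves the superterm invariant by the same transitivity argument applied to $v\equiv f^k(u)$ and $u\equiv u'$. Fourth, for edges produced by \texttt{CongruenceClosure}: Rule~(\ref{formula1}) is sound by Theorem~\ref{theorem:Gen_Congruence}, and Rule~(\ref{formula3}) is sound because $t=f^k(t)$ implies $f^{a}(t)=f^{b}(t)$ whenever $a\equiv_k b$. Rule~(\ref{formula4}) is sound by Theorem~\ref{theorem:th1} and Lemma~\ref{lemma1}.

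Completeness is the main obstacle. We must show that every pair merged in the concrete closure of $\Psi[\sigma]$ is witnessed by an edge whose condition holds under $\sigma$. We would argue this by simulating the standard congruence closure run on $\Psi[\sigma]$. Each concrete \textit{Union} is either an input equation, a transitivity consequence, or a congruence consequence $f(a)=f(b)$ from $a=b$. We then show the corresponding symbolic rule fires with a condition satisfied by $\sigma$. The delicate case is a concrete congruence chain that loops through a cycle, i.e.\ the partitioned structure. Here we must argue that the finitely many periodic equivalences on one $f$-term collapse to a single period $\gcd$ via Theorem~\ref{theorem:th1}. Repeated application of the rules then reaches the fixpoint that the concrete unbounded upward propagation would produce.

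A second subtlety is the worklist discipline. Edges added at Line~\ref{algo:tran_end} are not re-pushed onto $W$. We justify this using the observation already made in Section~\ref{section3:algorithm}: $term_1$ and $term_l$ share their superterms after \texttt{InheritST}, so any congruence consequence of the skipped edge is already subsumed by the disjunction generated from the pushed edges. Finally, since only finitely many nodes exist and termination holds by Theorem~\ref{theorem:termination}, the invariant at the final step is the claim of Lemma~\ref{lemma:correctness}.
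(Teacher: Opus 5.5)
Your soundness half follows the paper's proof step for step. Both induct over equations and \texttt{EqProp} calls, read the base case off the initialization rule ($k_1=k_2$), and justify transitivity edges from the hypothesis on $(term_l,\phi_1,term_1)$, $(term_r,\phi_2,term_2)$ and the new edge. Congruence edges are justified in both by Theorem~\ref{theorem:Gen_Congruence}, Lemma~\ref{lemma1} and Theorem~\ref{theorem:th1}.

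The gap is the other direction, ``merged in $\Psi[\sigma]$ $\Rightarrow$ $\sigma\models\phi$''. You flag it but only sketch it. The paper does not supply it either: its transitivity step, and its remark that Rules~(\ref{formula1})--(\ref{formula3}) hold ``trivially'', only show that a recorded condition implies equality. Your simulation plan fails as stated, for a structural reason. The concrete \textit{Union}s of $\Psi[\sigma]$ run through terms $f^j(t)$ that are not CCG nodes, so no rule firing ``corresponds'' to them. Moreover, the CCG invokes \texttt{CongruenceClosure} only when an equivalence edge is popped from $W$, matching it against the superterm edges present at that moment. This has three consequences:
\begin{itemize}
\item Superterm edges that \texttt{InheritST} creates later are never confronted with earlier (periodic) equivalence edges.
\item Periodic pairs created as transitivity edges are never pushed onto $W$.
\item The sharing of superterms between $term_1$ and $term_l$ that your worklist argument relies on is established when their edge is updated, not maintained as $E_{\succ}$ grows.
\end{itemize}

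A concrete instance is $x=f^2(x)\land y=f^3(y)\land f(x)=y$, which has no parameters at all. In EUF, $f^2(y)=f^3(x)=f(x)=y$. Together with $y=f^3(y)$ this gives $y=f(y)=f^2(x)=x$, so $x,f(x),f^2(x),y,f^3(y)$ all merge. In the CCG nothing derives this:
\begin{itemize}
\item Rule~(\ref{formula3}) for $(x,f^2(x))$ fires while $x$'s superterms are just $f(x)$ at depth $1$ and $f^2(x)$ at depth $2$.
\item Rule~(\ref{formula3}) for $(y,f^3(y))$ sees only $f^3(y)$.
\item Rule~(\ref{formula1}) for $(f(x),y)$ pairs superterms of equal depth ($x,f^2(x)$ at depth $1$, $f^3(y)$ at depth $3$) and produces nothing new.
\end{itemize}
The needed inference is that $f^3(y)$, which \texttt{InheritST} makes a depth-$2$ superterm of $x$, is congruent to $f^2(x)$; it is never triggered. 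So the initial edge $(f(x),\,1{=}2,\,f^2(x))$ keeps a condition that is false under every $\sigma$, although the two terms are merged. Conjoining $f(x)\neq f^2(x)$ then gives an unsatisfiable input with $\Phi=\mathbf{true}$. Closing superterm edges around cycles would recover the inference. But here $x$ and $f(x)$ already become superterms of each other with condition $\mathbf{true}$, contrary to the assumption used in the proof of Theorem~\ref{theorem:termination}.

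So completeness cannot come from a local, rule-by-rule simulation. It needs an extra closure step in the algorithm, re-firing the congruence and periodic rules whenever a node acquires a superterm edge. It also needs a global argument about the ultimately periodic shape of each $f$-chain in $\Psi[\sigma]$.

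Two smaller repairs are also needed:
\begin{itemize}
\item As you index it, the invariant cannot hold between operations: just after \texttt{AddEq} for an input equation, its congruence consequences are still pending on $W$. Assert soundness at every step and completeness only when $W$ is empty.
\item Your auxiliary superterm invariant can only be maintained in its soundness half. An inherited edge carries $\phi_1\land\phi_{\text{new}}$, while $v\equiv_\sigma f^{\sigma(k)}(u)$ may hold by another route.
\end{itemize}
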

\begin{proof}
    Given a conjunct of an EUF$^n$ formula, we prove by induction that after processing each equation, the \emph{CCG} and the
    congruence closure of $\Psi[\sigma]$ obtained at that step satisfy the invariant stated in Lemma~\ref{lemma:correctness}.

    \begin{itemize}
        \item \textbf{ Base Step  }  \\
        The initialization rules (Section~\ref{section3:init}) show that equivalence edges in $E_{\equiv}$ are of the form
        $(term_1, k_1 = k_2, term_2)$, where $term_1 = f^{k_1}(term)$ and $term_2 = f^{k_2}(term)$. When $\sigma(k_1) = \sigma(k_2)$,
        the terms $term_1$ and $term_2$ are identical and thus belong to the same equivalence class in the congruence closure of
        $\Psi[\sigma]$. Conversely, in the EUF formula $\Psi[\sigma]$ obtained under any assignment $\sigma$, terms of the form
        $f^{c}(term)$ (where $c$ is a constant) are treated identically during initialization. Setting $k_1 = c$ and $k_2 = c$ yields
        a satisfying assignment for $k_1 = k_2$.
        \item \textbf{ Inductive Step}  \\
        % Assume the invariant of Lemma~\ref{lemma:correctness} holds after the first $k$ equations. For the $(k{+}1)$-th equation,
        % the algorithm adds equivalence edges via \texttt{EqProp}, \texttt{EqProp} adds equivalence edges to the \emph{CCG} in either
        % Line~\ref{algo:call_EqProp} or Line~\ref{algo:eq_end}. Specifically, \texttt{EqProp} uses the \texttt{AddEq} function to
        % add equivalence edges to the \emph{CCG}. In Line~\ref{algo:call_AddEq}, \texttt{AddEq} adds the parameters of \texttt{EqProp}
        % to the \emph{CCG}; these parameters originate from two sources: (1) the $(k{+}1)$-th equality itself (Line~\ref{algo:call_EqProp}),
        % which clearly satisfies the invariant of Lemma~\ref{lemma:correctness}; and (2) the return value of \texttt{CongruenceClosure}, whose
        % correctness will be proven subsequently.

        Assume the invariant of Lemma~\ref{lemma:correctness} holds after the first $k$ equations. For the $(k{+}1)$-th equation,
        the algorithm adds equivalence edges via \texttt{EqProp}, which may be invoked multiple times per equation. We therefore
        prove by induction on the number of \texttt{EqProp} executions that the invariant still holds after processing the
        $(k{+}1)$-th equation.

        Initially, when starting the $(k{+}1)$-th equation, the invariant holds by the induction hypothesis. Now assume it holds
        after the first $k'$ executions of \texttt{EqProp} for this equation. When executing the $(k'{+}1)$-th \texttt{EqProp},
        it adds equivalence edges to the \emph{CCG} in either Line~\ref{algo:call_EqProp} or Line~\ref{algo:eq_end}.
        In Line~\ref{algo:tran_end}, to maintain transitivity of the equivalence relation, \texttt{AddEq} adds the equivalence
        edge $(term_1, \phi_{\text{old}} \vee (\phi_{\text{new}} \land \phi_1 \land \phi_2), term_2)$ to the \emph{CCG}
        (from Lines~\ref{algo:tran_end},~\ref{algo:find}, and~\ref{algo:eq_add}), where $term_1 \in [term_l]$ and $term_2 \in [term_r]$.
        By the induction hypothesis, the edges $(term_l, \phi_1, term_1)$, $(term_r, \phi_2, term_2)$, and
        $(term_1, \phi_{\text{old}}, term_2)$ satisfy the invariant of Lemma~\ref{lemma:correctness}. As established previously,
        $(term_l, \phi_{\text{new}}, term_r)$ also satisfies the invariant of Lemma~\ref{lemma:correctness} (\emph{i.e.}, the two parameter sources
        of \texttt{EqProp} in Lines~\ref{algo:call_EqProp} and~\ref{algo:eq_end}). By the transitivity of the equivalence relation,
        $term_1$ and $term_2$ are also equivalent under $\phi_{\text{new}} \land \phi_1 \land \phi_2$. Hence, the updated edge
        $(term_1, \phi_{\text{old}} \vee (\phi_{\text{new}} \land \phi_1 \land \phi_2), term_2)$ satisfies the invariant of Lemma~\ref{lemma:correctness}.
    \end{itemize}
\end{proof}

%%=============================================================== th1 proof  ===============================================================
\section{Details of the proof of Lemma~\ref{lemma1} and Theorem~\ref{theorem:th1}.}   \label{sec:appendix_th1}
\begin{lemma}(Lemma~\ref{lemma1})
    Assume $n > m$, $ term = f^{m}(term) = f^{n}(term) \leftrightarrow term = f^{\mathsf{gcd}(m,n)}(term) $.
\end{lemma}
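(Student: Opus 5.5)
We read the statement as: for $0<m<n$, $term = f^{m}(term) \wedge term = f^{n}(term)$ holds iff $term = f^{\mathsf{gcd}(m,n)}(term)$, where $t=f^a(t)$ abbreviates a periodic equivalence in the sense of Definition~\ref{def:define_periodic}. Write $t$ for $term$, and let $P(a)$ mean $t = f^{a}(t)$. The proof has two directions.

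\textbf{Right to left.} Let $d=\mathsf{gcd}(m,n)$, so $m=d\,p$ for some $p\ge 1$. We show by induction on $p$ that $P(d)$ implies $P(d\,p)$. Suppose $t = f^{d(p-1)}(t)$. Apply Theorem~\ref{theorem:Gen_Congruence} to the pair $f^{d(p-1)}(t)$ and $t$ with equal exponent $d$; together with Axiom~\ref{axiom_new2} (iterated, giving $f^{a}(f^{b}(x)) = f^{a+b}(x)$), this yields $f^{dp}(t) = f^{d}(t)$. Transitivity with $P(d)$ then gives $P(dp)$. The same argument gives $P(n)$.

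\textbf{Left to right.} The first step is a closure lemma: the set $S=\{a\in\mathbb{N} \mid P(a)\}$ is closed under subtraction of a smaller element from a larger one. Let $a>b$ with $a,b\in S$. From $t=f^{b}(t)$, congruence with exponent $a-b$ gives $f^{a-b}(t) = f^{a-b}(f^{b}(t)) = f^{a}(t) = t$. Hence $a-b\in S$.

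We then run the subtractive Euclidean algorithm on the pair $(n,m)$, replacing the larger element by the difference at each step. Every pair produced stays inside $S$, and the gcd is invariant under each step. The process terminates, because the sum of the pair strictly decreases while both entries remain positive, until the two entries coincide. At that point both entries equal $\mathsf{gcd}(m,n)$, which lies in $S$.

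Formally, we prove by strong induction on $a+b$ that if $a,b\in S$ with $a,b>0$, then $\mathsf{gcd}(a,b)\in S$:
\begin{itemize}
\item If $a=b$, then $\mathsf{gcd}(a,b)=a\in S$.
\item Otherwise, assume $a>b$. Then $a-b\in S$ by the closure lemma, $a-b>0$, and $(a-b)+b < a+b$. Moreover $\mathsf{gcd}(a-b,b)=\mathsf{gcd}(a,b)$, so the induction hypothesis applies.
\end{itemize}

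\textbf{Expected obstacle.} The only delicate point is the exponent bookkeeping $f^{a}(f^{b}(x)) = f^{a+b}(x)$. This identity is not stated earlier, so we first prove it by induction on $a$ from Axioms~\ref{axiom_new1} and~\ref{axiom_new2}. That identity is needed in both directions. Everything else is elementary number theory.
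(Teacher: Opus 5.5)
Your proof is correct and follows the paper's own argument. The left-to-right direction is the subtractive Euclidean algorithm, where each step $t=f^{b}(t)\Rightarrow f^{a-b}(t)=f^{a}(t)=t$ comes from Theorem~\ref{theorem:Gen_Congruence} plus transitivity, exactly as in the appendix proof, and you are more careful than the paper in supplying the right-to-left direction it calls obvious, the identity $f^{a}(f^{b}(x))=f^{a+b}(x)$ from Axioms~\ref{axiom_new1}--\ref{axiom_new2}, and a real termination argument by strong induction on $a+b$ (the case $m=0$ that you excluded is trivial, since $\mathsf{gcd}(0,n)=n$ and $f^{0}(t)=t$).
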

\begin{proof}
    As the implication from right to left is obvious, we now prove the implication from left to right,
    \emph{i.e.}, $ term = f^{m}(term) = f^{n}(term) \rightarrow term = f^{\mathsf{gcd}(m,n)}(term) $. Here we assume $m<n$.
    By Theorem~\ref{theorem:Gen_Congruence}, the identity $term = f^m(term)$ implies $f^{n-m}(term) = f^n(term)$, and thus
    by the transitivity of the equivalence relation:
        $$ term = f^{n-m}(term) = f^{m}(term) = f^{n}(term) $$
    If $ n-m \neq m $, we can can repeatedly apply Theorem~\ref{theorem:Gen_Congruence} until $ term = f^{a}(term) = f^{b}(term) \ s.t. \ a=b $ is derived.
    This process utilizes a variant of the Euclidean algorithm~\cite{rosen2011elementary} to find the greatest common divisor of $m$ and
    $n$. Thus, $ a = b = \mathsf{gcd}(m, n) $, and the formula on the right $ term = f^{\mathsf{gcd}(m,n)}(term) $ is derived.
\end{proof}

\begin{theorem}(Theorem~\ref{theorem:th1})
Given two periodic equivalence relations $ term = f^{k}(term) $ and $ f^{m}(term) = f^{n}(term) $.
Then
$$ term = f^{k}(term) \wedge f^{m}(term) = f^{n}(term) \leftrightarrow term = f^{\mathsf{gcd}(k, (n-m))}(term). $$
\end{theorem}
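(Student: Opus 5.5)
The plan is to reduce the statement to Lemma~\ref{lemma1} by first showing that, in the presence of $term = f^{k}(term)$, the equation $f^{m}(term) = f^{n}(term)$ is equivalent to $term = f^{n-m}(term)$. Throughout I assume $n > m$ and $k > 0$, as in the setting of Rule~(\ref{formula4}). Write $t$ for $term$, $d = n-m$ and $g = \mathsf{gcd}(k,d)$. The main tool is Theorem~\ref{theorem:Gen_Congruence}: from $u = v$ it yields $f^{j}(u) = f^{j}(v)$ for every $j$. Together with Axiom~\ref{axiom_new2}, it also gives the composition law $f^{a}(f^{b}(t)) = f^{a+b}(t)$ by induction on $a$.

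\textbf{Right-to-left.} Assume $t = f^{g}(t)$.
\begin{enumerate}
\item By induction on $c$, I will show $t = f^{cg}(t)$ for every $c$. The step applies $f^{g}$ to the hypothesis $t = f^{cg}(t)$, giving $f^{g}(t) = f^{g}(f^{cg}(t)) = f^{(c+1)g}(t)$, and then uses $t = f^{g}(t)$ and transitivity.
\item Since $g \mid k$ and $g \mid d$, this yields both $t = f^{k}(t)$ and $t = f^{d}(t)$.
\item Applying $f^{m}$ to $t = f^{d}(t)$ gives $f^{m}(t) = f^{m+d}(t) = f^{n}(t)$.
\end{enumerate}

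\textbf{Left-to-right.} Assume $t = f^{k}(t)$ and $f^{m}(t) = f^{n}(t)$.
\begin{enumerate}
\item Exactly as above, $t = f^{qk}(t)$ for every $q$.
\item Pick $q$ with $qk \geq m$, which is possible since $k > 0$, and apply $f^{qk-m}$ to $f^{m}(t) = f^{n}(t)$. The composition law gives $f^{qk}(t) = f^{qk+d}(t)$.
\item The left side equals $t$. For the right side, apply $f^{d}$ to $t = f^{qk}(t)$ to get $f^{d}(t) = f^{qk+d}(t)$. Transitivity then gives $t = f^{d}(t)$.
\item We now have $t = f^{k}(t)$ and $t = f^{d}(t)$. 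Lemma~\ref{lemma1}, applied to the pair $\{k, d\}$, yields $t = f^{\mathsf{gcd}(k,d)}(t)$.
\end{enumerate}

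\textbf{Main obstacle.} I expect the delicate point to be the ``rewinding'' in step 2 of the left-to-right direction. The equation $f^{m}(t) = f^{n}(t)$ only speaks about terms above level $m$. We cannot cancel $f^{m}$ directly, since $f$ need not be injective. The trick is to move forward to a multiple of $k$ where $t$ reappears, so that the cycle is used instead of cancellation. The other point needing care is the degenerate cases of Lemma~\ref{lemma1}, which assumes two distinct exponents. If $d = k$, the statement is immediate because $\mathsf{gcd}(k,k) = k$. If $d < k$ or $d > k$, I will apply Lemma~\ref{lemma1} with the smaller exponent in the role of $m$ there. The remaining steps are routine inductions using Axioms~\ref{axiom_new1}, \ref{axiom_new2} and~\ref{axiom4}.
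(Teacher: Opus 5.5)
Your proposal is correct and follows essentially the same argument as the paper. The right-to-left direction is identical. For left-to-right, the paper likewise avoids cancelling $f^{m}$ by moving up to a multiple of $k$ where $term$ recurs and then invoking Lemma~\ref{lemma1}; the only difference is cosmetic. The paper aligns $n$ with a multiple $c_1k = n + c_2$, obtains $term = f^{m+c_2}(term)$, and then needs $\mathsf{gcd}(k, c_1k-(n-m)) = \mathsf{gcd}(k, n-m)$, whereas you align $m$ and land directly on $term = f^{n-m}(term)$, which skips that gcd identity (and you also state explicitly the assumption $k>0$ that both arguments need).
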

\begin{proof}
We will prove the theorem from two directions. Without loss of generality, we assume that $n > m$.
    \begin{itemize}

        \item Right to Left: \\
            Let $c_1 = \mathsf{gcd}(k,(n-m))$. Since $c_1$ is a divisor of $k$, we have
            $$ term = f^{c_1}(term) \rightarrow  term = f^{k}(term), $$
            and since $c_1$ is also a divisor of $(n-m)$, we have
            $$ term = f^{c_1}(term) \rightarrow term = f^{(n-m)}(term). $$
            Applying $f$ repeatedly for $m$ times to both sides of the above equation, we obtain
            $$ f^{m}(term) = f^{n}(term). $$
            Thus, the right-to-left direction is established.

        \item Left to Right: \\
            From the formula on the left, we can obtain
            % $$ term = f^{k}(term) \;\wedge\; f^{m}(term) = f^{n}(term), $$
            $$ term = f^{c_1k}(term), \quad s.t. c_1=0,1,\cdots, $$
            $$ f^{m+c_2}(term) = f^{n+c_2}(term), \quad s.t. c_2=0,1,\cdots. $$
            For any given $n$ and $k$, when $c_1$ is sufficiently large, there must exist a $c_2 > 0$ such that
            $c_1 k = n + c_2$, that is, there exist $c_1$, $c_2$ such that
            $$ term = f^{k}(term) = f^{c_1 k}(term) = f^{n+c_2}(term) = f^{m+c_2}(term). $$
            By Lemma \ref{lemma1}, it follows that $$ term = f^{\gcd(k, m+c_2)}(term). $$
            From the equation $c_1 k = n + c_2$, it follows that this equality can also be expressed as
            $$ term = f^{\gcd(k,\, c_1k-(n-m))}(term). $$
            Since $\gcd(m,n) = \gcd(m,n-m)$ and $\gcd(m,n) = \gcd(m,-n),$
            we have $$ \gcd(k,\, c_1k-(n-m)) = \gcd(k, n-m). $$
            Hence, the left-to-right direction holds, and the proof is thus complete.
    \end{itemize}
\end{proof}

%=============================================================== arXiv  ===============================================================

% %% =============================================================== Theorem and EncodingIDR  ===============================================================
\section{Encoding single-source single-target Interleaved Dyck Reachability with EUF$^n$}   \label{sec:appendix_encoding_Dyck}

\input{algorithms/algo_encodingIDR.tex}
In this section, we will illustrate Algorithm~\ref{algo:encoding_regex} using the example in Figure~\ref{fig:interleaved_examp},
which verifies $D_{\alpha} \odot D_{\beta}$-reachability between nodes $a$ and $d$. The path expression
$ \llparenthesis_{1} (\llparenthesis_{2}\llbracket_{1}\rrparenthesis_{2})^{*} \rrparenthesis_{1} \rrbracket_{1} $ is first derived and then encoded
using the algorithm. In Line~\ref{algo3:init}, the EUF$^n$ signature $\mathsf{\Sigma}_{\text{EUF}^n} = (\mathsf{U} \cup \mathsf{N}, \mathsf{F}, \mathsf{C}, \{ =, \neq \}) $
is initialized, where $\mathsf{U}$ and $\mathsf{N}$ represent variables of uninterpreted and natural number sorts, respectively,
and $\mathsf{F}$ contains only unary functions. In Line~\ref{algo3:init_phi}, the EUF$^n$ formula $\Psi$ (initialized to $\mathbf{true}$)
accumulates the encoding results, while the variables $v_{\text{old}}$ and $v'_{\text{old}}$ are used to track variable name in EUF$^n$ during
the encoding process. In Line~\ref{algo3:def_pi}, the mappings $\pi: \Sigma_\alpha \rightarrow \mathsf{F} \cup \{\mathrm{Id}\}$ and
$\pi^{\prime}: \Sigma_\beta \rightarrow \mathsf{F} \cup \{\mathrm{Id}\}$ assign terminal symbols to uninterpreted functions. Paired terminal symbols
(\emph{e.g.}, $\llparenthesis_{1}/\rrparenthesis_{1} \in \Sigma_\alpha$) are mapped to the same function, and $\epsilon$ is mapped to the identity
function. Line~\ref{algo3:fun_call} invokes the \texttt{Encoding} function to recursively encode $\mathcal{R}$, and finally,
Line~\ref{algo3:return} returns the encoded result $\Psi$.

Lines~\ref{algo3:start}-\ref{algo3:end} recursively encode $\mathcal{R}$, with $\Psi_r$ in Line~\ref{algo3:ini_phi_r} recording the formula generated
in each recursive call. For conciseness, syntactically similar \texttt{if}-branches are merged and distinguished via black/gray visual coding.
During the initial \texttt{Encoding} call, assume $v_{\text{old}}=a$, $v^{\prime}_{\text{old}}=a^{\prime}$, and
$\mathcal{R} = \llparenthesis_{1} (\llparenthesis_{2}\llbracket_{1}\rrparenthesis_{2})^{*} \rrparenthesis_{1} \rrbracket_{1}$, leading to execution of
Lines~\ref{algo3:recu1}-\ref{algo3:recu2} that recursively process \texttt{Encoding}$(\llparenthesis_{1})$ and
\texttt{Encoding}$((\llparenthesis_{2}\llbracket_{1}\rrparenthesis_{2})^{*} \rrparenthesis_{1} \rrbracket_{1})$.

When processing \texttt{Encoding}$(\llparenthesis_{1})$, the terminal symbol triggers Line~\ref{algo3:terminal_start}, updating $v_{\text{new}}$ and
$v^{\prime}_{\text{new}}$ in Line~\ref{algo3:v_new} (hypothetically to $b_{\text{in}}$/$b^{\prime}_{\text{in}}$). As $\llparenthesis_{1}$ is an opening label
in $\Sigma_{\alpha}$, Line~\ref{algo3:opening} activates and Line~\ref{algo3:encode_opening} updates $\Psi_r$ with two components encoding
$D_{\alpha}$/$D_{\beta}$ reachability. Given $\llparenthesis_{1} \in \Sigma_{\alpha}$, $\pi(a)$ becomes an uninterpreted function while
$\pi^{\prime}(a)$ simplifies to an identity function (denoted as equality, omitting explicit identity notation). Gray-shaded code handles
the $\Sigma_{\beta}$ opening label case analogously. Finally, Line~\ref{algo3:update_v_old} updates $v_{\text{old}}$/$v^{\prime}_{\text{old}}$ to
$b_{\text{in}}$/$b^{\prime}_{\text{in}}$, tracking variable's name during encoding.

When processing \texttt{Encoding}$((\llparenthesis_{2}\llbracket_{1}\rrparenthesis_{2})^{*} \rrparenthesis_{1} \rrbracket_{1})$, the
algorithm activates the Line~\ref{algo3:recu1} branch and computes \texttt{Encoding}$((\llparenthesis_{2}\llbracket_{1}\rrparenthesis_{2})^{*})$
and \texttt{Encoding}$(\rrparenthesis_{1} \rrbracket_{1})$ in Line~\ref{algo3:recu2}. For the Kleene Star operator in
\texttt{Encoding}$((\llparenthesis_{2}\llbracket_{1}\rrparenthesis_{2})^{*})$, Line~\ref{algo3:kleene_star} triggers with Lines~\ref{algo3:select_n}-\ref{algo3:encoding_star}
introducing a bound $n$ and encoding via the \texttt{EncodingStar} function using EUF$^n$ formulas. Per Theorem~\ref{theorem:dyck-decidable}, removing matched
$D_{\alpha}$/$D_{\beta}$ substrings from $\mathcal{R}^n_1$ leaves three EUF$^n$-encodable forms:
(1) entirely opening labels;
(2) entirely closing labels;
(3) closing label prefixes followed by opening label suffixes.
In this case, $(\llparenthesis_{2}\llbracket_{1}\rrparenthesis_{2})^{*}$ yields $\epsilon$ for $D_{\alpha}$ and $\llbracket^n_{1}$ for $D_{\beta}$ after
match removal, resulting in the formula $b_{\text{in}} = b_{\text{out}} \land b^{\prime}_{\text{in}} = g_1^n(b^{\prime}_{\text{out}})$.
We omit \texttt{EncodingStar}'s straightforward details. Finally, \texttt{Encoding}$(\rrparenthesis_{1} \rrbracket_{1})$ handles closing labels via
Lines~\ref{algo3:closing}-\ref{algo3:encode_closing} using logic analogous to opening label encoding.

\section{Semantics and Verification of Uninterpreted Programs}  \label{sec:appendix_background}
The semantics of an uninterpreted program are defined over a given data model $ \mathcal{M}=(\mathcal{U}, \mathcal{I}) $,
where $\mathcal{U}$ represents the set of all elements, and $ \mathcal{I}  $ provides an interpretation for the symbols in
$ \Sigma $. For example, $\mathcal{I}$ maps functions in $\Sigma_{F}$ to relations in $\mathcal{U}$. Thus, the
semantics of an uninterpreted program can be defined as a state transition graph over such a data model, where each state is defined as
a mapping $ \mathcal{S}:\Sigma_{V} \rightarrow \mathcal{U} $, which maps variables in the program to elements in the data model.
In the initial state, this mapping is empty, denoted as $ \emptyset $, and a state for assertion failure is introduced,
denoted as $Fail$. Each state can be updated to a new state by the program's statements.

The verification problem of an uninterpreted program involves checking the satisfiability of each assert statement
$ assert(\stdef{cond})$ in the program. This requires that the condition $ \stdef{cond} $ holds in any data model during
the execution of the program, i.e., $ \neg \stdef{cond} $  is not satisfiable. Therefore, the verification problem of a program
$ \mathcal{P} $ can be reduced to a reachability problem, defined as follows (where $ \not\rightsquigarrow_{\mathcal{P}} $ denotes
an execution of program $ \mathcal{P} $):
$$ \forall \mathcal{M}. \stdef{ \mathcal{M}, \emptyset } \not\rightsquigarrow_{\mathcal{P}} \stdef{ \mathcal{M}, Fail } . $$
Here, the tuple $ \stdef{ \mathcal{M} , \mathcal{S} } $ represents the current data model $\mathcal{M}$ and the current program
state $\mathcal{S}$. In the equation, $ \emptyset $ denotes the initial state, and $Fail$ denotes the failure state. Thus,
the equation signifies that during the execution of the program, the $Fail$ state is unreachable in any data model $\mathcal{M}$.
Generally, the reachability problem for uninterpreted programs is undecidable. Recent work \cite{mathur2019decidable}, however,
has shown that the verification problem for uninterpreted programs that satisfy a property called $coherence$ is decidable,
and the complexity of the decision algorithm is PSPACE-complete.

%%===============================================================  Encoding UP  ===============================================================
\section{Encoding Uninterpreted Program Verification with EUF$^n$}  \label{sec:appendix_encoding_UP}
This section formalizes the encoding method of these programs into \textsc{EUF}$^n$ formulas, with loop encoding representing
the most challenging aspect. While loop summaries can be readily identified in specific cases (\emph{e.g.}, Figure~\ref{fig:exam_program}),
they often require non-trivial derivation, as illustrated in subsequent examples (\emph{e.g.}, Example~\ref{example3}). We will start
by presenting the following fundamental theorem:

\begin{theorem}[Theorem~\ref{theorem:UP_encode}]
  The verification problem for uninterpreted programs without nested loops, conditional branches within loops, or multi-ary
  functions in loop bodies can be encoded as the satisfiability problem of EUF$^n$ formulas.
\end{theorem}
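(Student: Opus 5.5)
We prove the statement by structural induction on the program, building a formula $\phi_P$ that is satisfiable iff some execution of $P$ reaches an assertion failure. This follows the encoding sketched in the example of Figure~\ref{fig:exam_program}.

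First, we convert $P$ to static single assignment form. Each program variable $\mathtt{x}$ gets a fresh formula variable $x_0$ for its initial value. Each assignment gets a fresh version: $\mathtt{x:=y}$ becomes $x_{i+1}=y_j$, and $\mathtt{x:=f}(\vec{\mathtt{z}})$ becomes $x_{i+1}=f(\vec z_j)$. Multi-ary functions are allowed here, since EUF$^n$ admits fixed-depth $g(\cdots)$ outside loops. The remaining constructs are handled as follows:
\begin{itemize}
\item $\mathtt{assume(cond)}$ conjoins the translation of $\mathtt{cond}$ over the current versions.
\item Sequential composition conjoins the two encodings and threads the version map through.
\item $\mathtt{if}$ produces the disjunction of the two branch encodings, each guarded by $\mathtt{cond}$ or $\neg\mathtt{cond}$. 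We reconcile versions at the join by adding equalities $x_{k}=x_{\text{then}}$ in one disjunct and $x_k=x_{\text{else}}$ in the other.
\item Each $\mathtt{assert(cond)}$ contributes a disjunct that says $\neg\mathtt{cond}$ holds at that point, conjoined with the path prefix reaching it.
\end{itemize}
Along straight-line code, the correctness of this translation is the standard correspondence between executions in a data model $\mathcal{M}$ and models of the EUF conjunction. We check it by induction on statements, with states corresponding to assignments of formula variables.

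The key case is a $\mathtt{while(cond)\{body\}}$ loop whose body contains no nested loops, no branches, and only unary function applications. The body is then a straight-line sequence of assignments of the form $\mathtt{x:=y}$ or $\mathtt{x:=f(y)}$. Hence one iteration induces a \emph{substitution}: each variable $\mathtt{x}$ is mapped to $h_x(\mathtt{y}_x)$, where $h_x$ is a finite composition of unary symbols (possibly the identity) and $\mathtt{y}_x$ is some variable. These define a functional graph on the variables. Iterating it $n$ times, every variable eventually enters a cycle of this graph. So the value of $\mathtt{x}$ after $n$ iterations has one of two forms:
\begin{itemize}
\item for $n$ below the pre-period, a fixed finite composition;
\item beyond it, $u\big((w)^{q}\big)(v_0)$ with $n=a+pq+r$, where $w$ is the composite unary function around the cycle, $u$ is a fixed prefix, and $r<p$.
\end{itemize}
This is exactly what Example~\ref{example3} presumably illustrates. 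We introduce a fresh natural variable $n$ for the iteration count and case-split into finitely many disjuncts: one for each small value of $n$ and one for each residue $r$ modulo $p$. In each disjunct the loop summary is a conjunction of equalities $x_{\text{out}}=u_r(w^{q}(v_{\text{in}}))$, with $n=a+pq+r$ as a linear constraint. These are legal EUF$^n$ terms because $w$ is a composite unary symbol of the kind allowed in the signature (Section~\ref{section2:formal_syntax}) and $q$ is a $lin$ expression. The exit condition $\neg\mathtt{cond}$ is asserted on the output versions. The entry condition on every earlier iteration is expressed, as in the example, by the bounded quantification $\forall i<n$ over the same parametric terms; this is the form that the \emph{CCG} handles, as discussed after Figure~\ref{fig:exam_program}.

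The main obstacle is the loop-summary step, specifically justifying that $n$-fold iteration of the body substitution is always expressible by finitely many parametric-depth unary terms. I would prove a small lemma first: for a functional graph on $k$ variables whose edges are labelled by unary compositions, the label of the length-$n$ walk from any vertex is $u\circ w^{\lfloor (n-a)/p\rfloor}\circ$ (a fixed suffix) for $n\ge a$, and finitely many residues cover all $n$. With this lemma in hand, the remaining argument is soundness and completeness of the whole encoding: a model of $\phi_P$ yields a data model together with an iteration count per loop that drives $P$ to $Fail$, and conversely. This goes by the same induction as the straight-line case, using Theorem~\ref{theorem:Gen_Congruence} and Axioms~\ref{axiom_new1}--\ref{axiom_new2} to identify $f^{n}$ terms with the $n$-fold unfolding.
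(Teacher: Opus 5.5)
Your proposal takes essentially the same route as the paper. Your version map and structural rules correspond to the paper's mapping $\gamma$ and its encoding rules. Your functional-graph lemma is the paper's Periodic Summary Composition theorem, in the prefix--cycle--suffix form $u\circ w^{q}\circ s_r$ that \texttt{EncodingLoop} actually uses. The fresh iteration count, the pre-period/periodic split, the bounded $\forall i<n$ entry conditions and the exit condition are exactly what \texttt{EncodingLoop} does.

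The one case you skip is \texttt{assume}/\texttt{assert} statements inside loop bodies, which the hypotheses still allow. The paper handles these separately in \texttt{EncodingAssumeInLoop}, by composing the $i$-fold summary with the summary of the body prefix. Your lemma already supports that step directly.
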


\begin{proof}
     We now prove the theorem by presenting an encoding method for such programs.
\end{proof}

We now present an approach for encoding uninterpreted program verification problems using EUF$^n$. While the encoding of sequential and
branching structures is relatively straightforward—for example, the program statement $\mathtt{x := f(y)}$ can be directly encoded as
$x_0 = f(y_0)$, where $x_0$ and $y_0$ are the EUF$^n$ variables corresponding to the current values of $\mathtt{x}$ and $\mathtt{y}$
at that program point (with formula variables being updated upon variable redefinition)—the encoding of loops is nontrivial. This section
establishes the feasibility of loop encoding within this framework. The complete encoding rules and corresponding algorithm are
provided later. We begin by introducing the notion of a $\mathit{block}$, defined by
the following grammar:
\[
\mathit{block} ::= \ \mathtt{skip} \mid \mathtt{x := y} \mid \mathtt{x := f(y)} \mid \mathit{block} \mathtt{; } \mathit{block}
\]
A $\mathit{block}$ is a code segment consisting of \texttt{skip} statements and two types of basic assignment statements.
Since a $\mathit{block}$ is intended to represent the body of a loop, the assignment statements in this definition involve only unary functions.
Note that \texttt{assume} and \texttt{assert} statements are not included in this definition.

% \vspace{-10pt}
% \begin{figure}[ht!]
%       \begin{center}
%       {
%       \begin{tabular}{rcl}
%       $ $\\
%       \end{tabular}
%       }
%       \end{center}
%       \vspace{-10pt}
%       \caption{The syntax of $\mathit{block}$.}
%       \vspace{-10pt}
%       \label{fig:block-syntax}
% \end{figure}

\begin{myDef}[Program-to-Formula Variable Mapping]\label{def:gamma}
    At each program point $\mathrm{loc}$, the mapping $\gamma_{\mathrm{loc}}: \Sigma_V \to \mathsf{U}$ assigns each program variable
    $\mathtt{var} \in \Sigma_V$ a unique formula variable $v \in \mathsf{U}$ (defined in Section~\ref{section2:EUFn_axiom}), translating
    concrete program states into symbolic variables for constraint expressions.
\end{myDef}

\begin{myDef}[Block Summary Function]\label{def:fun_block}
    Let $BLOCK$ denote the set of basic blocks.
    The function $\zeta : BLOCK \times \Sigma_V \rightarrow \mathsf{F} \times \Sigma_V$ maps
    a block $\mathit{b} \in BLOCK$ and variable $\mathtt{v} \in \Sigma_V$ to a pair $(f, \mathtt{v'})$.
    Denote $\gamma_{\mathrm{init}}$ and $\gamma_{\mathrm{final}}$ as the variable mappings
    before entering and after executing $\mathit{b}$, respectively. Then:
    \begin{align*}
        \gamma_{\mathrm{final}}(\mathtt{v}) = f(\gamma_{\mathrm{init}}(\mathtt{v'}))
    \end{align*}
    Thus, $\zeta$ computes $BLOCK$'s aggregate effect on $\Sigma_V$ variables and $f \in \mathsf{F}$ (defined in Section~\ref{section2:EUFn_axiom})
    is the summary function. For clarity, we write the resulting pair $(f, \mathtt{v'})$ in the form $f(\mathtt{v'})$.
\end{myDef}

\begin{myExample}     \label{example3}
     \begin{figure}[h]
        \vspace{-8pt}
            \centering
            \input{figure/encoding_block.tex}
            \vspace{-8pt}
            \caption{Compositional Effects of $\mathit{block}^n$ on Program Variables.}
            \label{fig:encoding_block}
            \vspace{-8pt}
      \end{figure}
      This example presents a block $\mathit{block}$ with its corresponding $\zeta$ function, along with $\zeta$ functions for
      $\mathit{block}$ repeated $1$ to $4$ times (denoted $\mathit{block}^1, \dots, \mathit{block}^4$). As shown in
      Figure~\ref{fig:encoding_block}, each $\zeta$ computes a block's effect on specific variables. We observe that as
      repetition count increases, $\zeta$'s summary functions for variables like \texttt{x} exhibit periodicity. Specifically,
      compositions alternate between functions $g$ and $h \circ f$, with parameters \texttt{y} and \texttt{z} appearing alternately.
      This behavior stems from the block summary in Figure~\ref{subfig:summary_block}:
      $$
      \zeta(\mathit{block}^4,\mathtt{x}) = f( \zeta(\mathit{block}^3,\mathtt{y}) ) = (f \circ g)( \zeta(\mathit{block}^3,\mathtt{z}) ) = \cdots
      $$
      The finite set of summary functions causes periodically growing compositions during unrolling. Variables \texttt{y} and
      \texttt{z} follow similar patterns. Theorem~\ref{theorem_period} formalizes this phenomenon with proof.
\end{myExample}

% \begin{theorem}   \label{theorem_period}
%       For any $\mathit{block} \in BLOCK$, and for any $ \mathtt{x} \in \Sigma_{V} $, there exist a constant $k$ and a constant $p$,
%       as well as functions $F_j \in \mathsf{F} \ s.t.\  (0 \leq j < p)$, and a variable $y$, such that for any $i \geq k$,
%       $ \zeta( \mathit{block}^{i+cp}, x) = F_{j}^{c}( \zeta( \mathit{block}^{i}, y) ) \ s.t. \ c=0,1,\cdots  $.
% \end{theorem}

\begin{myThe}[Periodic Summary Composition] \label{theorem_period}
    For any basic block $\mathit{block} \in BLOCK$ and variable $\mathtt{x} \in \Sigma_{V}$, there exist constants $k, p \in \mathbb{N}$,
    functions $\{f_j \in \mathsf{F} \mid 0 \leq j < p\}$, and variable $y \in \Sigma_V$,
    such that for all integers $i \geq k$ and $c \geq 0$:
    $$
    \zeta(\mathit{block}^{i+cp}, \mathtt{x}) = (f_{0} \circ \cdots \circ f_{p-1})^{c}\left( \zeta( \mathit{block}^{i}, y) \right)
    $$
    where:
    \begin{itemize}
        \item $k$ is the pre-period constant: the minimal repetitions after which summary functions exhibit periodic composition
        \item $p$ is the period constant: the cycle length of identical function composition in subsequent repetitions
    \end{itemize}
\end{myThe}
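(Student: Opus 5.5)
The plan is to model one execution of $\mathit{block}$ as a finite functional graph on program variables and then apply a pigeonhole argument to walks in that graph.

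\textbf{Step 1: the one-step dependence map.} Since $\mathit{block}$ is built only from $\mathtt{skip}$, $\mathtt{x:=y}$ and $\mathtt{x:=f(y)}$ with unary $f$, a straightforward induction on the grammar of $\mathit{block}$ shows the following. For every variable $\mathtt{x}$, $\zeta(\mathit{block},\mathtt{x})$ is a pair $(g_{\mathtt{x}}, \delta(\mathtt{x}))$, where $g_{\mathtt{x}}$ is a finite composition of unary symbols (possibly the identity, when $\mathtt{x}$ is untouched or merely copied) and $\delta:\Sigma_V\to\Sigma_V$ is a total map. The sequential case composes the summaries: if $b_2$ reads $\mathtt{v}$ with $\zeta(b_2,\mathtt{x})=g(\mathtt{v})$, then $\zeta(b_1;b_2,\mathtt{x})=g(\zeta(b_1,\mathtt{v}))$.

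\textbf{Step 2: unrolling.} Applying the same composition law to $\mathit{block}^{i+1}=\mathit{block}^{i};\mathit{block}$ gives
\[
\zeta(\mathit{block}^{i},\mathtt{x}) = g_{\mathtt{x}}\circ g_{\delta(\mathtt{x})}\circ\cdots\circ g_{\delta^{i-1}(\mathtt{x})}\bigl(\gamma_{\mathrm{init}}(\delta^{i}(\mathtt{x}))\bigr).
\]
More usefully, for all $i,j$ we have $\zeta(\mathit{block}^{i+j},\mathtt{x}) = G_{\mathtt{x},j}\bigl(\zeta(\mathit{block}^{i},\delta^{j}(\mathtt{x}))\bigr)$, where $G_{\mathtt{x},j}=g_{\mathtt{x}}\circ\cdots\circ g_{\delta^{j-1}(\mathtt{x})}$. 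Here we read $\mathit{block}^{i+j}$ as the first $j$ copies applied on top of the last $i$ copies, reindexed suitably; the order must be fixed carefully so that outer functions correspond to later iterations.

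\textbf{Step 3: periodicity.} $\Sigma_V$ is finite, so the orbit $\mathtt{x},\delta(\mathtt{x}),\delta^2(\mathtt{x}),\dots$ is eventually periodic. By pigeonhole there are $k_0\le|\Sigma_V|$ and $p\ge1$ with $\delta^{k_0+p}(\mathtt{x})=\delta^{k_0}(\mathtt{x})$. Put $\mathtt{y}=\delta^{k_0}(\mathtt{x})$ and define $f_j=g_{\delta^{k_0+j}(\mathtt{x})}$ for $0\le j<p$. The cycle word $g_{\mathtt{y}}\circ\cdots\circ g_{\delta^{p-1}(\mathtt{y})}$ is then exactly $f_0\circ\cdots\circ f_{p-1}$, and iterating it $c$ times follows the cycle $c$ times, giving the factor $(f_0\circ\cdots\circ f_{p-1})^c$.

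\textbf{Main obstacle: the prefix.} The formula in the statement has no prefix $G_{\mathtt{x},k_0}$ outside the power, so the statement must be read with the periodic part sitting on top of $\zeta(\mathit{block}^{i},\mathtt{y})$. First I would try splitting $\mathit{block}^{i+cp}$ as $\mathit{block}^{i};\mathit{block}^{cp}$, so that the first $i$ iterations are innermost, and choosing $k$ large enough that the variable read after the innermost segment lands on the cycle. If that indexing does not match, the fallback is to absorb the prefix by cyclically rotating the $f_j$: choose $k$ a multiple of $p$ with $k\ge k_0$. Then $\delta^{k}(\mathtt{x})$ lies on the cycle, and the prefix composed with the power can be rewritten as the power of a rotated cycle word composed with $\zeta(\mathit{block}^{i},\mathtt{y}')$ for a suitable $\mathtt{y}'$ on the cycle. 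I would verify this rewriting by the conjugation identity
\[
A\circ(BA)^c = (AB)^c\circ A.
\]
Finally, if $\mathtt{x}$ is never reassigned, $g_{\mathtt{x}}$ is the identity and the statement is trivial with $p=1$ and $f_0=\mathrm{Id}$. I would note that $\mathsf{F}$ is assumed closed under composition and contains the identity, as permitted by the syntax in Section~\ref{section2:formal_syntax}.
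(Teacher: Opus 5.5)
Your Steps 1--3 follow the paper's argument (a pigeonhole on the finite dependence map $\delta$), stated more carefully. When $\mathtt{x}$ lies on a cycle of $\delta$, your Step~2 formula with $j=cp$ already gives the display with $k=0$ and $y=\mathtt{x}$. The gap is in your ``main obstacle'' paragraph: when $\mathtt{x}$ has pre-period $k_0>0$, neither of your fixes works.

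Re-splitting gives $\zeta(\mathit{block}^{i+cp},\mathtt{x})=G_{\mathtt{x},cp}\bigl(\zeta(\mathit{block}^{i},\delta^{cp}(\mathtt{x}))\bigr)$. The outermost word here always begins with the tail word $G_{\mathtt{x},k_0}$, however large $k$ is. The inner variable $\delta^{cp}(\mathtt{x})$ is $\mathtt{x}$ for $c=0$ and differs from $\mathtt{x}$ for every $c\ge 1$, so no single $y$ fits. The conjugation identity $A\circ(BA)^c=(AB)^c\circ A$ moves $A$ across the power only if $A$ is a suffix of the cycle word. But $A=G_{\mathtt{x},k_0}=g_{\mathtt{x}}\circ\cdots\circ g_{\delta^{k_0-1}(\mathtt{x})}$ is built from the summary functions of the tail variables, which need not occur in $f_0\circ\cdots\circ f_{p-1}$ at all.

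No argument can close this, because for $p\ge 1$ the displayed identity is false for tail variables; with $p=0$ it is vacuous. Take $\mathit{block}=\mathtt{x:=f(z);\,z:=g(z)}$ with $f\neq g$. Then $\zeta(\mathit{block}^i,\mathtt{x})=(f\circ g^{i-1})(\mathtt{z})$ and $\zeta(\mathit{block}^i,\mathtt{z})=g^{i}(\mathtt{z})$ for $i\ge 1$. Write $F=f_0\circ\cdots\circ f_{p-1}$.
\begin{itemize}
\item The case $c=0$ forces $y=\mathtt{x}$. For $y=\mathtt{z}$ it would need $g^i=f\circ g^{i-1}$, and any untouched variable is a different variable.
\item The case $c=1$ then demands $f\circ g^{i+p-1}=F\circ f\circ g^{i-1}$, that is, $f\circ g^{p}=F\circ f$ as words. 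This is impossible, since the left side ends in $g$ and the right side ends in $f$.
\end{itemize}

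The paper's proof has the same hole: it passes from ``the expansion pattern must exhibit periodicity'' straight to the display. What your Steps 1--3 actually prove is the version with the pre-period kept outermost. This is also the shape $f_{k_{\mathtt{x}}}\circ f_{p_{\mathtt{x}}}^{\lfloor\cdot\rfloor}\circ f_{s_{\mathtt{x}}}$ that Algorithm~\ref{algo:encodeing_loop} uses:
\[
\zeta(\mathit{block}^{i+cp},\mathtt{x}) = G_{\mathtt{x},k_0}\Bigl((f_0\circ\cdots\circ f_{p-1})^{c}\bigl(\zeta(\mathit{block}^{i-k_0},y)\bigr)\Bigr),\qquad i\ge k_0,\ y=\delta^{k_0}(\mathtt{x}).
\]
You should prove and use this statement rather than try to rotate the prefix away.
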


\begin{proof}
    For any basic block $\mathit{block} \in BLOCK$ and variable $\mathtt{x} \in \Sigma_V$, there exists a variable $\mathtt{y} \in \Sigma_V$ such that:
    $$
    \zeta(\mathit{block}, \mathtt{x}) = f_{\mathtt{x}}(\mathtt{y})
    $$
    This captures the immediate effect of $\mathit{block}$ on $\mathtt{x}$, as shown in Figure~\ref{subfig:summary_block}. Extending to $\mathit{block}$ repeated $i$ times, we derive:
    $$
    \zeta(\mathit{block}^{i}, \mathtt{x}) = f_{\mathtt{x}}(\zeta(\mathit{block}^{i-1}, \mathtt{y}))
    $$
    Further expanding $\zeta(\mathit{block}^{i-1}, \mathtt{y})$ leads to:
    $$
    \zeta(\mathit{block}^{i}, \mathtt{x}) = f_{\mathtt{x}} \circ f_{\mathtt{y}} \circ \cdots \circ f_{\mathtt{z}}(\mathtt{w})
    $$
    where $\mathtt{w} \in \Sigma_V$ is the innermost parameter.

    Since the set of $\zeta$ functions is finite (bounded by $|\Sigma_V|$) and the composition of summary functions is
    deterministic, the expansion pattern must exhibit periodicity. Therefore, there exists $k \in \mathbb{N}$ where for
    all $i \geq k$, the function compositions become periodic. Let $p$ be the period length, then:
    $$
    \zeta(\mathit{block}^{i+cp}, \mathtt{x}) = (f_0 \circ \cdots \circ f_{p-1})^c (\zeta(\mathit{block}^{i}, \mathtt{y}))
    $$
    Here $\{f_j \in \mathsf{F}_{\mathsf{U}} \mid 0 \leq j < p\}$ represents the cyclic composition pattern observed during expansion.
\end{proof}

The validity of Theorem~\ref{theorem_period} enables the encoding of loops in uninterpreted programs using EUF$^n$.
We now present encoding rules for uninterpreted programs $P \in \mathcal{P}$ using $\text{EUF}^{n}$ formulas. We begin by defining the
signature $\Sigma_{\text{EUF}^n} = (\mathsf{U} \cup \mathsf{N}, \mathsf{F}, \mathsf{C}, \{ =, \neq \})$ where: $\mathsf{U}$ denotes
variables of uninterpreted type, $\mathsf{N}$ denotes variables of natural number type, $\mathsf{F}$ denotes function symbols,
$\mathsf{C}$ denotes constants, $\{=, \neq\}$ denote equality and disequality relations.
The complete encoding rules appear in Figure~\ref{fig:rule}. We specifically omit a separate rule for $\texttt{assert(cond)}$ statements
since they can be reduced to $\texttt{assume(cond)}$ statements and encoded using existing $\texttt{assume}$ rules.

\begin{figure}[h]
      \centering
        \input{figure/encoding_UP_rule.tex}
      \caption{Rules for encoding uninterpreted programs as $\text{EUF}^{n}$ formulas}
      \label{fig:rule}
\end{figure}

% In this section, we introduce the rules for encoding uninterpreted programs $P \in \mathcal{P}$ using $\text{EUF}^{n}$ formulas. First, we present the symbol set
% $ \mathsf{\Sigma}_{EUF} = (\mathsf{F}, \mathsf{U} \cup \mathsf{N}, \mathsf{C}, \{ =, \neq \}) $ for $\text{EUF}^{n}$ formulas, where $ \mathsf{U}$ represents the set of general
% variables, $\mathsf{N}$ represents the set of natural number variables, $ \mathsf{F} $ represents the set of function symbols, $\mathsf{C}$ denote the set of constants,
% and $ = $ and $ \neq $ denote equality and disequality, respectively. The encoding rules for uninterpreted programs are shown
% in Figure \ref{fig:rule}. In these rules, we do not provide an encoding rule for \texttt{assert(cond)} because \texttt{assert(cond)} can be
% transformed into \texttt{assume(cond)}, and thus can be encoded using the encoding rules for \texttt{assume} statement.

These encoding rules use a state tuple $\langle \gamma, pc \rangle$ where $\gamma: \Sigma_{V} \rightarrow \mathsf{U}$ (Definition~\ref{def:gamma})
maps program variables to formula variables, and $pc$ represents the path condition formula from program encoding. The notation $\mathtt{e} \downarrow_{\gamma}$
transforms program statement $\mathtt{e}$ by substituting variables according to $\gamma$ into a formula. For example,
$\mathtt{x := y}\downarrow_{\gamma[\mathtt{x} \mapsto x^{\prime}, \mathtt{y} \mapsto y^{\prime}]}$ yields $x^{\prime} = y^{\prime}$. Here
$\gamma[\mathtt{x} \mapsto x^{\prime}, \mathtt{y} \mapsto y^{\prime}]$ updates $\gamma$ to map $\mathtt{x}$ to $x^{\prime}$ while preserving other variable
mappings. The binary operator $\uplus$ maintains concurrent encodings for both branches of $\mathtt{if}$-statements. The function
$\texttt{EncodingLoop}(\gamma, \mathtt{Loop})$ handles loop encoding and returns the resulting formula.

\input{algorithms/algo_encodingloop.tex}

Algorithm~\ref{algo:encodeing_loop} (\texttt{EncodingLoop}) encodes loops into $\text{EUF}^{n}$ formulas, taking loop code segments as input and
producing encoded formulas. The algorithm first allocates a new variable for loop iteration count (Line~\ref{algo:loop_num}), then processes
the loop body $\mathtt{block}$ by removing $\mathtt{assume}$ statements (Line~\ref{algo:blook}), and initializes $\Psi$ to
$\mathbf{true}$ (Line~\ref{algo:Psi}).

The loop in Line~\ref{algo:forvar} processes each variable $\mathtt{var} \in \Sigma_{V}$, computing a quadruple per Theorem~\ref{theorem_period}.
Here $k_{\mathtt{var}}$ represents repetitions until periodic composition begins, $p_{\mathtt{var}}$ denotes period length of function composition,
$f_{k_{\mathtt{var}}}$ gives pre-period composition (\emph{e.g.}, $f$ for $\mathtt{x}$ in Example~\ref{example3}), and $f_{p_{\mathtt{var}}}$ provides
periodic composition (\emph{e.g.}, $g \circ h \circ f$ for $\mathtt{x}$ in Example~\ref{example3}).

Lines \ref{algo:inloop1}-\ref{algo:inloop2} encode loop entry conditions, focusing on the $\mathtt{x = y}$ case (the $\mathtt{x \neq y}$ case follows similarly).
Since loops may terminate mid-period, Line \ref{algo:inloop2} introduces $f_{s_{\mathtt{x}}}$ and $f_{s_{\mathtt{y}}}$ to represent partial function
compositions after $m$ iterations. For example, when $m=4$ in Example~\ref{example3}, $f_{s_{\mathtt{x}}} = g$.
These partial compositions depend on $m \bmod p_\mathtt{x}$ and $m \bmod p_\mathtt{y}$. Given constant $p_\mathtt{x}$ and $p_\mathtt{y}$ values,
both $f_{s_{\mathtt{x}}}$, $f_{s_{\mathtt{y}}}$ and their input mappings $\gamma(\mathtt{var})$ admit finite representations. Integer division operations are
expressible via existential Presburger arithmetic due to constant period lengths.

Lines \ref{algo:outloop1}-\ref{algo:outloop2} encode loop exit conditions using fresh variables $x^{\prime}, y^{\prime}$ not present in prior $pc$ contexts.
The $\mathtt{assume}$ statement encoding (Line \ref{algo:encodeassume}) mirrors loop condition encoding. For instance, encoding $\mathtt{assume(z = b)}$ after
$\mathtt{block}_1$ combines: (1) The summary of first $(i-1)$ $\mathtt{block}$ executions on $\mathtt{z}$ and $\mathtt{b}$ (2) $\mathtt{block}_1$'s effect on $\mathtt{z}$ and $\mathtt{b}$.

%% file: algorithms/algo_encodingIDR.tex
\begin{algorithm}[htbp]
    \caption{EncodingIDRP} \label{algo:encoding_regex}
    \KwIn{
        \begin{tabular}{l l}
            $\mathcal{R}$ & : A regular expression satisfying the constraint defined in Theorem~\ref{theorem:dyck-decidable}.  \\
            $\Sigma_\alpha, \Sigma_\beta$ & : Alphabets used in the IDRP instance.
        \end{tabular}
    }
    \KwOut{The EUF$^n$ formula $\Psi$ obtained from encoding the IDRP.}

    \SetKwFunction{FSubFunc}{Encoding} % 子函数名
    \SetKwProg{Fn}{Function}{:}{}

    Let $\Sigma_{\text{EUF}^n} = (\mathsf{U} \cup \mathsf{N}, \mathsf{F}, \mathsf{C}, \{=, \neq\})$ denote the EUF$^n$ signature.   \label{algo3:init} \\
    Let $\Psi \leftarrow \mathbf{true}$, let $v_{\text{old}}, v^{\prime}_{\text{old}} \leftarrow$  select two fresh variables from $\mathsf{U}$ \label{algo3:init_phi}. \\
    let $v_{\text{new}} \leftarrow v_{\text{old}}$, $v^{\prime}_{\text{new}} \leftarrow v^{\prime}_{\text{old}} $ . \label{algo3:v_ini} \\
    Let $\pi: \Sigma_\alpha \rightarrow \mathsf{F} \cup \{\mathrm{Id}\}$ and $\pi^{\prime}: \Sigma_\beta \rightarrow \mathsf{F} \cup \{\mathrm{Id}\}$ be
    mappings that assign function symbols to terminal symbols.             \label{algo3:def_pi} \\
    $\Psi \leftarrow \FSubFunc (\mathcal{R}) $   \label{algo3:fun_call} \\
    return ($\Psi$, ($v_{\text{new}}, v_{\text{old}}$), ($v^{\prime}_{\text{new}}, v^{\prime}_{\text{old}}$))       \label{algo3:return}

    \BlankLine
    \Fn{\FSubFunc {$\mathcal{R}$}}{                                \label{algo3:start}
        $ \Psi_r \leftarrow \mathbf{true} $                                 \label{algo3:ini_phi_r}                   \\
        \If{$\mathcal{R}$ is a terminal symbol $\mathbf{a}$   \label{algo3:terminal_start} }
         {
            $v_{\text{new}}, v^{\prime}_{\text{new}} \leftarrow$  select two fresh variables from $\mathsf{U}$.       \label{algo3:v_new} \\
            \If{$\mathbf{a}$ is an opening label in $\Sigma_\alpha$ \textcolor{silvergrey}{/$\Sigma_\beta$}  \label{algo3:opening} }
            { $\Psi_r \leftarrow  v_{\text{old}}=\pi(\mathbf{a})(v_{\text{new}}) \land v^{\prime}_{\text{old}}=v^{\prime}_{\text{new}} $
             \textcolor{silvergrey}{/$\Psi_r \leftarrow v^{\prime}_{\text{old}}=\pi^{\prime}(\mathbf{a})(v^{\prime}_{\text{new}}) \land v_{\text{old}}=v_{\text{new}}$}}  \label{algo3:encode_opening}
            \If{$\mathbf{a}$ is an closing label in $\Sigma_\alpha$ \textcolor{silvergrey}{/$\Sigma_\beta$}   \label{algo3:closing} }
            { $\Psi_r \leftarrow v_{\text{new}}=\pi(\mathbf{a})(v_{\text{old}}) \land v^{\prime}_{\text{old}}=v^{\prime}_{\text{new}} $
             \textcolor{silvergrey}{/$\Psi_r \leftarrow v^{\prime}_{\text{new}}=\pi^{\prime}(\mathbf{a})(v^{\prime}_{\text{old}}) \land v_{\text{old}}=v_{\text{new}} $} }   \label{algo3:encode_closing}
            % \If{$\mathbf{a}$ is an opening label in $\Sigma_\beta$ }
            % { $\Psi_r \leftarrow v^{\prime}_{\text{old}}=\pi(\mathbf{a})(v^{\prime}_{\text{new}}) \land v_{\text{old}}=v_{\text{new}} $ }
            % \If{$\mathbf{a}$ is an closing label in $\Sigma_\beta$ }
            % { $\Psi_r \leftarrow v^{\prime}_{\text{new}}=\pi(\mathbf{a})(v^{\prime}_{\text{old}}) \land v_{\text{old}}=v_{\text{new}} $ }
            \If{$\mathbf{a}=\epsilon$    \label{algo3:epsilon} }
            { $\Psi_r \leftarrow v_{\text{old}}=v_{\text{new}} \land v^{\prime}_{\text{old}}=v^{\prime}_{\text{new}}$             \label{algo3:encode_epsilon}}
            $v_{\text{old}} \leftarrow v_{\text{new}}, v^{\prime}_{\text{old}} \leftarrow v^{\prime}_{\text{new}}$                \label{algo3:update_v_old}
         }
         \If{$\mathcal{R} = \mathcal{R}_1 \cdot \mathcal{R}_2$ \textcolor{silvergrey}{/$\mathcal{R} = \mathcal{R}_1 + \mathcal{R}_2$}  \label{algo3:recu1} }
         {
           $\Psi_r = \FSubFunc(\mathcal{R}_1) \land \FSubFunc(\mathcal{R}_2)$
           \textcolor{silvergrey}{/$\Psi_r = \FSubFunc(\mathcal{R}_1) \lor \FSubFunc(\mathcal{R}_2)$}                                     \label{algo3:recu2}
         }

         \If{$\mathcal{R} = \mathcal{R}_1 \text{\Kleene}$                    \label{algo3:kleene_star} }
         {
            Let $n \leftarrow$  select a fresh variable from $\mathsf{N}$                      \label{algo3:select_n} \\
            Let $\Psi_{r} \leftarrow \texttt{EncodingStar}(\mathcal{R}_1, v_{\text{old}}, n, D_{\alpha}, D_{\beta})$        \label{algo3:encoding_star}
            % Let $v_{t} \leftarrow v_{\text{old}},\ v^{\prime}_{t} \leftarrow v^{\prime}_{\text{old}},\ \Psi_{star} \leftarrow \FSubFunc(\mathcal{R}_1)$ \\
            % \If{ $\Psi_{star} \vdash v_{\text{old}/t} = f_1 \circ \cdots \circ f_k(v_{t/\text{old}}) \land  v^{\prime}_{\text{old}/t} = g_1 \circ \cdots \circ g_m(v^{\prime}_{t/\text{old}})$ \\
            % where $f_1, \cdots, f_k, g_1, \cdots, g_m \in \mathsf{F} \cup \{Id\}$}
            % {
            %     Let $n \leftarrow$  select a fresh variable from $\mathsf{N}$ \\
            %     $\Psi_r \leftarrow v_{\text{old}/t} = (f_1 \circ \cdots \circ f_k)^n (v_{t/\text{old}}) \land v^{\prime}_{\text{old}/t} = (g_1 \circ \cdots \circ g_m)^n (v^{\prime}_{t/\text{old}})$
            % }\Else{
            %     $v_{\text{old}} \leftarrow v_{t} $
            % }
         }
        return $\Psi_r$                                   \label{algo3:end}
    }
\end{algorithm}

%% file: figure/encoding_block.tex
  \begin{tcolorbox}[colback=gray!10, colframe=black!30, boxsep=5pt, left=10pt, right=10pt, top=10pt, bottom=10pt]
    \vspace{-12pt}
    \begin{subfigure}{.28\textwidth}
          \centering
          \begin{align*}
            % & // block:        \\
            & \mathtt{x := f(y);}       \\
            & \mathtt{y := g(z);}       \\
            & \mathtt{z := h(x);}
          \end{align*}
          \vspace{-10pt}
          \caption{Code Structure of $\mathit{block}$.}
      \end{subfigure}
      \hfill
      \begin{subfigure}{.28\textwidth}
          \centering
          \begin{align*}
            % \\
            &  \zeta (\mathit{block}, \mathtt{x}) = f (\mathtt{y})  \\
            &  \zeta (\mathit{block}, \mathtt{y}) = g(\mathtt{z})  \\
            &  \zeta (\mathit{block}, \mathtt{z}) = (h \circ f)(\mathtt{y})
          \end{align*}
          \vspace{-10pt}
          \caption{Summary functions of $\mathit{block}$.}
          \label{subfig:summary_block}
      \end{subfigure}
      \hfill
      \begin{subfigure}{.40\textwidth}
          \centering
          \begin{align*}
            % \\
            &  \zeta (\mathit{block}^2, \mathtt{x}) = (f \circ g) (\mathtt{z})  \\
            &  \zeta (\mathit{block}^2, \mathtt{y}) = (g \circ h \circ f) (\mathtt{y})   \\
            &  \zeta (\mathit{block}^2, \mathtt{z}) = (h \circ f \circ g ) (\mathtt{z})
          \end{align*}
          \vspace{-10pt}
          \caption{Summary functions of $\mathit{block}^2$.}
      \end{subfigure}

      \vspace{-12pt}

      \begin{subfigure}{.40\textwidth}
          \centering
          \begin{align*}
            \\
            &  \zeta (\mathit{block}^3, \mathtt{x}) = (f \circ g \circ h \circ f) (\mathtt{y})   \\
            &  \zeta (\mathit{block}^3, \mathtt{y}) = (g \circ h \circ f \circ g) (\mathtt{z})    \\
            &  \zeta (\mathit{block}^3, \mathtt{z}) = (h \circ f \circ g \circ h \circ f) (\mathtt{y})
          \end{align*}
          \vspace{-10pt}
          \caption{Summary functions of $\mathit{block}^3$.}
      \end{subfigure}
      \hfill
      \begin{subfigure}{.40\textwidth}
            \centering
            \begin{align*}
              \\
              &  \zeta (\mathit{block}^4, \mathtt{x}) = (f \circ g \circ h \circ f \circ g) (\mathtt{z})   \\
              &  \zeta (\mathit{block}^4, \mathtt{y}) = (g  \circ h \circ f \circ g \circ h \circ f) (\mathtt{y})    \\
              &  \zeta (\mathit{block}^4, \mathtt{z}) = (h \circ  f \circ g \circ h \circ f \circ g) (\mathtt{z})
            \end{align*}
            \vspace{-10pt}
            \caption{Summary functions of $\mathit{block}^4$.}
        \end{subfigure}
        \vspace{-8pt}
   \end{tcolorbox}

%% file: figure/encoding_UP_rule.tex
\[
    \frac{
         P=\mathtt{skip;s}   \rightarrow  P=\mathtt{s}
    }{
        \stdef{ \gamma , pc }  \xrightarrow{\mathtt{skip}}  \stdef {\gamma, pc }
    }
\]

\[
    \frac{
         P=\mathtt{x:=e;s}  \rightarrow  P=\mathtt{s} \wedge x^{\prime} \in \mathsf{V}_{\mathsf{U}}  \text{ is fresh in}\ pc
    }{
        \stdef{ \gamma , pc }  \xrightarrow{\mathtt{x:=e}}  \stdef {\gamma [ \mathtt{x} \mapsto x^{\prime} ], pc \wedge x^{\prime} = \mathtt{e} \downarrow_{\gamma} }
    }
\]

\[
    \frac{
         P=\mathtt{assume(cond);s}  \rightarrow  P=\mathtt{s}
    }{
        \stdef{ \gamma , pc }  \xrightarrow{\mathtt{assume(cond)}}  \stdef {\gamma, pc \wedge \mathtt{cond} \downarrow_{\gamma} }
    }
\]

\[
    \frac{
        P = \mathtt{s_1; s_2}   \rightarrow   P = \mathtt{ s_2}
    }{
        \stdef{\gamma, pc}  \xrightarrow{\mathtt{s_1}}  \stdef{\gamma^{\prime}, pc^{\prime}}
    }
\]

\[
    \frac{
         P= \mathtt{if \ (cond) \ then \  s_1\ else \ s_2;s}  \rightarrow  P_1=\mathtt{s_1;s}  \uplus   P_2=\mathtt{s_2;s}
    }{
        \stdef{ \gamma , pc }  \xrightarrow{\mathtt{if \ block}}  \stdef {\gamma, pc \wedge \mathtt{cond} \downarrow_{\gamma} } \uplus  \stdef {\gamma, pc \wedge \neg \mathtt{cond}\downarrow_{\gamma} }
    }
\]

\[
    \frac{
    \begin{array}{c}
        P = \mathtt{while\ (x=y) \{ block_1;assume(z=k); \dots \}; s } \rightarrow P = \mathtt{s} \\
        \wedge  x^{\prime}, y^{\prime}, \dots  \in \mathsf{V}_{\mathsf{U}}  \text{ is fresh in} \ pc \\
    \end{array}
    }{
        \begin{array}{c}
            \stdef{ \gamma, pc} \xrightarrow{\mathtt{while \ block}} \stdef{ \gamma[\mathtt{x} \mapsto x^{\prime}, \mathtt{y} \mapsto y^{\prime}, \dots], pc \wedge pc_{\text{new}}} \\
            pc_{\text{new}} = \texttt{EncodingLoop}(\gamma, \mathtt{while \ (x=y)\  \{  block_1;assume(z=b); \dots \} } )
        \end{array}
    }
\]

% \[
%     \stdef{ \mathbf{skip};s , q, pc}  \rightarrow  \stdef {s, q, pc }
% \]

% \[
%     \frac{
%       \stdef{ e, q } \Downarrow v   \quad    x^{\prime} \in \mathsf{V}_{\mathsf{U}} \ is \ fresh \ in \ pc
% }{
%       \stdef{x:=e, q, pc}  \rightarrow  \stdef{\textbf{ skip }; q [ x \mapsto x^{\prime} ] , pc \wedge x^{\prime}=v}
% }
% \]

% \[
%     \frac{
%       \stdef{ cond, q } \Downarrow v
% }{
%       \stdef{assume(cond);s, q, pc}  \rightarrow  \stdef{\textbf{ skip };s, q, pc \wedge v}
% }
% \]

% \[
%     \frac{
%       \stdef{ s_1, q, pc} \rightarrow \stdef{ {s_1}^{\prime}, q^{\prime}, pc^{\prime}}
% }{
%       \stdef{s_1; s_2, q, pc} \rightarrow  \stdef{{s_1}^{\prime}; s_2, q^{\prime}, pc^{\prime}}
% }
% \]

% \[
%     \stdef{ \mathbf{if} \  (cond)\  \mathbf{then}\  s_1\  \mathbf{else}\  s_2; s, pc}  \rightarrow  \stdef {s_1; s, pc \wedge cond} \uplus  \stdef {s_2; s, pc \wedge \neg cond}
% \]

% \[
%     \frac{
%     \begin{array}{c}
%         x^{\prime}, y^{\prime}, \dots  \in \mathsf{V}_{\mathsf{U}} \ is \ fresh \ in \ pc \\
%     \end{array}
%     }{
%     \begin{array}{c}
%         \stdef{ \mathbf{while} \ (x=y)\  \{  block_1;assume(z=k); \dots \}; s, q, pc} \rightarrow  \stdef{ s, q[x \mapsto x^{\prime}, y \mapsto y^{\prime}, \dots], pc \wedge pc_{new}} \\
%         pc_{new} ::= EncodingLoop(q, \mathbf{while} \ (x=y)\  \{  block_1;assume(z=k); \dots \} )
%     \end{array}
%     }
% \]

%% file: algorithms/algo_encodingloop.tex
\begin{algorithm}[!ht]
    \caption{ EncodingLoop($\gamma, \texttt{Loop}$) } \label{algo:encodeing_loop}
    \KwIn{ $ \texttt{Loop} ::= \mathtt{while \ (x=y)\  \{block_1;assume(z=b);\dots\} }$ }
    \KwOut{The formula derived from encoding $\texttt{Loop}$. }
    Let $m$ be the number of times the while loop executes.     \label{algo:loop_num}    \\
    Let $\mathtt{block} \leftarrow \mathtt{block}_1;\cdots$       \label{algo:blook}      \\
    Let $\Psi \leftarrow \mathbf{true} $      \label{algo:Psi}          \\
    \ForEach{$ \mathtt{var} \in \Sigma_{V} $}  {  \label{algo:forvar}

          %  $ f_{\mathtt{var}}(q(\mathtt{var}_1)) = \zeta(\mathtt{block}, \mathtt{var}) \downarrow_{\gamma} $      \tcp*[f]{  Calculate $\zeta$ functions for each $\mathtt{var}$ }    \\
           $ (k_{\mathtt{var}}, p_{\mathtt{var}}, f_{k_{\mathtt{var}}}, f_{ p_{\mathtt{var}} }) =  \mathtt{CalculatePeriod(block, \mathtt{var})}$
         }
    Let $ k \leftarrow  Max(k_\mathtt{x}, k_\mathtt{y}) $ \\
    $ \Psi \leftarrow \Psi \wedge \forall i. i < m \wedge i < k \rightarrow
    \zeta(\mathtt{block}^{i},\mathtt{x}) \downarrow_{\gamma} =\zeta(\mathtt{block}^{i}, \mathtt{y}) \downarrow_{\gamma} $  \label{algo:inloop1}  \\

    $ \Psi \leftarrow \Psi \wedge \forall i. i < m \wedge i \geq k \rightarrow  f_{k_{\mathtt{x}}}
    \circ f_{p_{\mathtt{x}}} ^ { \lfloor \frac{i-k}{p_{\mathtt{x}}} \rfloor } \circ f_{s_{\mathtt{x}}}( \gamma(\mathtt{var}_1) ) =  f_{k_{\mathtt{y}}}
    \circ f_{p_{\mathtt{y}}} ^ { \lfloor \frac{i-k}{p_{\mathtt{y}}} \rfloor}  \circ f_{ s_{\mathtt{y}} }( \gamma(\mathtt{var}_2) )   $  \label{algo:inloop2} \\

    $\Psi \leftarrow \Psi \wedge   m < k \rightarrow [ x^{\prime} = \zeta(\mathtt{block}^{m},\mathtt{x}) \downarrow_{\gamma} ] \wedge
    [ y^{\prime} = \zeta(\mathtt{block}^{m}, \mathtt{y}) \downarrow_{\gamma} ] \wedge [ x^{\prime} \neq y^{\prime} ] $   \label{algo:outloop1}  \\

    $ \Psi \leftarrow \Psi  \wedge  m \geq k \rightarrow  [ x^{\prime} = f_{k_{\mathtt{x}}}  \circ f_{p_{\mathtt{x}}} ^ { \lfloor \frac{m-k}{p_{\mathtt{x}}} \rfloor }
    \circ f_{ s_{\mathtt{x}} }( \gamma(\mathtt{var}_3) ) ] \wedge [ y^{\prime} = f_{k_{\mathtt{y}}}  \circ f_{p_{\mathtt{y}}} ^ { \lfloor \frac{m-k}{p_{\mathtt{y}}} \rfloor}  \circ f_{s_{\mathtt{y}}}( \gamma(\mathtt{var}_4) ) ]
    \wedge [ x^{\prime} \neq y^{\prime} ] $             \label{algo:outloop2}     \\

    $\Psi \leftarrow \Psi \wedge \mathtt{EncodingAssumeInLoop}(\texttt{Loop}) $   \label{algo:encodeassume}   \\
    return $ \Psi$
\end{algorithm}

%% file: ref.bib
@phdthesis{bueno2021software,
  title={Software Model Checking with Uninterpreted Functions},
  author={Bueno, Denis},
  year={2021}
}

@inproceedings{1997Automatic,
  title={Automatic verification of pipelined microprocessor control},
  author={Burch, Jerry R and Dill, David L},
  booktitle={Computer Aided Verification: 6th International Conference, CAV'94 Stanford, California, USA, June 21--23, 1994 Proceedings 6},
  pages={68--80},
  year={1994},
  organization={Springer},
  doi={10.1007/3-540-58179-0_44}
}

@inproceedings{2012Regression,
  title={Regression verification-A practical way to verify programs},
  author={Strichman, Ofer and Godlin, Benny},
  booktitle={Working Conference on Verified Software: Theories, Tools, and Experiments},
  pages={496--501},
  year={2005},
  organization={Springer},
  doi={10.1007/978-3-540-69149-5_54}
}

@article{congruence_closure1,
  title={Variations on the common subexpression problem},
  author={Downey, Peter J and Sethi, Ravi and Tarjan, Robert Endre},
  journal={Journal of the ACM (JACM)},
  volume={27},
  number={4},
  pages={758--771},
  year={1980},
  publisher={ACM New York, NY, USA},
  doi={10.1145/322217.322228}
}

@article{congruence_closure2,
  title={Fast decision procedures based on congruence closure},
  author={Nelson, Greg and Oppen, Derek C},
  journal={Journal of the ACM (JACM)},
  volume={27},
  number={2},
  pages={356--364},
  year={1980},
  publisher={ACM New York, NY, USA},
  doi={10.1145/322186.322198}
}

@article{congruence_closure3,
  title={Deciding combinations of theories},
  author={Shostak, Robert E},
  journal={Journal of the ACM (JACM)},
  volume={31},
  number={1},
  pages={1--12},
  year={1984},
  publisher={ACM New York, NY, USA},
  doi={10.1145/2422.322411}
}

@article{2007Fast,
  title={Fast congruence closure and extensions},
  author={Nieuwenhuis, Robert and Oliveras, Albert},
  journal={Information and Computation},
  volume={205},
  number={4},
  pages={557--580},
  year={2007},
  publisher={Elsevier},
  doi={10.1016/j.ic.2006.08.009}
}

@inproceedings{2005Proof,
  title={Proof-producing congruence closure},
  author={Nieuwenhuis, Robert and Oliveras, Albert},
  booktitle={International Conference on Rewriting Techniques and Applications},
  pages={453--468},
  year={2005},
  organization={Springer},
  doi={10.1007/978-3-540-32033-3_33}
}

@inproceedings{PositiveEquality,
  title={Exploiting positive equality in a logic of equality with uninterpreted functions},
  author={Bryant, Randal E and German, Steven and Velev, Miroslav N},
  booktitle={Computer Aided Verification: 11th International Conference, CAV’99 Trento, Italy, July 6--10, 1999 Proceedings 11},
  pages={470--482},
  year={1999},
  organization={Springer},
  doi={10.1007/3-540-48683-6_40}
}

@inproceedings{CLU,
  title={Modeling and verifying systems using a logic of counter arithmetic with lambda expressions and uninterpreted functions},
  author={Bryant, Randal E and Lahiri, Shuvendu K and Seshia, Sanjit A},
  booktitle={Computer Aided Verification: 14th International Conference, CAV 2002 Copenhagen, Denmark, July 27--31, 2002 Proceedings 14},
  pages={78--92},
  year={2002},
  organization={Springer},
  doi={10.1007/3-540-45657-0_7}
}

@inproceedings{muller2005checking,
  title={Checking Herbrand equalities and beyond},
  author={M{\"u}ller-Olm, Markus and R{\"u}thing, Oliver and Seidl, Helmut},
  booktitle={Verification, Model Checking, and Abstract Interpretation: 6th International Conference, VMCAI 2005, Paris, France, January 17-19, 2005. Proceedings 6},
  pages={79--96},
  year={2005},
  organization={Springer},
  doi={10.1007/978-3-540-30579-8_6}
}

@inproceedings{govind2021logical,
  title={Logical characterization of coherent uninterpreted programs},
  author={Govind, VK Hari and Shoham, Sharon and Gurfinkel, Arie},
  booktitle={2021 Formal Methods in Computer Aided Design (FMCAD)},
  pages={77--85},
  year={2021},
  organization={IEEE},
  doi={10.34727/2021/isbn.978-3-85448-046-4_16}
}

@inproceedings{lee2014unbounded,
  title={Unbounded scalable verification based on approximate property-directed reachability and datapath abstraction},
  author={Lee, Suho and Sakallah, Karem A},
  booktitle={Computer Aided Verification: 26th International Conference, CAV 2014, Held as Part of the Vienna Summer of Logic, VSL 2014, Vienna, Austria, July 18-22, 2014. Proceedings 26},
  pages={849--865},
  year={2014},
  organization={Springer},
  doi={10.1007/978-3-319-08867-9_56}
}

@article{fan2024leveraging,
  title={Leveraging Datapath Propagation in IC3 for Hardware Model Checking},
  author={Fan, Hongyu and He, Fei},
  journal={IEEE Transactions on Computer-Aided Design of Integrated Circuits and Systems},
  year={2024},
  publisher={IEEE},
  doi={10.1109/tcad.2024.3360022}
}

@article{mathur2019deciding,
  title={Deciding memory safety for single-pass heap-manipulating programs},
  author={Mathur, Umang and Murali, Adithya and Krogmeier, Paul and Madhusudan, P and Viswanathan, Mahesh},
  journal={Proceedings of the ACM on Programming Languages},
  volume={4},
  number={POPL},
  pages={1--29},
  year={2019},
  publisher={ACM New York, NY, USA},
  doi={10.1145/3371103}
}

@article{lipshitz1978diophantine,
  title={The Diophantine problem for addition and divisibility},
  author={Lipshitz, Leonard},
  journal={Transactions of the American Mathematical Society},
  pages={271--283},
  year={1978},
  publisher={JSTOR},
  doi={10.1090/s0002-9947-1978-0469886-1}
}

@article{bel1976decidability,
  title={Decidability of the universal theory of natural numbers with addition and divisibility},
  author={Bel'tyukov, Anatoly Petrovich},
  journal={Journal of Soviet Mathematics},
  volume={14},
  pages={1390--1404},
  year={1980},
  publisher={Springer},
  doi={10.1007/bf01693974}
}

@inproceedings{lechner2015complexity,
  title={On the complexity of linear arithmetic with divisibility},
  author={Lechner, Antonia and Ouaknine, Jo{\"e}l and Worrell, James},
  booktitle={2015 30th Annual ACM/IEEE Symposium on Logic in Computer Science},
  pages={667--676},
  year={2015},
  organization={IEEE},
  doi={10.1109/lics.2015.67}
}

@article{mathur2019decidable,
  title={Decidable verification of uninterpreted programs},
  author={Mathur, Umang and Madhusudan, P and Viswanathan, Mahesh},
  journal={Proceedings of the ACM on Programming Languages},
  volume={3},
  number={POPL},
  pages={1--29},
  year={2019},
  publisher={ACM New York, NY, USA},
  doi={10.1145/3290359}
}

@article{sridharan2005demand,
  title={Demand-driven points-to analysis for Java},
  author={Sridharan, Manu and Gopan, Denis and Shan, Lexin and Bod{\'\i}k, Rastislav},
  journal={ACM SIGPLAN Notices},
  volume={40},
  number={10},
  pages={59--76},
  year={2005},
  publisher={ACM New York, NY, USA},
  doi={10.1145/1094811.1094817}
}

@article{spath2019context,
  title={Context-, flow-, and field-sensitive data-flow analysis using synchronized pushdown systems},
  author={Sp{\"a}th, Johannes and Ali, Karim and Bodden, Eric},
  journal={Proceedings of the ACM on Programming Languages},
  volume={3},
  number={POPL},
  pages={1--29},
  year={2019},
  publisher={ACM New York, NY, USA},
  doi={10.1145/3290361}
}

@article{tarjan1981fast,
  title={Fast algorithms for solving path problems},
  author={Tarjan, Robert Endre},
  journal={Journal of the ACM (JACM)},
  volume={28},
  number={3},
  pages={594--614},
  year={1981},
  publisher={ACM New York, NY, USA},
  doi={10.1145/322261.322273}
}

@article{reps2000undecidability,
  title={Undecidability of context-sensitive data-dependence analysis},
  author={Reps, Thomas},
  journal={ACM Transactions on Programming Languages and Systems (TOPLAS)},
  volume={22},
  number={1},
  pages={162--186},
  year={2000},
  publisher={ACM New York, NY, USA},
  doi={10.1145/345099.345137}
}

@article{li2023single,
  title={Single-Source-Single-Target Interleaved-Dyck Reachability via Integer Linear Programming},
  author={Li, Yuanbo and Zhang, Qirun and Reps, Thomas},
  journal={Proceedings of the ACM on Programming Languages},
  volume={7},
  number={POPL},
  pages={1003--1026},
  year={2023},
  publisher={ACM New York, NY, USA},
  doi={10.1145/3571228}
}

@inproceedings{zhang2017context,
  title={Context-sensitive data-dependence analysis via linear conjunctive language reachability},
  author={Zhang, Qirun and Su, Zhendong},
  booktitle={Proceedings of the 44th ACM SIGPLAN Symposium on Principles of Programming Languages},
  pages={344--358},
  year={2017},
  doi={10.1145/3009837.3009848}
}

@inproceedings{ding2023mutual,
  title={Mutual Refinements of Context-Free Language Reachability},
  author={Ding, Shuo and Zhang, Qirun},
  booktitle={International Static Analysis Symposium},
  pages={231--258},
  year={2023},
  organization={Springer},
  doi={10.1007/978-3-031-44245-2_12}
}

@inproceedings{conrado2024better,
  title={A Better Approximation for Interleaved Dyck Reachability},
  author={Conrado, Giovanna Kobus and Pavlogiannis, Andreas},
  booktitle={Proceedings of the 13th ACM SIGPLAN International Workshop on the State Of the Art in Program Analysis},
  pages={18--25},
  year={2024},
  doi={10.1145/3652588.3663318}
}

@book{rosen2011elementary,
  title={Elementary number theory},
  author={Rosen, Kenneth H},
  year={2011},
  publisher={Pearson Education London}
}

@book{rosen1999discrete,
  title={Discrete mathematics and its applications},
  author={Rosen, Kenneth H and Krithivasan, Kamala},
  volume={6},
  year={1999},
  publisher={McGraw-hill New York}
}

@article{conrado2025program,
  title={Program Analysis via Multiple Context Free Language Reachability},
  author={Conrado, Giovanna Kobus and Kjelstr{\o}m, Adam Husted and van de Pol, Jaco and Pavlogiannis, Andreas},
  journal={Proceedings of the ACM on Programming Languages},
  volume={9},
  number={POPL},
  pages={509--538},
  year={2025},
  publisher={ACM New York, NY, USA},
  doi={10.1145/3704854
  }
}

@inproceedings{hong2021trace,
  title={Trace abstraction-based verification for uninterpreted programs},
  author={Hong, Weijiang and Chen, Zhenbang and Du, Yide and Wang, Ji},
  booktitle={Formal Methods: 24th International Symposium, FM 2021, Virtual Event, November 20--26, 2021, Proceedings 24},
  pages={545--562},
  year={2021},
  organization={Springer},
  doi={10.1007/978-3-030-90870-6_29}
}

@inproceedings{zhang2013fast,
  title={Fast algorithms for Dyck-CFL-reachability with applications to alias analysis},
  author={Zhang, Qirun and Lyu, Michael R and Yuan, Hao and Su, Zhendong},
  booktitle={Proceedings of the 34th ACM SIGPLAN conference on Programming language design and implementation},
  pages={435--446},
  year={2013},
  doi={10.1145/2491956.2462159}
}

@article{shostak1978algorithm,
  title={An algorithm for reasoning about equality},
  author={Shostak, Robert E},
  journal={Communications of the ACM},
  volume={21},
  number={7},
  pages={583--585},
  year={1978},
  publisher={ACM New York, NY, USA},
  doi={10.1145/359545.359570}
}

@book{kroening2008decision,
  title={Decision procedures},
  author={Kroening, Daniel and Strichman, Ofer},
  volume={5},
  year={2008},
  publisher={Springer},
  doi={10.1007/978-3-540-74105-3}
}

@article{mathur2025decision,
  title={The Decision Problem for Regular First Order Theories},
  author={Mathur, Umang and Mestel, David and Viswanathan, Mahesh},
  journal={Proceedings of the ACM on Programming Languages},
  volume={9},
  number={POPL},
  pages={986--1012},
  year={2025},
  publisher={ACM New York, NY, USA},
  doi={10.1145/3704870}
}

@book{mendelson2009introduction,
  title={Introduction to mathematical logic},
  author={Mendelson, Elliott},
  year={2009},
  publisher={Chapman and Hall/CRC},
  doi={10.1201/9781584888772}
}

@article{bachmair2003abstract,
  title={Abstract congruence closure},
  author={Bachmair, Leo and Tiwari, Ashish and Vigneron, Laurent},
  journal={Journal of Automated Reasoning},
  volume={31},
  number={2},
  pages={129--168},
  year={2003},
  publisher={Springer},
  doi={10.1023/b:jars.0000009518.26415.49}
}

@inproceedings{bachmair2000abstract,
  title={Abstract congruence closure and specializations},
  author={Bachmair, Leo and Tiwari, Ashish},
  booktitle={International Conference on Automated Deduction},
  pages={64--78},
  year={2000},
  organization={Springer},
  doi={10.1007/10721959_5}
}

@article{willsey2021egg,
  title={Egg: Fast and extensible equality saturation},
  author={Willsey, Max and Nandi, Chandrakana and Wang, Yisu Remy and Flatt, Oliver and Tatlock, Zachary and Panchekha, Pavel},
  journal={Proceedings of the ACM on Programming Languages},
  volume={5},
  number={POPL},
  pages={1--29},
  year={2021},
  publisher={ACM New York, NY, USA},
  doi={10.1145/3434304}
}

@inproceedings{de2007efficient,
  title={Efficient E-matching for SMT solvers},
  author={De Moura, Leonardo and Bj{\o}rner, Nikolaj},
  booktitle={International Conference on Automated Deduction},
  pages={183--198},
  year={2007},
  organization={Springer},
  doi={10.1007/978-3-540-73595-3_13}
}

@inproceedings{tate2009equality,
  title={Equality saturation: a new approach to optimization},
  author={Tate, Ross and Stepp, Michael and Tatlock, Zachary and Lerner, Sorin},
  booktitle={Proceedings of the 36th annual ACM SIGPLAN-SIGACT symposium on Principles of programming languages},
  pages={264--276},
  year={2009},
  doi={10.1145/1480881.1480915}
}

@article{kapur2019conditional,
  title={Conditional congruence closure over uninterpreted and interpreted symbols},
  author={Kapur Deepak},
  journal={Journal of Systems Science and Complexity},
  volume={32},
  number={1},
  pages={317--355},
  year={2019},
  publisher={Springer},
  doi={10.1007/s11424-019-8377-8}
}

@article{zakhour2025dis,
  title={Dis/Equality Graphs},
  author={Zakhour, George and Weisenburger, Pascal and Cesario, Jahrim Gabriele and Salvaneschi, Guido},
  journal={Proceedings of the ACM on Programming Languages},
  volume={9},
  number={POPL},
  pages={2282--2305},
  year={2025},
  publisher={ACM New York, NY, USA},
  doi={10.1145/3704913}
}

@inproceedings{du2026euf,
  title={EUF-based Solving Dyck-Reachability with Applications to Static Analysis},
  author={Du, Yide and Chen, Zhenbang and Liu, Kunlin and Zhang, Guofeng and Wang, Xudong and Ma, Ke and Dong, Wei and Wang, Ji},
  booktitle={International Symposium on Formal Methods},
  pages={296--316},
  year={2026},
  organization={Springer},
  doi={10.1007/978-3-032-26220-2_15}
}

@inproceedings{gulwani2007assertion,
  title={Assertion checking unified},
  author={Gulwani, Sumit and Tiwari, Ashish},
  booktitle={International Workshop on Verification, Model Checking, and Abstract Interpretation},
  pages={363--377},
  year={2007},
  organization={Springer},
  doi={10.1007/978-3-540-69738-1_26}
}

@inproceedings{godoy2009invariant,
  title={Invariant checking for programs with procedure calls},
  author={Godoy, Guillem and Tiwari, Ashish},
  booktitle={International Static Analysis Symposium},
  pages={326--342},
  year={2009},
  organization={Springer},
  doi={10.1007/978-3-642-03237-0_22}
}
